\documentclass[sigconf,balance=false]{acmart}

\AtBeginDocument{%
  }

\setcopyright{acmlicensed}
\copyrightyear{2018}
\acmYear{2018}
\acmDOI{XXXXXXX.XXXXXXX}
\acmConference[Conference acronym 'XX]{Make sure to enter the correct
  conference title from your rights confirmation email}{June 03--05,
  2018}{Woodstock, NY}
\acmISBN{978-1-4503-XXXX-X/2018/06}

\usepackage{float}
\usepackage{graphicx}
\usepackage{enumerate}
\usepackage{enumitem}
\usepackage{subfigure}
\usepackage{multirow}
\usepackage{color}
\usepackage{mathtools}
\usepackage{epstopdf}
\usepackage{balance}
\usepackage{tabularx}

\usepackage[linesnumbered,vlined,ruled,noend]{algorithm2e}

\newcommand{\cut}[1]{}

\newcommand{\todo}[1]{\textcolor{red}{TODO: #1}}

\newcommand{\xy}[1]{\textcolor{cyan}{#1}}
\newcommand{\rv}[1]{\textcolor{black}{#1}}

\newcommand{\rone}[1]{\textcolor{black}{#1}}
\newcommand{\rtwo}[1]{\textcolor{black}{#1}}
\newcommand{\rthree}[1]{\textcolor{black}{#1}}

\newcommand{\shepherd}[1]{\textcolor{black}{#1}}

\newtheorem{theorem}{Theorem}
\newtheorem{definition}{Definition}

\newtheorem{example}{Example}
\newtheorem{lemma}{Lemma}

\newtheorem{assumption}{Assumption}

\newtheorem{procedure}{Procedure}

\DeclareMathOperator{\cost}{cost}

\DeclareMathOperator{\argmin}{argmin}

\DeclareMathOperator{\default}{def}
\DeclareMathOperator{\progressive}{pg}
\DeclareMathOperator{\roundrobin}{rr}

\begin{document}


\title{Query Performance Tuning with Optimal Exploration of Optimizer Cost Model Parameter Space}

\author{Wentao Wu, Xiaoying Wang, Vivek Narasayya, Surajit Chaudhuri}
\affiliation{
  \institution{Microsoft Research, Redmond, USA\\ 
\{wentao.wu, wangxiaoying, viveknar, surajitc\}@microsoft.com
}
  \city{}
  \country{}
}






\begin{abstract}
Modern query optimizers use analytical cost models to estimate the cost of a given query plan.
Such cost models are typically functions of a set of ``cost units'' that specify \emph{unit} CPU cost when processing a row or \emph{unit} IO cost when accessing a disk page. 
These cost units are traditionally viewed as \emph{platform-dependent constants}, that is, they require a one-shot calibration when a database is deployed on a hardware/software platform, but are fixed afterward regardless of the query being optimized for.
\rthree{Some very recent work has taken a different perspective by viewing these cost units as tunable parameters that we call ``cost model parameters (CMPs)'' in this paper.}
\shepherd{However, so far there is no approach that offers any optimality guarantee for the tuning results.}
We present a new approach to systematically explore the query plan space spanned by the CMPs and find the best plan in terms of execution time.
Compared to alternative exploration approaches that use random search (RS) or Bayesian optimization (BO), 
our new approach is \emph{deterministic} and, therefore, avoids the undesirable instability that is inevitable when applying RS or BO.
Moreover, it is guaranteed to find all candidate plans in the query plan space without suffering from the overhead of an exhaustive enumeration.
We also present a set of optimization \rthree{techniques} to reduce the overall evaluation time spent on executing the candidate plans found, a factor that is often overlooked by previous work but is critical from a practical point of view.
Experimental evaluation on top of PostgreSQL \rthree{and Microsoft SQL Server} 
demonstrates the efficacy of tuning the CMPs, which can find query plans that are orders of magnitude faster in execution time than the ones found by RS or BO.
\end{abstract}

\settopmatter{printfolios=true}
\maketitle



\section{Introduction}
\label{sec:intro}

Modern query optimizers use analytical cost models to estimate the cost of a given query plan~\cite{WuCZTHN13,WuW24,YangWZDLC23}.
Such cost models are typically functions of a set of ``cost units'' that specify \emph{unit} CPU cost when processing a row or \emph{unit} IO cost when accessing a disk page.
As a concrete example, Table~\ref{tab:pgsql-cost-units} presents the cost units that are currently used by PostgreSQL's query planner/optimizer~\cite{pgsql-cost-units}.
This paradigm is also followed by other mainstream query processing engines, such as Microsoft SQL Server~\cite{sql-server-cost-units}, IBM DB2~\cite{db2-cost-units}, and Apache Spark~\cite{spark-cost-units}.
Cost units are traditionally viewed as \emph{platform-dependent constants}, that is, they require a one-shot calibration after the database is created on top of a hardware/software platform, but are fixed afterward regardless of the query being optimized for~\cite{DuKS92,SororMASKK08,WuCHN13,WuWHN14,WuNS16,YangWZDLC23}.
\rthree{Recent work~\cite{WuW24} took a different view by treating cost units as \emph{query-dependent tunable variables} and proposed \emph{tuning} cost units to obtain better query plans in terms of execution time, by noticing that query optimizer can return different query plans based on different settings/values of the cost units.
We term this tuning problem as ``cost model parameter (CMP) tuning'' in the rest of this paper.}
Specifically, given a query and a set of CMPs (as well as a range of valid values for each CMP), the goal of CMP tuning is to find the settings/values of the CMPs that can result in the best query plan with minimum execution time.
A formal problem definition can be found in Section~\ref{sec:problem-formulation:definition}.

\begin{table}
\caption{Examples of cost units, i.e., cost model parameters (CMPs), used by PostgreSQL's query planner/optimizer~\cite{pgsql-cost-units}. 
\rtwo{Each of these CMPs can be set with a SQL statement (e.g., \texttt{SET random\_page\_cost TO 1.5}) per individual query session without disruptive operations such as server restart.}}
\vspace{-1em}
\label{tab:pgsql-cost-units}
\small
\centering
\begin{tabular}
{|l|r|r|}
\hline
\textbf{Cost Unit} & \textbf{Notation} & \textbf{Default Value} \\
\hline
\emph{seq\_page\_cost} & $c_s$ & 1.0\\
\emph{random\_page\_cost} & $c_r$ & 4.0\\
\emph{cpu\_tuple\_cost} & $c_t$ & 0.01\\
\emph{cpu\_index\_tuple\_cost} & $c_i$ & 0.005\\
\emph{cpu\_operator\_cost} & $c_o$ & 0.0025\\
\hline
\end{tabular}
\vspace{0.5em}
\end{table}

In previous work~\cite{WuW24}, the authors viewed CMP tuning as a hyperparameter optimization (HPO) problem and proposed using random search (RS) and Bayesian optimization (BO) as solutions, which are standard in the literature on HPO and AutoML.
They showed that better query plans can indeed be found.
This benefit of CMP tuning has also been demonstrated in Section~\ref{sec:eval:exec-time} of this paper.
One common shortcoming shared by these standard HPO approaches is their intrinsic \emph{indeterminism}, since they all rely on probabilistic or sampling mechanisms internally.
As a result, the tuning results are not reproducible unless the same seed is always used for the random number generator (RNG) underlying the probabilistic mechanisms.
Another shortcoming, which is a subsumed consequence of indeterminism, is that there is no guarantee of the optimality on the best CMP values found by these HPO approaches.

In this paper, we overcome these shortcomings with a \emph{deterministic} and \emph{optimal} approach to CMP tuning.
Not only does our approach improve the robustness of CMP tuning, it is also guaranteed to find all candidate plans with complete exploration of the plan space considered by the query optimizer when varying the CMP values, without suffering from the prohibitive overhead of an exhaustive enumeration.
Empirically, as shown in our experimental evaluation (Section~\ref{sec:eval:exec-time}), it also finds much faster query plans than randomized approaches such as RS and BO.


Specifically, the basis of our approach is the observation that the analytic cost models used by today's query optimizers are typically \emph{linear} functions of the CMPs~\cite{WuCZTHN13,WuCHN13,WuW24}.
\begin{example}[Cost Model of PostgreSQL]
\label{example:pg-cost-model}
We use PostgreSQL as an example to illustrate this important observation.
As shown in Table~\ref{tab:pgsql-cost-units}, PostgreSQL uses the following cost units or CMPs.
\begin{itemize}
    \item $c_s$: \emph{seq\_page\_cost}, which represents the unit IO cost of \emph{sequentially} accessing a disk page;
    \item $c_r$: \emph{random\_page\_cost}, which represents the unit IO cost of \emph{randomly} accessing a disk page;
    \item $c_t$: \emph{cpu\_tuple\_cost}, which represents the unit CPU cost of processing a tuple;
    \item $c_i$: \emph{cpu\_index\_tuple\_cost}, which represents the unit CPU cost of processing a tuple \emph{via index access};
    \item $c_o$: \emph{cpu\_operator\_cost}, which represents the unit CPU cost of performing an operation such as \emph{hash} or \emph{aggregation}.
\end{itemize}
The estimated cost of a relational operator $O$ in a query plan $P$ considered by PostgreSQL's query optimizer is then represented as a \emph{linear combination} of $c_s$, $c_r$, $c_t$, $c_i$, and $c_o$~\cite{WuCZTHN13}:
$$\cost(O)=\vec{n}_O\cdot\vec{c}=n_s\cdot c_s+n_r\cdot c_r+n_t\cdot c_t + n_i\cdot c_i+n_o\cdot c_o.$$
The vector $\vec{n}_O=(n_s,n_r,n_t,n_i,n_o)$ represents the corresponding estimated amount of IO or CPU work for each CMP.
For example, $n_s$ represents the number of disk pages that are sequentially accessed by $O$, $n_r$ represents the number of disk pages that are randomly accessed by $O$, and so on.
The estimated cost of the plan $P$ is simply the sum of the estimated costs of the individual relational operators:
$$\cost(P)=\sum\nolimits_{O\in P}\cost(O)=\sum\nolimits_{O\in P}\vec{n}_O\cdot\vec{c}=\vec{n}_P\cdot\vec{c},$$
where $\vec{n}_P$ represents the \emph{element-wise} sum of $\vec{n}_O$ for all the operators $O$ contained by the plan $P$.
Therefore, the estimated cost of the plan $P$ is again a linear function of the CMPs.
\end{example}

\rtwo{Note that the linearity of the cost functions with respect to the CMPs holds even for operators whose cost functions that are not linear functions of the input cardinality. For example, the CPU cost function of the \emph{sort} operator is $2\cdot N\cdot\log_2N\cdot c_o$, where $N$ represents the number of input tuples.
We refer the readers to Table 1 of Yang et al.~\cite{YangWZDLC23} for more concrete examples of the cost functions used by PostgreSQL's operators.}
This important property allows us to develop more advanced exploration strategies to navigate the plan space in a more systematic manner compared to RS or BO, as was done in previous work~\cite{WuW24}.
Specifically, we propose an exploration algorithm based on \emph{binary space decomposition} (BSP)~\cite{BSP-Naylor04}.
Unlike RS and BO, this exploration algorithm is \emph{deterministic} and is guaranteed to return \emph{all} possible plans that one could obtain from the query optimizer by navigating the space spanned by the ranges of the CMPs (see Section~\ref{sec:exploration-plan-space}).
Our evaluation in Section~\ref{sec:eval:exploration} shows that this new exploration strategy can discover many more candidate plans than randomized search strategies, which is essential for identifying better query plans with shorter execution time.
For example, the best plans found by our approach for the TPC-H queries running on PostgreSQL improve by 32\% in execution time compared to those identified by randomized approaches (Section~\ref{sec:eval:exec-time}).
We note that the linearity of the plan cost functions has also been exploited in previous work on the calibration of cost units~\cite{WuCZTHN13}.
However, the \rthree{techniques} developed there require changes to the query optimizer to output the $\vec{n}$ in Example~\ref{example:pg-cost-model}, which are not required by the BSP-based plan-space exploration strategy proposed in this paper.
Moreover, calibration of cost units assumes that the $\vec{n}$ outputted by the query optimizer is accurate, which is hardly true in practice given that the $\vec{n}$ depends on cardinality estimation (CE), a well-known difficult problem despite decades of research~\cite{WangQWWZ21}.
\rone{CMP tuning aims to tweak the cost model to obtain query plans \emph{without} fixing CE errors, as this is also the ultimate purpose of improving CE.
Although we expect that injecting true cardinality information into the query optimizer would also produce better query plans, it is not practical due to the prohibitive overhead of obtaining true cardinality for \emph{all} expressions explored during query optimization~\cite{LeeDNC23}.}
This is in a similar spirit of robust query optimization and processing~\cite{yin2015robust,Haritsa24} that aims to mitigate the uncertainty in plan quality due to inaccurate CE by retaining multiple access paths or plans at query compilation time and deferring the selection of access path or plan to query execution, although 
it requires probing the plan search space by varying CE, and thus again modifications to the query optimizer.
From this point of view, CMP tuning offers an alternative and non-intrusive way of obtaining more candidate plans from the query optimizer.



Following \emph{complete} navigation of the plan space, we then focus on reducing the total plan evaluation time required to find the best query plan (in terms of execution time) among the candidates, another important aspect that was somewhat overlooked by existing work until very recently~\cite{YiTIM25,GiannakourisT25}.
The specific challenge we face is that we may have hundreds of candidate plans after exploration.
As a result, simply letting all candidate plans run to completion may not be feasible.
We therefore further propose two classes of optimization \rthree{techniques}, one based on timeout mechanisms and the other based on similarity between candidate plans (see Section~\ref{sec:optimization-tuning-time}).
The timeout mechanisms have been used in a different context to find the best index configurations among the candidates~\cite{GiannakourisT25}, and we adapt them to the problem of tuning CMPs in this paper.
The similarity-based mechanisms, in particular the one based on identifying ``bottleneck operators'' in query plans, are our new proposals and we show that they can significantly reduce the plan evaluation time compared to timeout-only baselines (Section~\ref{sec:eval:tuning-time}).

\begin{figure}[t]
\centering
    \includegraphics[clip, trim=3.5cm 4cm 3.5cm 3.5cm, width=\columnwidth]{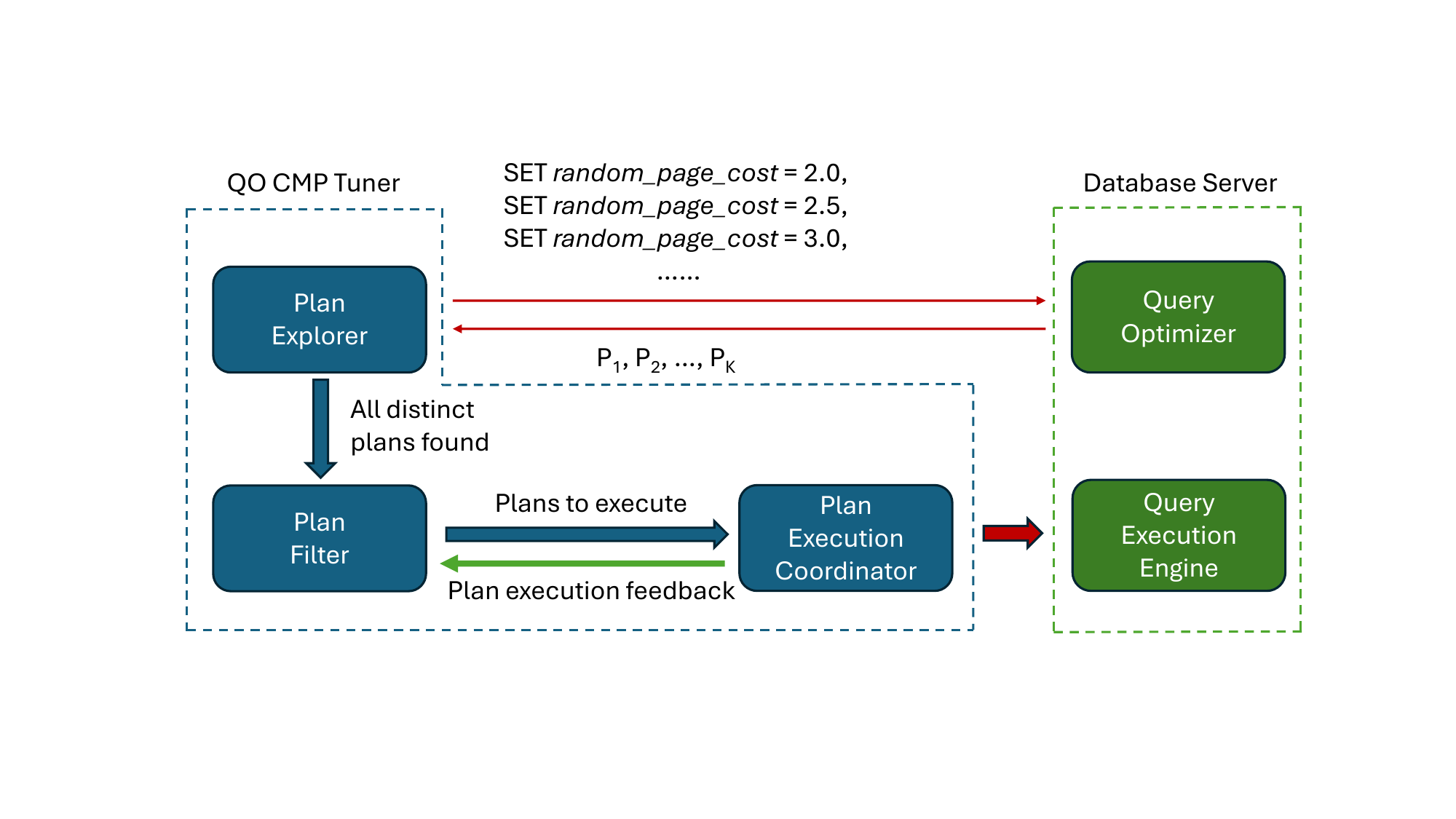}
\vspace{-2em}
\caption{The architecture of tuning query optimizer (QO) cost model parameters (CMP).}
\label{fig:architecture}
\end{figure}

Figure~\ref{fig:architecture} presents an overview of the architecture for CMP tuning, by putting together all the components that we have outlined above.
Specifically, the CMP tuner contains three modules: (1) \emph{plan explorer}, (2) \emph{plan filter}, and (3) \emph{plan execution coordinator}.
The \emph{plan explorer} takes the query and the ranges of the CMPs as input and searches for all possible plans within the space, using the \rthree{techniques} described in Section~\ref{sec:exploration-plan-space}.
It makes a query optimizer call for each explored setting of the CMPs to obtain the corresponding plan.
Figure~\ref{fig:architecture} highlights this interaction between the CMP tuner and the query optimizer with a simplified example of tuning one single CMP ``\emph{random\_page\_cost}'' of PostgreSQL.
The \emph{plan filter} then prunes candidates that do not require execution, based on the plan similarity mechanisms described in Section~\ref{sec:optimization:similarity}.
Finally, the plans that are necessary for execution are submitted to the \emph{plan execution coordinator}, which further leverages the timeout mechanisms described in Section~\ref{sec:optimization:timeout} to reduce the plan evaluation time.


In summary, this paper makes the following contributions:
\begin{itemize}
    \item We propose an optimal approach to the problem of CMP tuning, which systematically navigates the plan space and is guaranteed to find all possible candidate plans.
    \item We propose two classes of optimization mechanisms, one based on timeout and the other based on plan similarity, to reduce the overall candidate plan evaluation time.
    \item We perform extensive experiments using four publicly available benchmark database workloads and demonstrate the efficacy of the proposed \rthree{techniques}.
\end{itemize}

\vspace{-0.5em}
\paragraph*{Limitation and Future Work}
The current focus of CMP tuning is on improving individual queries independently.
Such fine-grained query performance tuning has found its way in the development of automated software~\cite{BelknapDDY09,HerodotouB10} that significantly reduces the manual tuning effort of a database administrator (DBA).
We can further extend the CMP tuning \rthree{techniques} developed in this work to the workload level.
\rone{That is, we can reuse information across queries rather than having a set of common CMP values for all queries in the workload.
For example, the similarity-based plan filtering mechanism can be extended to filter candidate plans from different queries that share similar bottleneck operators.}
Moreover, instead of always tuning the CMPs from scratch for each new query, one may want to directly \emph{predict} a better setting of the CMPs.
This \emph{online} setup is intriguing when there are time and resource constraints that prevent a full-fledged CMP tuning process from finishing.
We leave these interesting but unexplored areas for future work.

\vspace{-0.5em}
\paragraph*{Paper Organization}
The remainder of the paper is organized as follows.
Section~\ref{sec:problem-formulation} formulates the problem of CMP tuning.
Section~\ref{sec:exploration-plan-space} presents optimal exploration algorithms to obtain all possible (candidate) plans by tuning CMPs.
Section~\ref{sec:optimization-tuning-time} further presents optimization \rthree{techniques} to reduce the evaluation time of candidate plans.
We present experimental setups and results in Section~\ref{sec:eval}, discuss related work in Section~\ref{sec:related-work}, and conclude the paper in Section~\ref{sec:conclusion}.


\vspace{-0.5em}
\section{Problem Formulation}
\label{sec:problem-formulation}

\rthree{We start with a view of the cost units by treating them as query-dependent variables rather than platform-dependent constants.
We then present a formal problem definition of ``cost model parameter (CMP) tuning'' under this view of the cost units.}

\vspace{-0.5em}
\subsection{Cost Model Parameters (CMPs)}

We use $\vec{c}$ to represent the vector of all cost units.
Given a query plan $P$, its estimated cost can be viewed as some function $f_P(\vec{c})$ of $\vec{c}$.
Based on previous work~\cite{WuCZTHN13,WuCHN13}, $f_P$ is typically a \emph{linear} function of $\vec{c}$ and can therefore be written as
$$f_P(\vec{c})=\vec{n}_P\cdot\vec{c},$$
where $\vec{n}_P$ represents a vector of plan-dependent constants.
As a result, $\vec{n}_P$ serves as the \emph{coefficients} and $\vec{c}$ serves as the \emph{variables} in the linear function (ref. Example~\ref{example:pg-cost-model}).

This view is very different from the traditional view of the cost function $f_P$, where the roles of $\vec{n}_P$ and $\vec{c}$ are \emph{reversed}.
In the traditional view, $\vec{c}$ represents the (constant) coefficients, while $\vec{n}_P$ are variables that depend on the estimated cardinality.
This traditional view is in particular useful when considering the inaccuracy or uncertainty of cardinality estimation, which has led to extensive research on \emph{robust query optimization and processing}~\cite{yin2015robust,Haritsa24}.

\rthree{We call the cost units under this alternative view of the cost function the ``cost model parameters (CMPs)'' to emphasize that they are no longer platform-dependent constants but tunable parameters for \emph{each query}.} We model a CMP $c$ as a \emph{continuous variable} with \emph{range} $R_c=[c_l, c_h]$ and the default value $c_{\default}$.

\begin{figure}
\centering
    \includegraphics[clip, trim=9.5cm 6.5cm 9.5cm 6cm, width=0.7\columnwidth]{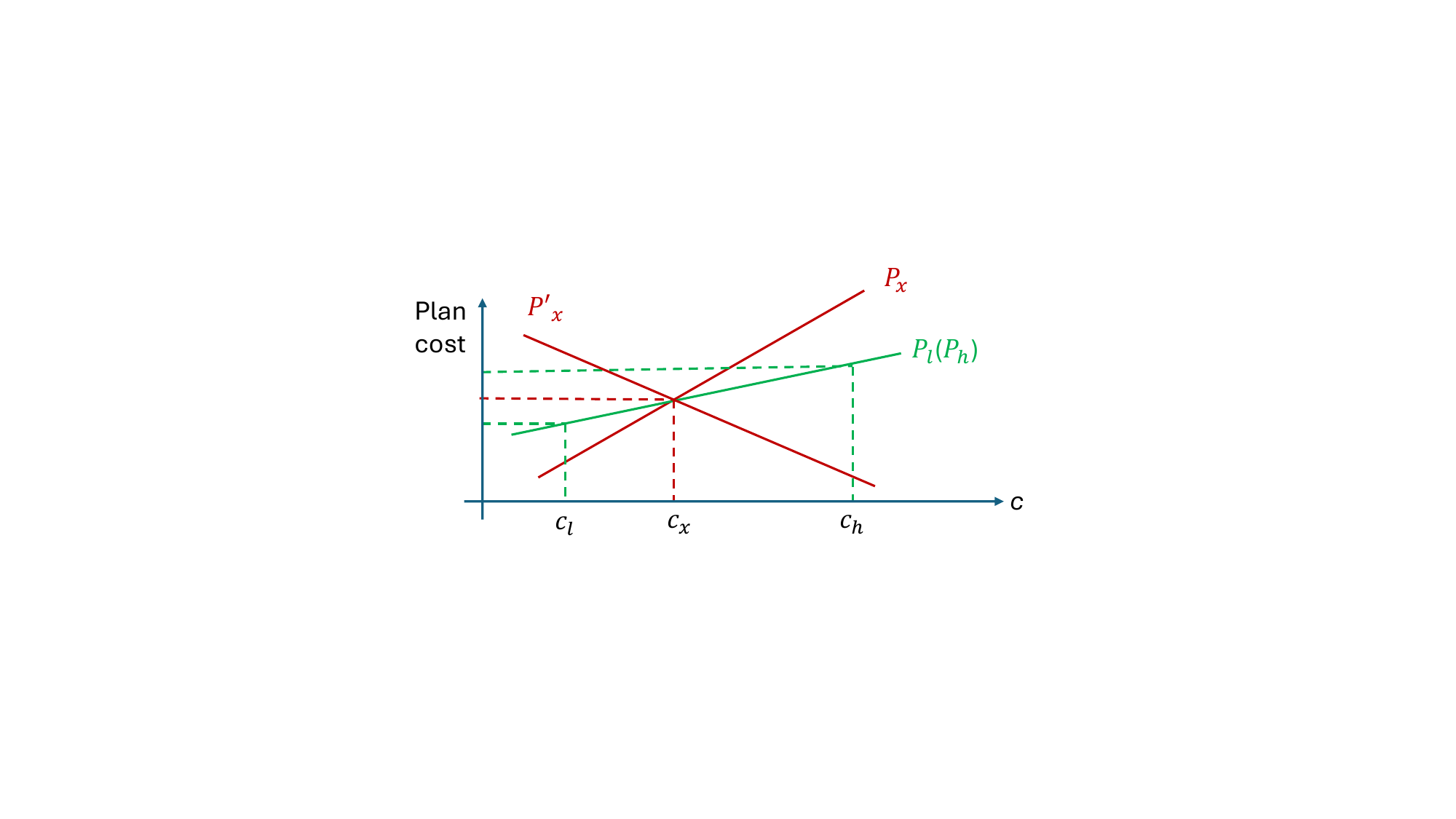}
\vspace{-1em}
\caption{Illustration of the case with only one CMP.}
\label{fig:plan-cost-one-dim}
\end{figure}

\vspace{-0.5em}
\subsection{Problem Definition}
\label{sec:problem-formulation:definition}

Given a query $q$, setting the CMPs $\vec{c}$ to different values can result in different query plans of $q$.
Formally, we define the \emph{plan space} by exploring the possible values of $\vec{c}$ as follows.

\begin{definition}[Plan Space]
\label{def:plan-space}
Let $\vec{c}=(c_1, ..., c_K)$ be the vector of CMPs. Let $R_i$ be the range of the CMP $c_i$.
We define \emph{plan (search) space} as the $K$-dimensional search space $\mathcal{R}=\prod_{i=1}^K R_i$ spanned by the ranges of $\vec{c}$.
We call $K$ the \emph{dimension} of $\mathcal{R}$.
\end{definition}

\begin{definition}[Candidate Plans]
\label{def:candidate-plans}
Consider a given SQL query $q$. For any $\vec{v}\in\mathcal{R}=\prod_{i=1}^K R_i$, let $P(q, \vec{v})$ be the query plan returned by the query optimizer for $q$ by setting $\vec{c}=\vec{v}$.
The set of \emph{candidate plans} for $q$ is defined as the set of all unique plans $\mathcal{P}_q=\{P(q, \vec{v})|\vec{v}\in\mathcal{R}\}$.
\end{definition}

For any query plan $P$, let $t(P)$ be its execution time. We define the problem of ``CMP tuning'' as follows.

\begin{definition}[CMP Tuning]
\label{def:QO-knob-tuning}
Let $q$ be a given SQL query and let $\mathcal{R}$ be the plan space defined by Definition~\ref{def:plan-space}.
The goal of CMP tuning is to find the plan $P^*$ from the set of candidate plans $\mathcal{P}_q$ defined by Definition~\ref{def:candidate-plans} that minimizes the execution time, namely,
$$P^*=\argmin_{P\in\mathcal{P}_q}\{t(P)\}.$$
\end{definition}

Although we focus on minimizing plan execution time in this work, the plan-space exploration techniques proposed in Section~\ref{sec:exploration-plan-space} are independent of the performance metric and therefore can be applied to other metrics (e.g., resource consumption) as well.

\section{Exploration of Plan Space}
\label{sec:exploration-plan-space}

Previous work~\cite{WuW24} proposed using either random search (RS) or Bayesian optimization (BO) to explore the plan space.
One common and undesirable shortcoming of RS or BO is their inherent indeterminism, given that they are probabilistic algorithms.
Moreover, there is no guarantee that the exploration done by RS or BO is complete.
As a result, the true optimal plan can be missed and excluded from the candidate plans found.
In this section, we propose a new \emph{deterministic} and \emph{optimal} approach to explore the plan space for CMP tuning, based on the critical observation that the plan cost functions are \emph{linear} with respect to the CMPs.
Optimality here means the exploration is complete, that is, the true optimal plan is guaranteed to be included in the candidate plans found by exploration.
Throughout this section, we make the following assumptions to simplify our analysis.

\begin{assumption}
\label{assumption:no-same-cost}
Given any nontrivial values of the CMPs $\vec{c}$ (i.e., $\vec{c}\neq\vec{\mathbf{0}}$), different plans have different estimated costs.
\end{assumption}

\begin{assumption}
\label{assumption:well-behaved}
Given the same setting $\vec{v}$ of the CMPs $\vec{c}$, the plan $P(q, \vec{v})$ returned by the query optimizer and its estimated cost remain the same if the optimizer is invoked repeatedly.
\end{assumption}
Assumption~\ref{assumption:well-behaved} excludes nondeterministic behaviors of the query optimizer that may arise in practice.
For example, PostgreSQL will switch from dynamic programming to genetic algorithms when optimizing queries with a large number of joins~\cite{pgsql-qeqo}.
\rtwo{However, we note that this is a specific weakness of PostgreSQL, and there are alternative techniques that can produce deterministic plans for queries with many joined tables~\cite{NeumannR18}.}


\vspace{-0.5em}
\subsection{One-Dimensional Plan Space}
\label{sec:explore:one-dim}

For ease of exposition, we start with a special case by considering tuning only one CMP $c\in R_c=[c_l, c_h]$, where $c_l$ and $c_h$ represent the \emph{low value} and \emph{high value} of the range $R_c$.
Let $P_l$ and $P_h$ be the two plans (of a given query $q$) when setting $c=c_l$ and $c=c_h$.

\begin{theorem}
\label{theorem:one-dim}
Let $c_l<c_x<c_h$ and let $P_x$ be the plan returned by the query optimizer by setting $c=c_x$.
If $P_l=P_h$, then $P_l=P_x=P_h$ under Assumption~\ref{assumption:no-same-cost}.
\end{theorem}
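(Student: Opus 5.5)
We argue by contradiction, using that the optimizer returns the minimum-cost plan and that plan costs are linear in $c$. Write $P=P_l=P_h$ and suppose $P_x\neq P$. Since the optimizer returns the cheapest plan at each setting, and by Assumption~\ref{assumption:no-same-cost} costs of distinct plans differ at nonzero settings, we get three strict inequalities:
\[
f_{P_x}(c_x)<f_P(c_x),\qquad f_P(c_l)<f_{P_x}(c_l),\qquad f_P(c_h)<f_{P_x}(c_h).
\]
This requires $c_l,c_x,c_h\neq 0$, which holds if the range is positive, as for PostgreSQL's cost units. If $c_l=0$, only the weak inequality $f_P(c_l)\le f_{P_x}(c_l)$ is available at that endpoint, and it suffices for the argument below.

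Next, consider the difference $g(c)=f_{P_x}(c)-f_P(c)$. In the one-dimensional setting, with all other CMPs held fixed, each plan cost is affine in $c$ (it is $n\cdot c$ plus a constant contributed by the other CMPs), so $g$ is affine as well. Write $c_x=\lambda c_l+(1-\lambda)c_h$ with $\lambda\in(0,1)$. Affinity gives
\[
g(c_x)=\lambda g(c_l)+(1-\lambda)g(c_h).
\]
Since $g(c_l)\ge 0$ and $g(c_h)>0$, the right-hand side is strictly positive. This contradicts $g(c_x)<0$, so $P_x=P$.

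The main subtlety is justifying that the optimizer's choice at $c_x$ is exactly the global cost minimizer over the fixed plan set, so that comparing $P_x$ against $P$ is legitimate. This needs $P$ to be a valid plan for $q$ at every setting, which holds because the set of alternatives does not depend on $c$, and it needs ties to be excluded, which is what Assumption~\ref{assumption:no-same-cost} provides. The remaining steps are routine.
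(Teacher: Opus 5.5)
Your proof is correct and follows essentially the same route as the paper's: assume $P_x\neq P_l$, form the affine difference $g=f_{P_x}-f_{P_l}$, write $c_x$ as a convex combination of $c_l$ and $c_h$, and use the endpoint inequalities to get $g(c_x)>0$, which contradicts $g(c_x)<0$. Your added remark that only a weak inequality is needed at a zero endpoint (since $1-\lambda>0$ and $g(c_h)>0$) is a small refinement that the paper does not make.
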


The proof can be found in Appendix~\ref{appendix:proofs:theorem:one-dim}.
\shepherd{Assumption~\ref{assumption:no-same-cost} is required by Theorem~\ref{theorem:one-dim}.
Without it, multiple query plans may have the same estimated cost, and the query optimizer may return any one of them depending on its tie-breaking strategy.
Consequently, $P_x$ in Theorem~\ref{theorem:one-dim} may not be well defined.
Assumption~\ref{assumption:no-same-cost} eliminates this ambiguity by ensuring a unique plan for each CMP configuration $\vec{c}$.}
Figure~\ref{fig:plan-cost-one-dim} illustrates this one-dimensional case in an intuitive way.
Since $P_l=P_h$, the two lines that represent the (linear) cost functions of $P_l$ and $P_h$ coincide.
For any $c_l<c_x<c_h$, the cost function of the plan by setting $c=c_x$ also represents a line given its linearity.
There are two possibilities of this line, as shown in Figure~\ref{fig:plan-cost-one-dim} by the two lines denoted by $P_x$ and $P'_x$ (we have abused the notation by using the plans to also represent their linear cost functions).
Neither is possible, because (1) $P_x$ indicates that the query optimizer would have returned $P_x$ rather than $P_l$ when setting $c=c_l$ and (2) $P'_x$ indicates that the query optimizer would have returned $P'_x$ rather than $P_h$ ($=P_l$) when setting $c=c_h$.

Theorem~\ref{theorem:one-dim} suggests a \emph{bisection} algorithm to find all plans in the plan space given by $R_c=[c_l, c_h]$. Algorithm~\ref{alg:exploration:one-dim} presents the details.
Specifically, given a range $R_c=[c_l, c_h]$, we first check whether the two plans $P_l$ and $P_h$ at the end points of $R_c$ (namely, $c_l$ and $c_h$) are the same.
If so, the search within $R_c$ can be terminated based on Theorem~\ref{theorem:one-dim}.
If not, we then choose the ``middle point'' $c_m$ of $[c_l, c_h]$ and recursively explore the two ranges $[c_l, c_m]$ and $[c_m, c_h]$.

\begin{algorithm}[t]
\small
  \SetAlgoLined
  \KwIn{$q$, the query; $R_c=[c_l, c_h]$, the range of the CMP $c$.}
  \KwOut{$\mathcal{P}_q$, the candidate plans of the query $q$.}
  \SetAlgoLined
  $P_l\leftarrow P(q, c_l)$, $P_h\leftarrow P(q, c_h)$\;
  $\mathcal{P}_q\leftarrow\{P_l, P_h\}$\;
  \uIf{$P_l\neq P_h$}{
    $c_m\leftarrow\frac{c_l+c_h}{2}$\;
    $\mathcal{P}_q^l\leftarrow\texttt{ExplorePlanSpaceBS}(q, [c_l, c_m])$\;
    $\mathcal{P}_q^h\leftarrow\texttt{ExplorePlanSpaceBS}(q, [c_m, c_h])$\;
    $\mathcal{P}_q\leftarrow\mathcal{P}_q\cup\mathcal{P}_q^l\cup\mathcal{P}_q^h$\;
  }
  \Return{$\mathcal{P}_q$\;}
  \caption{\texttt{ExplorePlanSpaceBS}($q$, $R_c=[c_l, c_h]$)}
\label{alg:exploration:one-dim}
\end{algorithm}

\vspace{-0.5em}
\subsection{Multi-Dimensional Plan Space}
\label{sec:explore:multi-dim}

\begin{figure}
\centering
    \includegraphics[clip, trim=9.5cm 7cm 9.5cm 5.5cm, width=0.8\columnwidth]{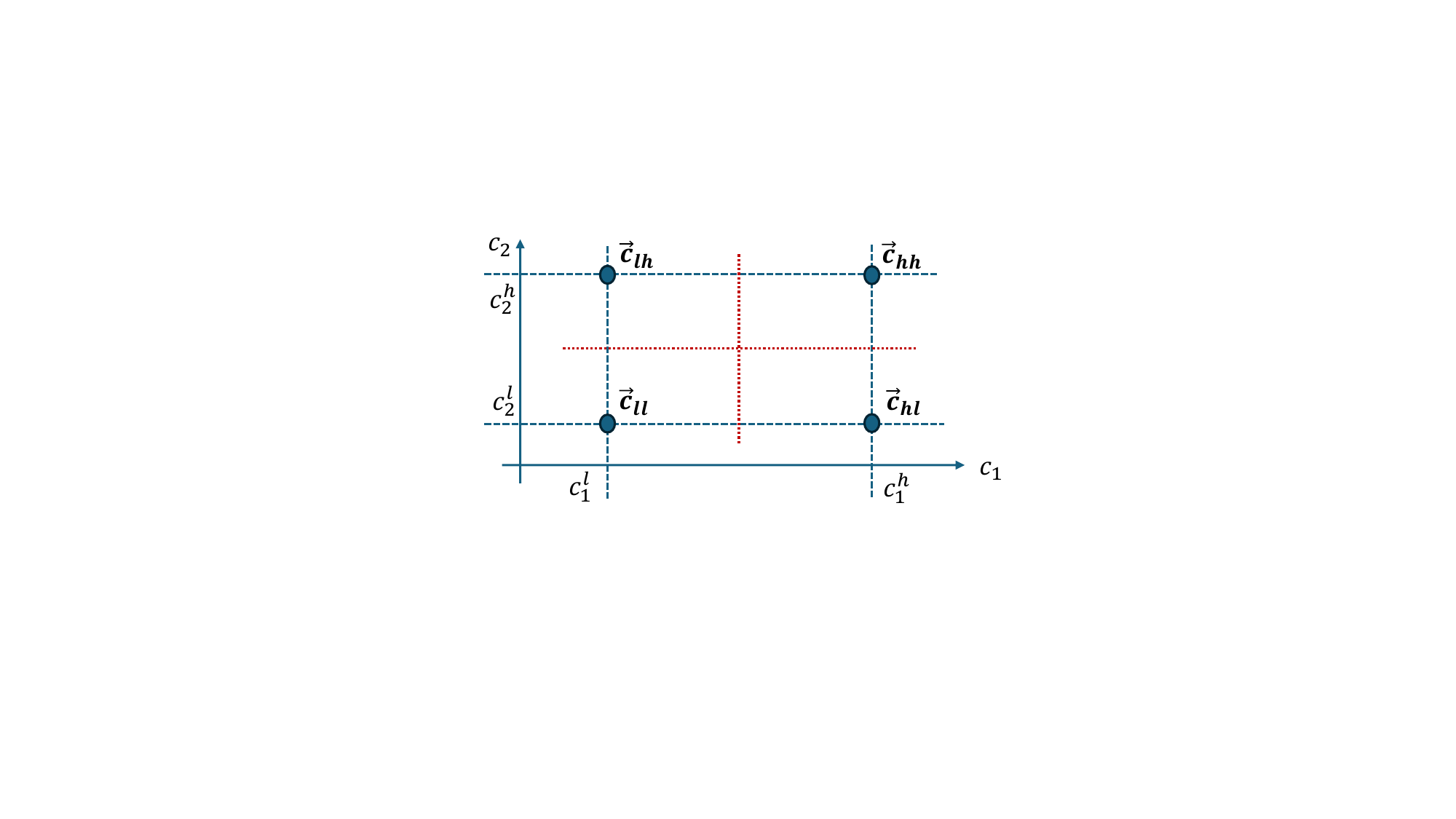}
\vspace{-1.5em}
\caption{Illustration of the case with multiple CMPs.}
\label{fig:plan-cost-multi-dim}
\end{figure}

We now extend our discussion to the general case with multiple CMPs.
Figure~\ref{fig:plan-cost-multi-dim} illustrates the idea with an example of two CMPs $c_1$ and $c_2$.
Suppose that the range of $c_1$ is $R_1=[c_1^l, c_1^h]$ and the range of $c_2$ is $R_2=[c_2^l, c_2^h]$.
We first check the four vertices $\vec{c}_{ll}=(c_1^l, c_2^l)$, $\vec{c}_{lh}=(c_1^l, c_2^h)$, $\vec{c}_{hl}=(c_1^h, c_2^l)$, and $\vec{c}_{hh}=(c_1^h, c_2^h)$ of the rectangle that covers $R_1$ and $R_2$ in the two-dimensional plan space.
If the four plans obtained at these four vertices are the same, we stop exploring the plan space.
Otherwise, we perform a \emph{binary space partitioning} (BSP)~\cite{BSP-Naylor04} of the rectangle, which ends up with four smaller rectangles that represent four subspaces.
We then recursively explore these subspaces to find more plans.

\begin{theorem}
\label{theorem:multi-dim}
Let $\mathcal{R}$ be the \emph{hyperrectangle} that represents the plan space spanned by the CMPs $\vec{c}$, that is, $\mathcal{R}=\prod_{i=1}^K R_i$ where $R_i=[c_i^l, c_i^h]$ for $1\leq i\leq K$.
Let the set of all \emph{vertices} of $\mathcal{R}$ be $\mathcal{V}$.
If the query plan $P_q=P(q, \vec{v})$ remains the same for \emph{all} the vertices $\vec{v}\in\mathcal{V}$, then for \emph{any} settings of the CMPs $\vec{c}=\vec{u}\in\mathcal{R}$ we have $P(q, \vec{u})=P_q$.
\end{theorem}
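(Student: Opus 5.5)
The argument is a direct convexity argument; it does not need an induction on the dimension via Theorem~\ref{theorem:one-dim}. For each plan $P$ of $q$, the estimated cost $f_P(\vec{c})=\vec{n}_P\cdot\vec{c}$ is linear in $\vec{c}$. The query optimizer returns the plan of minimum estimated cost at the given setting, and by Assumption~\ref{assumption:no-same-cost} this minimizer is unique whenever $\vec{c}\neq\vec{\mathbf{0}}$. Hence $P(q,\vec{v})=P_q$ means exactly that $f_{P_q}(\vec{v})<f_{P}(\vec{v})$ for every other plan $P\neq P_q$. Fix any such competitor $P$ and define the linear function $g_P(\vec{c})=f_P(\vec{c})-f_{P_q}(\vec{c})=(\vec{n}_P-\vec{n}_{P_q})\cdot\vec{c}$. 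The hypothesis says $g_P(\vec{v})>0$ for every vertex $\vec{v}\in\mathcal{V}$.

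The key step is to write an arbitrary $\vec{u}\in\mathcal{R}$ as a convex combination of the vertices. Since $\mathcal{R}=\prod_{i=1}^K[c_i^l,c_i^h]$, set $\lambda_i=(u_i-c_i^l)/(c_i^h-c_i^l)\in[0,1]$; if $c_i^l=c_i^h$, take $\lambda_i=0$. The product weights
$$w_{\vec{v}}=\prod_{i:\,v_i=c_i^h}\lambda_i\prod_{i:\,v_i=c_i^l}(1-\lambda_i)$$
are nonnegative and sum to $1$, and $\sum_{\vec{v}\in\mathcal{V}}w_{\vec{v}}\vec{v}=\vec{u}$. This follows coordinatewise by expanding the product, or by a short induction on $K$. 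Alternatively, one can simply cite that a hyperrectangle is the convex hull of its vertices.

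By linearity, $g_P(\vec{u})=\sum_{\vec{v}}w_{\vec{v}}\,g_P(\vec{v})$ is a convex combination of strictly positive numbers, so $g_P(\vec{u})>0$. Because this holds for every competitor $P$, $P_q$ is the unique cheapest plan at $\vec{u}$, and therefore $P(q,\vec{u})=P_q$.

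The main subtlety is the uniqueness and tie handling, not the convexity. Strict inequality at the vertices comes from Assumption~\ref{assumption:no-same-cost}, which also presumes that the vertex settings are nonzero. Strictness is preserved under convex combination because at least one weight $w_{\vec{v}}$ is positive. One should also note that the plan set over which the optimizer minimizes does not depend on $\vec{c}$, since it is fixed by $q$ and the optimizer's search space; this is what makes the comparison against all competitors $P$ valid uniformly across $\mathcal{R}$. If $\vec{u}=\vec{\mathbf{0}}$ could occur, it would be excluded by the same nontriviality convention. Under Assumption~\ref{assumption:well-behaved}, the plan returned at each vertex is well defined.
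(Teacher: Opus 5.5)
Your proof is correct and follows essentially the same route as the paper. The paper's Lemma~\ref{lemma:convex-combination} uses exactly your product weights to write $\vec{u}$ as a convex combination of the vertices, and then applies linearity of the cost difference between $P(q,\vec{u})$ and $P_q$, phrased as a contradiction rather than as your direct argument over all competitors. Your final step is also cleaner: from $g_P(\vec{v})>0$ at every vertex you correctly conclude $g_P(\vec{u})>0$, whereas the paper writes $\sum_{\vec{v}}\rho_v\, g(\vec{v})=g(\vec{v})\sum_{\vec{v}}\rho_v$ as if $g$ were constant on $\mathcal{V}$ (the valid conclusion there is only that $g(\vec{v})<0$ at some vertex, which still suffices for the contradiction).
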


The proof can be found in Appendix~\ref{appendix:proofs:theorem:multi-dim}.
Theorem~\ref{theorem:multi-dim} guarantees the correctness of extending the idea outlined above from two-dimensional plan space to general multi-dimensional plan space.
Algorithm~\ref{alg:exploration:multi-dim} presents in detail this BSP-based extension of Algorithm~\ref{alg:exploration:one-dim} to explore multidimensional plan space.
We first obtain all the vertices of the $K$-dimensional hyperrectangle that represent the plan space $\mathcal{R}$.
We then check the query plans obtained at the vertices.
If \emph{all} these plans are the same, then we can stop further exploration. 
Otherwise, we invoke the BSP procedure presented by Algorithm~\ref{alg:bsp} to partition the plan space $\mathcal{R}$ into subspaces.
We further recursively explore each subspace.

The BSP procedure illustrated in Algorithm~\ref{alg:bsp} is also recursive.
We start from the first dimension or range $R_1=[c_1^l, c_1^h]$ of the plan space $\mathcal{R}$.
We split the plan space by half with respect to $R_1$.
The result is two subspaces, and we then split each subspace with respect to the second dimension or range $R_2=[c_2^l, c_2^h]$.
We continue doing this in a recursive way until we reach the last dimension or range $R_K=[c_K^l, c_K^h]$.
All the final subspaces are the results of BSP.

\begin{algorithm}[t]
\small
  \SetAlgoLined
  \KwIn{$\mathcal{R}$, the plan space; $K$, the number of CMPs.}
  \KwOut{$\mathbb{S}$, the partitioned subspaces.}
  \SetAlgoLined
    \underline{\textbf{BinarySplit}}($\mathcal{R}=\prod_{i=1}^K R_i$ where $R_i=[c_i^l, c_i^h]$, $k$, $K$, $\mathbb{S}$):\\
    \uIf{$k > K$}{
        $\mathbb{S}\leftarrow\mathbb{S}\cup\{\mathcal{R}\}$; // No more partitioning required.\\
    }\Else{
        $\mathcal{R}_l\leftarrow\texttt{Copy}(\mathcal{R})$, $\mathcal{R}_r\leftarrow\texttt{Copy}(\mathcal{R})$\;
        $c_k^m\leftarrow\frac{c_k^l+c_k^h}{2}$; // Split the $k$-th range $R_k=[c_k^l, c_k^h]$ by half.\\
        $\mathcal{R}_{l,k}\leftarrow[c_k^l, c_k^m]$; // Replace $k$-th range in $\mathcal{R}_l$ by $[c_k^l, c_k^m]$.\\
        $\mathcal{R}_{r,k}\leftarrow[c_k^m, c_k^h]$;  // Replace $k$-th range in $\mathcal{R}_r$ by $[c_k^m, c_k^h]$.\\
        \texttt{BinarySplit}($\mathcal{R}_l$, $k+1$, $K$, $\mathbb{S}$); // Recursion with $\mathcal{R}_l$.\\
        \texttt{BinarySplit}($\mathcal{R}_r$, $k+1$, $K$, $\mathbb{S}$); // Recursion with $\mathcal{R}_r$.\\           
    }
    \quad\\
    \underline{\textbf{Main}}:\\
    $\mathbb{S}\leftarrow\emptyset$\;
    \texttt{BinarySplit}($\mathcal{R}$, 1, $K$, $\mathbb{S}$)\;
    \Return{$\mathbb{S}$\;}
  \caption{\texttt{BinarySpacePartition}($\mathcal{R}$, $K$)}
\label{alg:bsp}
\end{algorithm}

\begin{algorithm}[t]
\small
  \SetAlgoLined
  \KwIn{$q$, the query; $\mathcal{R}$, the plan space with $K$ CMPs.}
  \KwOut{$\mathcal{P}_q$, the candidate plans of $q$.}
  \SetAlgoLined
  $\mathcal{V}\leftarrow$\texttt{GetVertices}($\mathcal{R}$)\;
  $\mathcal{P}_v\leftarrow\{P(q,\vec{v})|\vec{v}\in\mathcal{V}\}$; // Get all plans at vertices of $\mathcal{R}$.\\
  $\mathcal{P}_q\leftarrow\mathcal{P}_v$\;
  \uIf{$|\mathcal{P}_v|>1$}{
    $\mathbb{S}\leftarrow$\texttt{BinarySpacePartition}($\mathcal{R}$, $K$); // Use Algorithm~\ref{alg:bsp}.\\
    \ForEach{$\mathcal{S}\in\mathbb{S}$}{
        $\mathcal{P}_S\leftarrow$\texttt{ExplorePlanSpaceBSP}($q$, $\mathcal{S}$); // Recursion with $\mathcal{S}$.\\
        $\mathcal{P}_q\leftarrow\mathcal{P}_q\cup\mathcal{P}_S$\;
    }
  }
  \Return{$\mathcal{P}_q$\;}
  \caption{\texttt{ExplorePlanSpaceBSP}($q$, $\mathcal{R}$)}
\label{alg:exploration:multi-dim}
\end{algorithm}

\subsubsection{Complexity Analysis}
For a plan space $\mathcal{R}$ with $K$ CMPs, the invocation of BSP would result in $2^K$ subspaces.
Moreover, since each subspace represents a $K$-dimensional hyperrectangle, it has $2^K$ vertices.
As a result, if we invoke the BSP $M$ times, we have 
$$2^K\cdot (1+2^K + 2^{2K}+\cdots+2^{MK})=2^K\cdot\frac{2^{(M+1)K}-1}{2^K-1}\approx 2^{(M+1)K}$$
vertices with as many query optimizer calls to obtain the corresponding query plans.
This number grows exponentially with respect to \emph{both} $M$ and $K$.
Since $K$ is given (e.g., $K=5$ for PostgreSQL), from a practical point of view we need to control $M$ so that the total number of query optimizer calls can be affordable.
In our experimental evaluation, we found that setting $M=2$ is sufficient for exploring the plan spaces of all the queries evaluated.

\subsubsection{Further Optimization}
\label{sec:explore:multi-dim:optimality}

Although Algorithm~\ref{alg:exploration:multi-dim} can find \emph{all} candidate plans in the plan space, it is not the most efficient or optimal in the sense that it makes more query optimizer calls than necessary.
Instead of using a fixed space partitioning strategy like the BSP in Algorithm~\ref{alg:bsp}, we can use a more adaptive partitioning strategy similar to the one proposed in previous work~\cite{HulgeriS02}.
The idea is to only check the \emph{intersection} points of two (linear) plan cost functions, and partition the plan space accordingly based on the plans found at those points.
As an example, consider the one-dimensional plan space in Figure~\ref{fig:intersection-point}.
If the two plans obtained at $c_l$ and $c_h$ are different, then instead of checking the middle point $c_m$ of $[c_l, c_h]$, we can find the intersection point $c_z$ of the two lines that represent the (linear) cost functions of the two plans $P_l$ and $P_h$.
There are two possible cases: \textbf{(C1)} the query optimizer returns either $P_l$ or $P_h$ at $c_z$; or \textbf{(C2)} the query optimizer returns a different plan $P_z$ at $c_z$.
In the first case \textbf{(C1)}, we can skip further checking the entire range $[c_l, c_h]$ by applying Theorem~\ref{theorem:one-dim}:
if $P_l$ is returned at $c_z$, $P_l$ is the unique plan in $[c_l, c_z]$, while $P_h$ is the unique plan in $(c_z, c_h]$;
on the other hand, if $P_h$ is returned at $c_z$, $P_l$ is the unique plan in $[c_l, c_z)$, while $P_h$ is the unique plan in $[c_z, c_h]$.
In the second case \textbf{(C2)}, we can further find the intersection points of the (linear) cost function of $P_z$ with the (linear) cost functions of $P_l$ \emph{and} $P_h$ and repeat the above process.


One problem of implementing this intersection-point based idea is that we need the \emph{explicit expressions} of the linear cost functions, which requires nontrivial changes to the query optimizer~\cite{WuCZTHN13}.
We therefore leave the implementation of this intriguing idea to reduce the number of query optimizer calls for future work.


\vspace{-0.5em}
\section{Reducing Evaluation Time of Candidate Plans}
\label{sec:optimization-tuning-time}

The plans found during the exploration of the plan space, as described in Section~\ref{sec:exploration-plan-space}, consist of the \emph{candidate set} to find the query plan with minimum execution time.
A naive solution is to evaluate all candidate plans, which would be inefficient in the presence of long-running candidates.
In this section, we propose two sets of optimization \rthree{techniques} that significantly reduce the evaluation time spent on running the candidate plans.
The first set of optimizations is based on \emph{timeout}, while the second set of optimizations is based on skipping \emph{similar} candidate plans.
The problem of reducing plan evaluation time is critical from a practical point of view.
However, it was somewhat overlooked until recently~\cite{YiTIM25,GiannakourisT25}.

\vspace{-0.5em}
\subsection{Timeout-based Pruning}
\label{sec:optimization:timeout}

The first set of optimization \rthree{techniques} is based on \emph{timeout}. We present two such \rthree{techniques} in the following.

\begin{figure}
\centering
    \includegraphics[clip, trim=9.5cm 6.5cm 9.5cm 6cm, width=0.7\columnwidth]{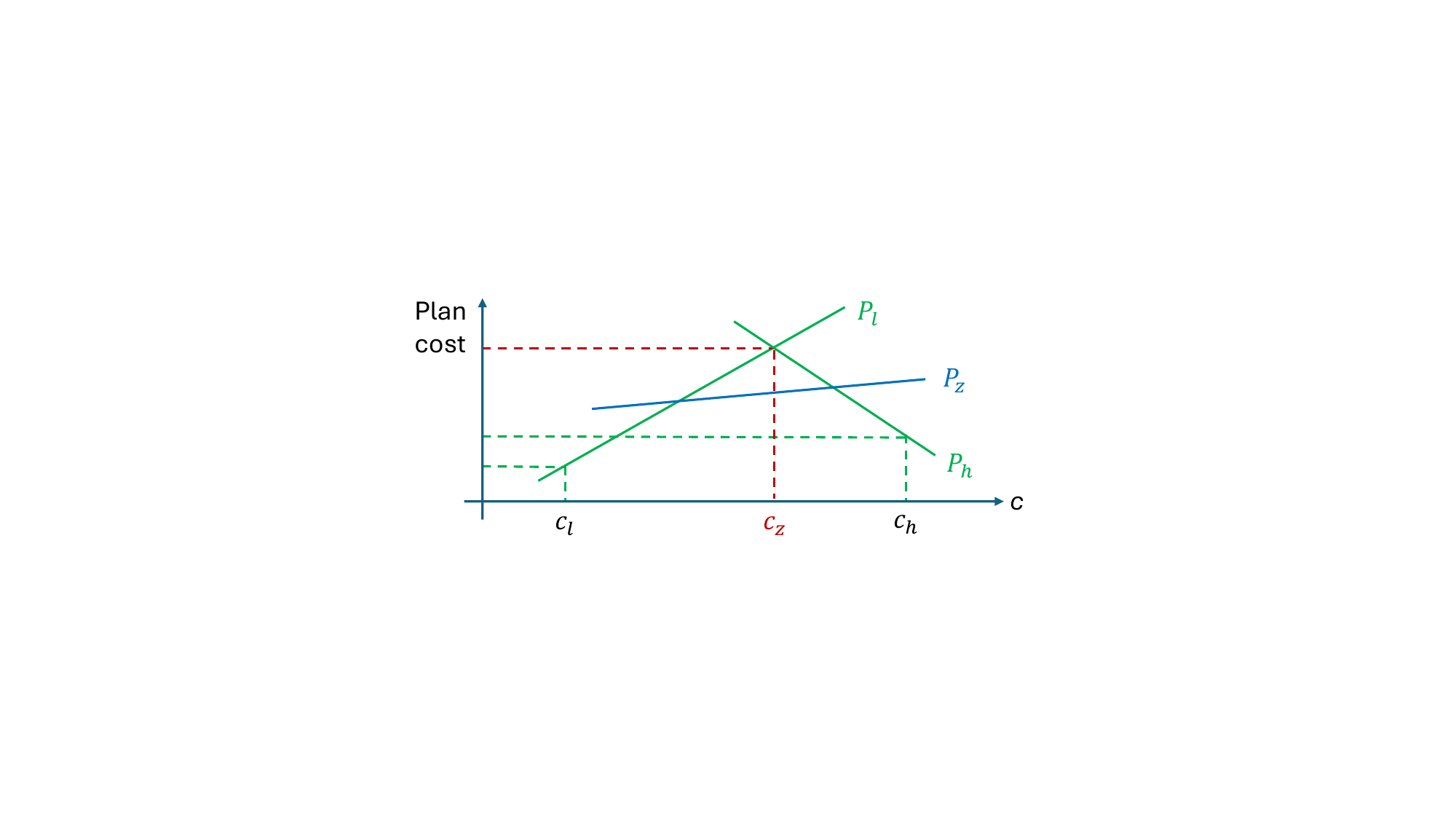}
\vspace{-1em}
\caption{Illustration of checking the intersection point.}
\label{fig:intersection-point}
\end{figure}

\subsubsection{Progressive Timeout}
\label{sec:optimization:timeout:progressive}

\begin{algorithm}[t]
\small
  \SetAlgoLined
  \KwIn{$q$, the query; $\mathcal{P}_q$, the candidate plans of $q$; $P_q^{\default}$, the default plan of $q$ with known execution time.}
  \KwOut{$P_q^*$, the candidate plan with the minimum execution time.}
  \SetAlgoLined
  $P_q^*\leftarrow P_q^{\default}$, $\delta\leftarrow t(P_q^{\default})$\;
  \ForEach{$P_q\in\mathcal{P}_q$}{
    Execute $P_q$ and monitor its execution time $t(P_q)$\;
    \uIf{$P_q$ finishes before hitting the timeout $\delta$}{
        $P_q^*\leftarrow P_q$\;
        $\delta\leftarrow t(P_q)$\;
    }
    \Else{
        Terminate the execution of $P_q$ when $t(P_q)\geq\delta$\;
    }
  }
  \Return{$P_q^*$\;}
  \caption{\texttt{ProgressiveTimeout}($q$, $\mathcal{P}_q$, $P_q^{\default}$)}
\label{alg:optimization:progressive-timeout}
\end{algorithm}

Algorithm~\ref{alg:optimization:progressive-timeout} illustrates a simple idea by progressively resetting the timeout $\delta$ based on the best plan observed so far.
It starts from a timeout that is equal to the execution time of the default plan.
If a candidate plan finishes before the current timeout, then we record its execution time and use that as the new timeout.
Otherwise, the plan runs out of time and we immediately terminate its execution.

\subsubsection{Slice-based Round-robin Timeout}
\label{sec:optimization:timeout:rr}

One primary drawback of progressive timeout is that the total evaluation time depends on the \emph{order} of the plans being executed.
To address this order-dependent issue, recent work~\cite{GiannakourisT25} proposed a slice-based round-robin timeout mechanism.
Although it was proposed to evaluate candidate index configurations, we can adapt it to our problem of CMP tuning for the new purpose of evaluating candidate plans.

\begin{algorithm}[t]
\small
  \SetAlgoLined
  \KwIn{$q$, the query; $\mathcal{P}_q$, the candidate plans of $q$; $P_q^{\default}$, the default plan of $q$ with known execution time; $\delta_0$, the initial timeout.}
  \KwOut{$P_q^*$, the candidate plan with the minimum execution time.}
  \SetAlgoLined
  $P_q^*\leftarrow P_q^{\default}$, $t_q^*\leftarrow t(P_q^{\default})$, $\delta\leftarrow \delta_0$\;
  \While{\texttt{true}}{
      \ForEach{$P_q\in\mathcal{P}_q$}{
        Execute $P_q$ and monitor its execution time $t(P_q)$\;
        \uIf{$P_q$ finishes before hitting the timeout $\delta$}{
            \uIf{$t(P_q)<t_q^*$}{
                $P_q^*\leftarrow P_q$, $t_q^*\leftarrow t(P_q)$, $\delta\leftarrow t_q^*$\;
            }
        }
        \Else{
            Terminate the execution of $P_q$ when $t(P_q)\geq\delta$\;
        }
     }
     \uIf{There exists some plan that finishes in round $r$}{
        \textbf{break}\;
     }
     $\delta\leftarrow \min\{2\cdot\delta, t_q^*\}$\;
  }
  \Return{$P_q^*$\;}
  \caption{\texttt{RoundRobinTimeout}($q$, $\mathcal{P}_q$, $P_q^{\default}$, $\delta_0$)}
\label{alg:optimization:roundrobin-timeout}
\end{algorithm}

Algorithm~\ref{alg:optimization:roundrobin-timeout} presents the details of this round-robin timeout mechanism.
In each round, we double the timeout $\delta$ and restart the execution of the remaining plans that cannot be finished in previous rounds before hitting the timeout.
We keep running this round-robin process until some plan has been finished in the current round.
We simply return the best plan with the minimum execution time among the ones that have been finished.

\vspace{-0.5em}
\subsubsection{Comparison and Analysis}

The advantages of Algorithm~\ref{alg:optimization:roundrobin-timeout} over Algorithm~\ref{alg:optimization:progressive-timeout} are twofold.
First, it alleviates the order-dependent issue of Algorithm~\ref{alg:optimization:progressive-timeout}, since a query plan $P_q$ can be finished in the round with $\delta>t(P_q)$.
Clearly, if multiple plans finish in the same round, then the order among these plans still matters, though the impact is milder.
Second, if the distribution of plan execution time is skewed (i.e., some plans are much faster than the others), then we can find the best plan quickly.
On the other hand, the drawback of Algorithm~\ref{alg:optimization:roundrobin-timeout} compared to Algorithm~\ref{alg:optimization:progressive-timeout} is also clear---if all plans finish in similar time then Algorithm~\ref{alg:optimization:roundrobin-timeout} can waste more time in the early rounds by repeatedly terminating and restarting execution of the same plans. In the following, we present a formal analysis.

Suppose that there are $N$ unique candidate plans in $\mathcal{P}_q$.
Let the minimum execution time of the candidate plans be $t_{\min}$ and let the maximum execution time be $t_{\max}$.
\begin{theorem}
\label{theorem:progressive-timeout}
The total plan evaluation time $T^{\progressive}$ by running the progressive timeout mechanism in Algorithm~\ref{alg:optimization:progressive-timeout} satisfies
\begin{equation}\label{eq:T-pg-bounds}
  N\cdot t_{\min}\leq T^{\progressive}\leq N\cdot t_{\max}.  
\end{equation}
\end{theorem}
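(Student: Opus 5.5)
The plan is to decompose $T^{\progressive}$ into per-plan contributions and bound each one separately. Algorithm~\ref{alg:optimization:progressive-timeout} executes each of the $N$ candidate plans in $\mathcal{P}_q$ exactly once, so I would write
\[
T^{\progressive}=\sum\nolimits_{P_q\in\mathcal{P}_q}\tau(P_q),
\]
where $\tau(P_q)$ is the wall-clock time spent on $P_q$. This is either its full execution time $t(P_q)$, if it finishes before the current timeout $\delta$, or exactly $\delta$, if it is terminated when $t(P_q)\geq\delta$. In both cases $\tau(P_q)=\min\{t(P_q),\delta_{P_q}\}$, where $\delta_{P_q}$ is the timeout in force when $P_q$ starts. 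The theorem then follows by summing the per-plan bounds $t_{\min}\leq\tau(P_q)\leq t_{\max}$ over the $N$ plans.

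The upper bound is immediate. Since $\tau(P_q)\leq t(P_q)\leq t_{\max}$, summing gives $T^{\progressive}\leq N\cdot t_{\max}$. Note that this step does not need to know anything about $\delta$.

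For the lower bound I need $\delta_{P_q}\geq t_{\min}$ at every point of the loop. The timeout starts at $t(P_q^{\default})$ and is reset only to $t(P_q)$ for a candidate plan that has just finished. For the reset values, $t(P_q)\geq t_{\min}$ by the definition of $t_{\min}$. For the initial value, I would argue that the default plan is itself a candidate plan: it is $P(q,\vec{c}_{\default})$, and $\vec{c}_{\default}\in\mathcal{R}$, so $t(P_q^{\default})\geq t_{\min}$. An induction over the iterations then shows $\delta\geq t_{\min}$ throughout. Hence $\tau(P_q)=\min\{t(P_q),\delta_{P_q}\}\geq t_{\min}$, and summing gives $N\cdot t_{\min}\leq T^{\progressive}$.

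The main obstacle is the assumption about the initial timeout. If the default value of the CMPs lay outside $\mathcal{R}$, the default plan could be faster than every candidate, and the lower bound could fail. I would therefore state explicitly that $\vec{c}_{\default}\in\mathcal{R}$, which the problem setup suggests since each CMP has a range and a default value. As a fallback, I would interpret $t_{\min}$ as taken over $\mathcal{P}_q\cup\{P_q^{\default}\}$, which restores the induction. Everything else is bookkeeping.
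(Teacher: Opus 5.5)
Your proof is correct, but it takes a different route from the paper's. The paper argues by extremal orderings. It asserts that $T^{\progressive}$ is smallest when plans run in increasing order of execution time: the fastest plan finishes and every other plan is cut off at $\delta=t_{\min}$, giving exactly $N\cdot t_{\min}$. It asserts that $T^{\progressive}$ is largest in decreasing order: no plan is cut off, giving $\sum_{P\in\mathcal{P}_q}t(P)\leq N\cdot t_{\max}$.

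That route gives tightness information. The lower bound is attained, which is the ``Early Stop (Opt)'' scenario of the experiments, and the worst case is really $\sum_P t(P)$. It also exposes the order-dependence that motivates the round-robin variant. However, the paper only asserts (``Clearly'') that those two orders are extremal. Your per-plan bound $t_{\min}\leq\min\{t(P),\delta_P\}\leq t(P)\leq t_{\max}$ is exactly the argument that justifies it, and it holds uniformly over all orders.

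Your route also makes explicit what the paper leaves implicit about the initial timeout. The lower bound genuinely needs $t(P_q^{\default})\geq t_{\min}$; otherwise even the increasing order cuts the first plan off below $t_{\min}$. The paper's claim that in decreasing order ``no plan can be terminated early'' tacitly assumes $t(P_q^{\default})\geq t_{\max}$, which need not hold. As you observe, the upper bound is unaffected, since $\min\{t(P),\delta_P\}\leq t(P)$. Your justification via $\vec{c}_{\default}\in\mathcal{R}$ (true for the experimental ranges $[0.1\,c_d,10\,c_d]$) closes that gap cleanly, as does your fallback of taking $t_{\min}$ over $\mathcal{P}_q\cup\{P_q^{\default}\}$.
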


\begin{theorem}
\label{theorem:rr-timeout}
The total plan evaluation time $T^{\roundrobin}$ by running the round-robin timeout mechanism in Algorithm~\ref{alg:optimization:roundrobin-timeout} satisfies
\begin{equation}\label{eq:T-rr-bounds}
    N\cdot(\delta_m-\delta_0)\leq T^{\roundrobin}\leq N\cdot(2\cdot\delta_m-\delta_0),
\end{equation}
or equivalently,
$$N\cdot\delta_0\cdot(2^{m}-1)\leq T^{\roundrobin}\leq N\cdot\delta_0\cdot(2^{m+1}-1),$$
where $\delta_m$ is the timeout of the first round $m$ with $t_{\min}\leq\delta_m$.
\end{theorem}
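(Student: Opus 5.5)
The plan is to split the run of Algorithm~\ref{alg:optimization:roundrobin-timeout} into rounds $r=0,1,\dots$ with timeouts $\delta_r$, to show that $\delta_r=2^r\delta_0$ as long as no plan has finished, and to count the time spent in each round separately. By definition, $m$ is the first round with $t_{\min}\leq\delta_m$, so $\delta_r<t_{\min}$ for every $r<m$.

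\textbf{Rounds $0,\dots,m-1$.} In round $r<m$ every candidate plan has $t(P_q)\geq t_{\min}>\delta_r$. Hence no plan finishes: each of the $N$ plans is executed for exactly $\delta_r$ and then terminated. Because nothing finished, $t_q^*$ is still $t(P_q^{\default})$, and the loop does not break. The timeout is updated to $\min\{2\delta_r, t(P_q^{\default})\}$.

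I will argue that the cap $t(P_q^{\default})$ never binds before round $m$. Any plan with execution time $t_{\min}$ could not beat the default plan unless $t_{\min}\leq t(P_q^{\default})$. I will also use the setting implicit in the statement that the doubling sequence reaches $t_{\min}$, i.e.\ $\delta_m=2^m\delta_0$. Under these, $\delta_r=2^r\delta_0$, and rounds $0$ through $m-1$ cost exactly
\[
N\sum_{r=0}^{m-1}2^r\delta_0=N\delta_0(2^m-1)=N(\delta_m-\delta_0).
\]
This already gives the lower bound, since round $m$ contributes a nonnegative amount.

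\textbf{Round $m$.} The timeout starts at $\delta_m$. Inside the loop the timeout can only decrease: it is reset to $t_q^*$ only when a plan finishes faster than the current timeout. So every plan in this round runs for at most $\delta_m$, and the round costs at most $N\delta_m$.

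Moreover, some plan does finish in this round. Consider the plan attaining $t_{\min}$. If the timeout has not been lowered when this plan is reached, it finishes because $t_{\min}\leq\delta_m$. If the timeout has already been lowered, then some other plan has finished. Either way the break condition fires, so round $m$ is the last round. Adding $N(\delta_m-\delta_0)+N\delta_m$ gives the upper bound $N(2\delta_m-\delta_0)$. Substituting $\delta_m=2^m\delta_0$ gives the equivalent form $N\delta_0(2^m-1)\leq T^{\roundrobin}\leq N\delta_0(2^{m+1}-1)$.

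\textbf{Main obstacle.} The delicate points are boundary conditions rather than the arithmetic. The first is the cap $\min\{2\delta,t_q^*\}$, which I handle by the observation above that it cannot bind while $\delta_r<t_{\min}\leq t(P_q^{\default})$. The second is the strict-versus-nonstrict finishing condition when $t_{\min}=\delta_m$, which I will resolve by reading ``finishes before hitting the timeout'' as $t(P_q)\leq\delta$. I would state both conventions explicitly at the start of the proof.
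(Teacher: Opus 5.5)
Your proposal is correct and follows essentially the same argument as the paper: every plan is cut off at the full timeout in each round before $m$, round $m$ costs at most $N\delta_m$ and must trigger the break, and the geometric sum gives both bounds. You handle the $\min\{2\delta,t_q^*\}$ cap and the non-strict finishing convention explicitly, which the paper's appendix proof does not. Your $0$-based round indexing with $\delta_m=2^m\delta_0$ also matches the statement's second form; the paper instead indexes rounds from $1$ (so $\delta_m=\delta_0 2^{m-1}$) and therefore arrives at $N\delta_0(2^{m-1}-1)\le T^{\roundrobin}\le N\delta_0(2^m-1)$.
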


The proofs can be found in Appendices~\ref{appendix:proofs:theorem:progressive-timeout} and~\ref{appendix:proofs:theorem:rr-timeout}.
To simplify our analysis, assume $\delta_0\ll\delta_m$ (e.g., we set $\delta_0$ to 1 second in our evaluation).
We can then simplify Equation~\ref{eq:T-rr-bounds} as
\begin{equation}\label{eq:T-rr-bounds-simplified}
    N\cdot\delta_m\leq T^{\roundrobin}\leq 2\cdot N\cdot\delta_m.
\end{equation}
Since $t_{\min}\leq\delta_m$, by Equation~\ref{eq:T-rr-bounds-simplified} we have $T^{\roundrobin}\geq N\cdot\delta_m\geq N\cdot t_{\min}$.
Therefore, the lower bound (i.e., the best case) for $T^{\roundrobin}$ is no better than that for $T^{\progressive}$ as shown in Equation~\ref{eq:T-pg-bounds}.
Notice that $\delta_m$ can be as large as twice of $t_{\min}$, indicating that in practice we should expect a worse best-case situation of $T^{\roundrobin}$ than that of $T^{\progressive}$.
On the other hand, the upper bound (i.e., the worst case) for $T^{\roundrobin}$ is better than that for $T^{\progressive}$ if $2\cdot\delta_m\leq t_{\max}$, or equivalently, $\delta_m\leq\frac{t_{\max}}{2}$.
Since $t_{\min}\leq\delta_m$, this implies $t_{\min}\leq \frac{t_{\max}}{2}$.
Therefore, we would expect a better worst-case situation of $T^{\roundrobin}$ than that of $T^{\progressive}$ if $t_{\max}\geq 2\cdot t_{\min}$.

\vspace{-0.5em}
\paragraph*{Remark} We can further generalize Algorithm~\ref{alg:optimization:roundrobin-timeout} by setting the timeout $\delta_r$ in round $r$ as $\delta_r=\theta\cdot\delta_{r-1}$ with any $\theta > 1$.
In this generalization, Equation~\ref{eq:T-rr-bounds} can be revised as
\begin{equation}\label{eq:eq:T-rr-bounds-general}
    N\cdot\delta_0\cdot\frac{\theta^{m}-1}{\theta - 1}\leq T^{\roundrobin}\leq N\cdot\delta_0\cdot\frac{\theta^{m+1}-1}{\theta-1},
\end{equation}
by noticing $\sum\nolimits_{r=0}^m\delta_r=\delta_0\cdot\frac{\theta^{m+1}-1}{\theta-1}$.
It is an interesting question to select the $\theta$ that minimizes $T^{\roundrobin}$, which we leave for future work.

\vspace{-0.5em}
\subsection{Similarity-based Pruning}
\label{sec:optimization:similarity}

One common shortcoming of the above two timeout-based pruning techniques is that they view query plans as \emph{black boxes}.
In the context of CMP tuning, many candidate plans are structurally similar --- this is not surprising given that they are all execution plans of the \emph{same} query.
Intuitively, if two plans of the same query are similar, then we expect their execution time be similar as well.
We therefore propose the second set of optimizations based on exploiting the similarity between candidate plans.

\vspace{-0.5em}
\subsubsection{Similarity by Plan Embedding}
\label{sec:optimization:similarity:embedding}

A straightforward idea is to convert a plan into some vectorized feature representation or \emph{embedding}.
The goal is to map structurally similar plans into embedding vectors that are close to each other.

Specifically, let $h:\mathcal{P}_q\to\mathbb{R}^n$ be the embedding function that maps from the plan space to the feature representation vector space with $n$ dimensions.
For any candidate plan $P\in \mathcal{P}_q$, $h(P)$ represents the corresponding plan embedding.
Let $s$ be the \emph{cosine similarity function} and $0<\tau<1$ be some threshold.
Two plans $P_1, P_2\in\mathcal{P}_q$ are similar if 
$s(h(P_1), h(P_2))\geq\tau.$

\begin{algorithm}[t]
\small
  \SetAlgoLined
  \KwIn{$q$, the query; $\mathcal{P}_q$, the candidate plans of $q$ that have not been executed; $P$, a plan of $q$ with known execution time $t(P)$; $h$, the plan embedding function; $s$, the (cosine) similarity function; $\tau$, the plan similarity threshold.}
  \KwOut{$\tilde{\mathcal{P}}_q$, the remaining candidate plans of $q$.}
  \SetAlgoLined
  $\tilde{\mathcal{P}}_q\leftarrow\emptyset$\;
  \ForEach{$P'\in \mathcal{P}_q$}{
    \uIf{$s(h(P'), h(P))<\tau$}{
        $\tilde{\mathcal{P}}_q\leftarrow\tilde{\mathcal{P}}_q\cup\{P'\}$\;
    }
  }
  \Return{$\tilde{\mathcal{P}}_q$\;}
  \caption{\texttt{PrunePlansEmbedding}($q$, $\mathcal{P}_q$, $P$, $h$, $s$, $\tau$)}
\label{alg:optimization:embedding-sim}
\end{algorithm}

Algorithm~\ref{alg:optimization:embedding-sim} presents the details of excluding similar candidate plans from execution based on plan embeddings.
\rone{In our current implementation, we use the plan embedding techniques presented in ``AI meets AI (AMA)''~\cite{DingDM0CN19}. Compared to other plan embedding techniques, AMA does not require parametrized learning with labeled execution data, which fits the context of CMP tuning better as the candidate plans do not carry execution information after plan space exploration.
Specifically, AMA considers multiple ``feature channels'' for each type of physical operators, such as \emph{table scan}, \emph{index scan}, and \emph{hash join}, that appears in a query plan. Each feature channel represents a numeric metric, such as the estimated number of rows or the estimated cost of an operator, aggregated for all operators with the same type. For example, if a query plan contains three hash joins, then the feature channel ``estimated rows'' for the operator type ``hash join'' represents the \emph{sum} of the estimated rows outputted by all the three hash joins.
A ``channel vector'' is the vector that represents the values of the corresponding feature channel for all operator types.
The final plan vector/embedding is simply a \emph{concatenation} of all the channel vectors.
We note that the area of learned query/plan representations has been advanced significantly in recent years~\cite{ZhaoLC23}, with more detailed analysis regarding the invariance and uniqueness properties that should be satisfied by such representations~\cite{ReinerG23}.
These more advanced plan embedding techniques can be further incorporated into Algorithm~\ref{alg:optimization:embedding-sim} to enhance its performance, although, unlike AMA, training overhead becomes an issue that requires further consideration.
}


\subsubsection{Similarity by Bottleneck Matching}
\label{sec:optimization:similarity:bottleneck}

The embedding-based plan similarity only utilizes information of query plan structures without considering any query runtime information.
Two plans can be structurally similar but have very different execution time.
It is unclear how to extend the plan embeddings to also include execution time information, as this information is only available for plans that have been executed.

To better exploit the execution time information in the plans that have been executed, we propose a new similarity metric based on matching ``bottleneck operator,'' defined as follows.
\begin{definition}[Bottleneck Operator]
\label{def:bottleneck}
Let $P$ be an executed plan with execution time $t(P)$.
Let $O$ be an operator contained by $P$ with execution time $t(O)$.
$O$ is a bottleneck operator if $\frac{t(O)}{t(P)}\geq\alpha$, where $0<\alpha<1$ is some threshold.
\end{definition}
Once the execution of a plan $P$ is completed before hitting the timeout, we extract the top $M$ of its bottleneck operators by Definition~\ref{def:bottleneck}.
For each remaining plan $P'$ that has not yet been executed, we check whether $P'$ contains (matches of) all these top $M$ bottleneck operators from $P$.
If so, we will mark $P'$ as similar to $P$ and skip the execution of $P'$.
Algorithm~\ref{alg:optimization:bottleneck-sim} illustrates this process.

\rtwo{If a plan contains a new bottleneck operator that has not been observed before, then it cannot be skipped for execution. With our current framework, it has to be executed and we will potentially pay the execution cost of the new bottleneck operator.
However, its execution cost can be reduced by using the timeout mechanisms.}

\begin{algorithm}[t]
\small
  \SetAlgoLined
  \KwIn{$q$, the query; $\mathcal{P}_q$, the candidate plans of $q$ that have not been executed; $P$, a plan of $q$ with known execution time $t(P)$; $\alpha$, threshold for determining bottleneck operator; $M$, threshold for selecting the top bottleneck operators.}
  \KwOut{$\tilde{\mathcal{P}}_q$, the remaining candidate plans of $q$.}
  \SetAlgoLined
  $\mathcal{O}\leftarrow$\texttt{GetTopBottleneckOperators}($P$, $\alpha$, $M$)\;
  $\tilde{\mathcal{P}}_q\leftarrow\emptyset$\;
  \ForEach{$P'\in \mathcal{P}_q$}{
    $\mathcal{O}'\leftarrow$\texttt{MatchBottleneckOperators}($P'$, $\mathcal{O}$)\;
    \uIf{$|\mathcal{O}'|\neq |\mathcal{O}|$}{
        $\tilde{\mathcal{P}}_q\leftarrow\tilde{\mathcal{P}}_q\cup\{P'\}$\;
    }
  }
  \Return{$\tilde{\mathcal{P}}_q$\;}
  \caption{\texttt{PrunePlansBottleneck}($q$, $\mathcal{P}_q$, $P$, $\alpha$, $M$)}
\label{alg:optimization:bottleneck-sim}
\end{algorithm}



\vspace{-0.5em}
\subsection{Putting It Together}
\label{sec:optimization:combination}

Similarity-based pruning can be combined with timeout-based pruning. 
In the following, we present such an example by combining the progressive timeout mechanism (Algorithm~\ref{alg:optimization:progressive-timeout}) with the bottleneck-based similarity mechanism (Algorithm~\ref{alg:optimization:bottleneck-sim}).
It also serves as our \emph{main proposal} for reducing the evaluation time of candidate plans.

\begin{algorithm}
\small
  \SetAlgoLined
  \KwIn{$q$, the query; $\mathcal{P}_q$, the candidate plans of $q$; $P_q^{\default}$, the default plan of $q$ with known execution time;
  $\alpha$, threshold for determining bottleneck operator; $M$, threshold for selecting the top bottleneck operators.}
  \KwOut{$P_q^*$, the candidate plan with the minimum execution time.}
  \SetAlgoLined
  $P_q^*\leftarrow P_q^{\default}$, $\delta\leftarrow t(P_q^{\default})$\;
  $\mathcal{P}_q\leftarrow$\texttt{PrunePlansBottleneck}($q$, $\mathcal{P}_q$, $P_q^{\default}$, $\alpha$, $M$)\; \label{alg:call-bottleneck-match0}
  \While{$\mathcal{P}_q \neq \emptyset$}
  {
      $P\leftarrow$\texttt{GetAndRemoveFirstPlan}($\mathcal{P}_q$)\;
      Execute $P$ and monitor its execution time $t(P)$\;
      \uIf{$P$ finishes before hitting the timeout $\delta$}{
            $P_q^*\leftarrow P$, $\delta\leftarrow t(P)$\;
            $\mathcal{P}_q\leftarrow$\texttt{PrunePlansBottleneck}($q$, $\mathcal{P}_q$, $P$, $\alpha$, $M$)\; \label{alg:call-bottleneck-match1}
      }
      \Else{
        Terminate the execution of $P_q$ when $t(P_q)\geq\delta$\;
      }
  }
  \Return{$P_q^*$\;}
  \caption{\texttt{TimeoutAndBottleneck}($q$, $\mathcal{P}_q$, $P_q^{\default}$, $\alpha$, $M$)}
\label{alg:optimization:timeout-sim-combine}
\end{algorithm}

Algorithm~\ref{alg:optimization:timeout-sim-combine} presents the details, where we retain the skeleton of the progressive timeout mechanism in Algorithm~\ref{alg:optimization:progressive-timeout} and call Algorithm~\ref{alg:optimization:bottleneck-sim} to prune candidate plans 
(lines~\ref{alg:call-bottleneck-match0} and~\ref{alg:call-bottleneck-match1}).

Similar combinations of timeout-based and similarity-based optimization mechanisms can be composed by following the paradigm exemplified in Algorithm~\ref{alg:optimization:timeout-sim-combine}.
For example, we can combine the progressive (Algorithm~\ref{alg:optimization:progressive-timeout}) or the round-robin (Algorithm~\ref{alg:optimization:roundrobin-timeout}) timeout mechanism with the embedding-based similarity mechanism (Algorithm~\ref{alg:optimization:embedding-sim}) by filtering the remaining candidate plans that are similar to the ones executed.
However, the combination of the round-robin timeout mechanism and the similarity based on bottleneck operators is perhaps not effective in practice, as most query execution engines currently do not return operator-level execution information if a query plan cannot be completed before hitting the timeout.
This is different from the application scenario of the round-robin timeout mechanism \rone{when it was proposed by Giannakouris and Trummer~\cite{GiannakourisT25}}, where it was used to evaluate (index) configurations suggested by LLMs and all queries that can be completed in the current round do not need to be run again in the next round.
The round-robin timeout mechanism would have worked better in the context of CMP tuning, if a ``pause and resume'' query (plan) execution mechanism could be supported to avoid restarting query execution from scratch.



\vspace{-0.5em}
\section{Experimental Evaluation}
\label{sec:eval}

We conduct extensive experimental evaluations of the proposed CMP tuning algorithms using publicly available benchmark workloads and report the evaluation results in this section.

\begin{table}[t] 
\caption{Summary of database and workload statistics.}
\vspace{-1em}
\small
\centering
\begin{tabular}{l|r|r|r|r|r}
\hline
\textbf{Name} & \textbf{DB Size} & \textbf{\#Queries} & \textbf{\#Tables} & \textbf{\#Joins} 
& \textbf{\#Scans}\\
\hline
\textbf{TPC-H} & \emph{sf}=1, 10 & 22 & 8 & 2.8 
& 3.7 \\
\textbf{TPC-DS} & \emph{sf}=1, 10 & 99 & 24 & 7.7 
& 8.8 \\
\textbf{JOB} & 9.2GB & 113 & 21 & 7.7 & 8.7\\ 
\textbf{STATS} & 223MB & 146 & 8 & 3.3 & 4.3\\
\hline
\end{tabular}
\label{tab:databases}
\end{table}

\vspace{-0.5em}
\subsection{Experimental Settings}

We run our experiments with PostgreSQL 17.4 and Ubuntu 22.04 LTS running on a workstation configured with 40 Intel CPUs at 2.6 GHz and 384 GB memory.
We clean the PostgreSQL buffer pool and the file system cache before executing each query plan.
\rone{That is, the queries were run from a ``cold start'' for better reproducibility.
To ensure this, we restart the PostgreSQL server and clean the Linux OS page cache by executing \textsf{/proc/sys/vm/drop\_caches}.}
\rtwo{The workstation is also equipped with a 24 TB HDD and a 400 GB SSD manufactured by HP.}
\shepherd{The databases are located on the HDD array, which consists of HPE EG1200FDJYT 1.2 TB 10.5K RPM SAS drives managed by an HPE Smart Array Gen9 RAID controller.}

\vspace{-0.5em}
\paragraph*{Databases and Workloads}

We use four publicly available benchmark workloads in our experiments: (1) \textbf{TPC-H}, (2) \textbf{TPC-DS}, (3) the ``Join Order Benchmark'' (\textbf{JOB})~\cite{LeisGMBK015}, and (4) the ``Cardinality Estimation Benchmark'' (\textbf{STATS})~\cite{HanWWZYTZCQPQZL21}.
Table~\ref{tab:databases} summarizes some basic properties of the workloads, in terms of schema complexity (e.g., the number of tables), query complexity (e.g., the average number of joins and table scans contained by a query), and database/workload size.
We also use different scaling factors (sf), namely sf = 1 and sf = 10, for \textbf{TPC-H} and \textbf{TPC-DS} to test the efficacy of our proposed \rthree{techniques} for CMP tuning at different database scales.

\vspace{-0.5em}
\paragraph*{Cost Model Parameters}
We focus on tuning the CMPs for PostgreSQL that are listed in Table~\ref{tab:pgsql-cost-units}.
From the viewpoint of the query optimizer, only the \emph{relative} costs of the plans matter because the query optimizer only needs to \emph{compare} plan costs when selecting the best plan.
Therefore, we omit tuning the ``\emph{seq\_page\_cost}'' CMP by fixing its value to 1.
This simplifies the tuning problem from a five-dimensional plan space to a four-dimensional plan space, which is meaningful in practice as it saves significant time for plan exploration.
Each remaining CMP $c$ is associated with a range $[0.1\times c_d, 10\times c_d]$ where $c_d$ is the default value of $c$ (as shown in Table~\ref{tab:pgsql-cost-units}).
The same setup has been used by previous work~\cite{WuW24}.

\vspace{-0.5em}
\paragraph*{Baselines}
We compare our deterministic and optimal tuning (\textbf{DOT}) approach with the following baselines: (1) \textbf{Default}, which simply uses the default query plan without tuning the CMPs; (2) \textbf{RS}, which uses \emph{random search}~\cite{RandomSearch} to explore the plan space; and (3) \textbf{BO}, which uses \emph{SMAC}~\cite{SMAC,SMAC3}, a state-of-the-art \emph{Bayesian optimization} approach.
Both \textbf{RS} and \textbf{BO} were used in previous work~\cite{WuW24} with proved efficacy of tuning CMPs.
We run \textbf{RS} and \textbf{BO} 100 times for each query and collect all unique plans found.

\vspace{-0.5em}
\paragraph*{Performance Metrics}
We use the following metrics to evaluate the performance of \textbf{DOT}: (1) the (percentage) improvement of the best plan found compared to the default plan, in terms of execution time, and (2) the overall time spent on tuning the CMPs.


\begin{table}[t]
\caption{Aggregated workload execution time (in minutes) with the best plan found for each query. 
\rone{We excluded queries with default plans that cannot finish in 20 minutes. }}
\vspace{-1em}
\small
    \centering
    \begin{tabular}{l|rrrrr}
    \toprule
        {\bf Workload} & {\bf \#Queries} & {\bf Default} & {\bf RS} & {\bf BO} & {\bf DOT}\\
    \midrule
         {\bf TPC-H} (sf = 1) & 22 & 7.28 & 4.82	& 4.57 & 4.57 \\
         {\bf TPC-H} (sf = 10) & 19 & 17.22 & 14.29 & 14.42 & 9.92 \\
         {\bf TPC-DS} (sf = 1) & 95 & 7.42 & 3.96 & 4.23 & 3.80 \\
         {\bf TPC-DS} (sf = 10) & 95 & 171.68 & 81.30 & 75.96 & 72.79 \\
         {\bf JOB} & 113 & 138.75 & 73.81 & 74.00 & 73.08 \\
         {\bf STATS} & 143 & 13.99 & 10.51 & 10.72 & 10.13 \\
    \bottomrule
    \end{tabular}
    \label{tab:improve:workload}
\end{table}

\vspace{-0.5em}
\subsection{Improvement over Baselines}
\label{sec:eval:exec-time}

Table~\ref{tab:improve:workload} summarizes the aggregated execution time of the workload using the best plan found for each query.
\rone{Here we have excluded queries with default plans taking more than 20 minutes to complete. 
For example, three \textbf{TPC-H} (sf = 10) queries 8, 9, and 20 finish with time longer than 20 minutes, and they cannot be further improved after CMP tuning.}
\shepherd{For these queries, PostgreSQL selected plans dominated by index-based nested-loop joins. Under our cold-cache configuration, the resulting random disk accesses became the primary execution bottleneck.}

We observe that \textbf{RS}, \textbf{BO}, and \textbf{DOT} all significantly outperform \textbf{Default}.
Moreover, \textbf{DOT} further significantly outperforms \textbf{RS} and \textbf{BO} for several workloads, such as \textbf{TPC-H} and \textbf{TPC-DS} with scaling factor 10.
For example, \textbf{DOT} improves \textbf{RS} and \textbf{BO} by 30\% for \textbf{TPC-H} (sf = 10).

The improvement of \textbf{DOT} over \textbf{RS} and \textbf{BO} for certain workloads, such as \textbf{JOB} and $\textbf{STATS}$, may not seem impressive.
However, this is due to a bias introduced by summing up query execution time at workload level: long-running queries dominate the aggregated workload execution time.
To remove this bias, Figure~\ref{fig:improve:query} presents the distribution of the (percentage) improvement at the query-level in terms of the individual query execution time for each workload.
We observe a visible increase for both \textbf{JOB} and \textbf{STATS} by comparing the \emph{median} and other percentiles of the query-level improvement distribution of \textbf{DOT} with those of \textbf{RS} and \textbf{BO}, as shown by Figures~\ref{fig:improve:query:job} and~\ref{fig:improve:query:stats}.
\rthree{Figure~\ref{fig:improve:singlequery:tpch10} further presents the improvement observed in each individual query of \textbf{TPC-H} (sf = 10). 
We observe significantly increased query-level improvements by using \textbf{DOT} compared to using \textbf{RS} or \textbf{BO}.
For example, 
the (percentage) improvement given by \textbf{DOT} over \textbf{Default} is 81\% for query 11, whereas both \textbf{RS} and \textbf{BO} show only 21\% improvement over \textbf{Default}.
We also observe similar advantages/benefits of \textbf{DOT} for queries from the other workloads (see Appendix~\ref{appendix:results:query-level-impr}).}


\begin{figure}[t]
\centering
\subfigure[\textbf{TPC-H}, sf = 1]{\label{fig:improve:query:tpch1}
    \includegraphics[width=0.31\columnwidth]{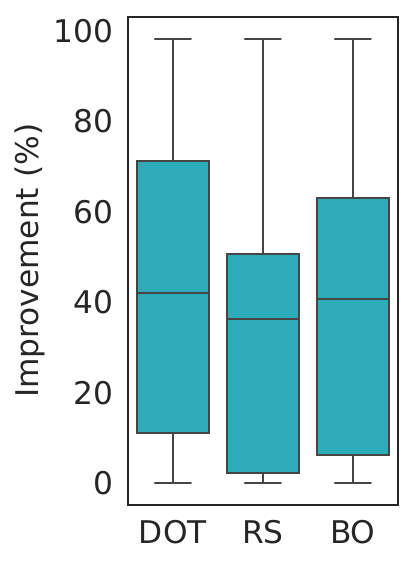}}
\subfigure[\textbf{TPC-H}, sf = 10]{\label{fig:improve:query:tpch10}
    \includegraphics[width=0.31\columnwidth]{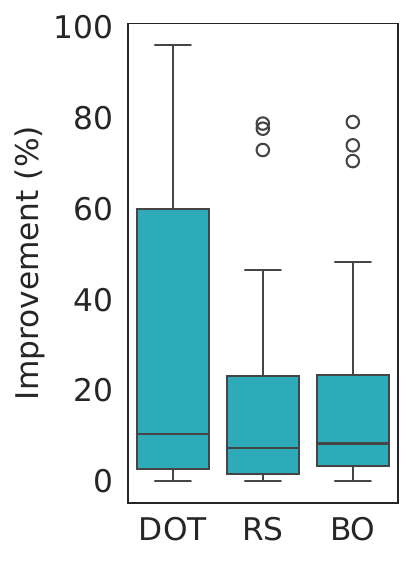}}
\subfigure[\textbf{TPC-DS}, sf = 1]{\label{fig:improve:query:tpcds1}
    \includegraphics[width=0.31\columnwidth]{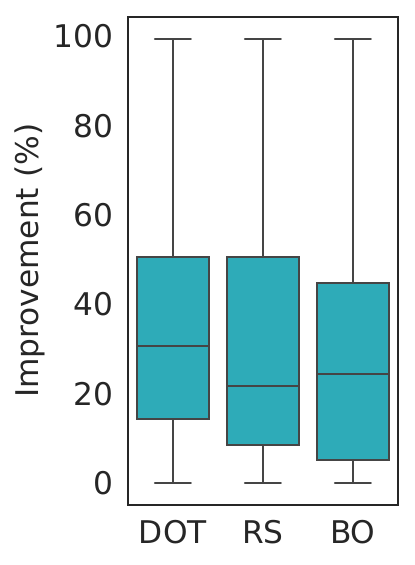}}
\subfigure[\textbf{TPC-DS}, sf = 10]{\label{fig:improve:query:tpcds10}
    \includegraphics[width=0.31\columnwidth]{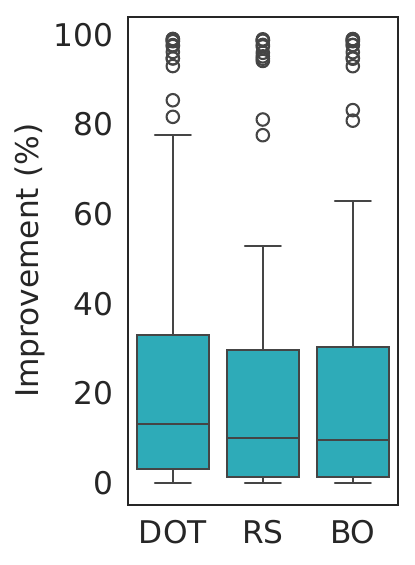}}
\subfigure[\textbf{JOB}]{\label{fig:improve:query:job}
    \includegraphics[width=0.31\columnwidth]{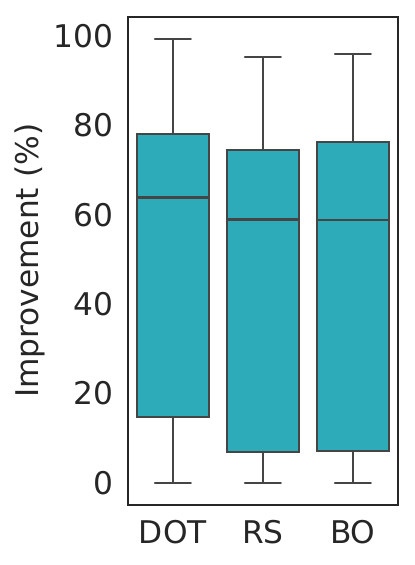}}
\subfigure[\textbf{STATS}]{\label{fig:improve:query:stats}
    \includegraphics[width=0.31\columnwidth]{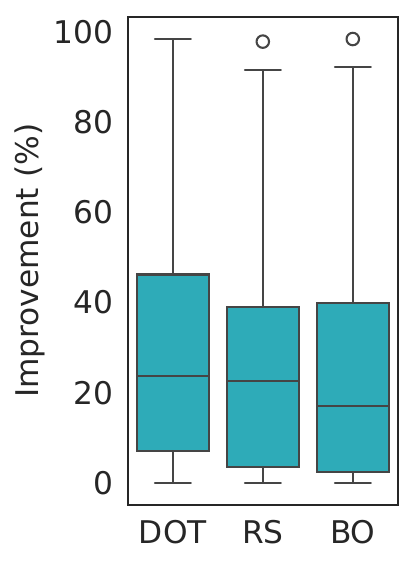}}
\vspace{-1.5em}
\caption{Distribution of (percentage) improvement of query-level execution time by \textbf{RS}, \textbf{BO}, and \textbf{DOT} over \textbf{Default}. Each box represents the \emph{interquartile range} (IQR) that spans from the 25th percentile (Q1) to the 75th percentile (Q3). The horizontal line inside each box represents the \emph{median}. Whiskers extend to the most extreme data points that are not outliers. Specifically, the lower whisker represents the smallest value $\geq$ Q1 $-$ 1.5 $\times$ IQR, whereas the upper whisker represent the largest value $\leq$ Q3 $+$ 1.5 $\times$ IQR. Points plotted individually (as dots) outside the whiskers are \emph{outliers}.}
\label{fig:improve:query}
\vspace{0.5em}
\end{figure}





\begin{figure}
\centering
\includegraphics[width=1.0\columnwidth]{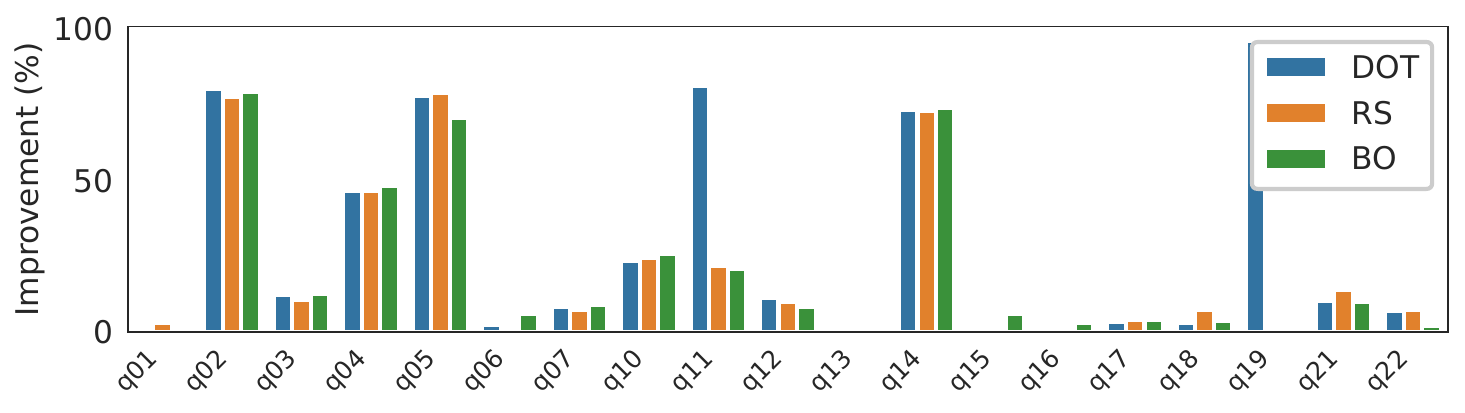}
\vspace{-1.5em}
\caption{\rthree{Improvements over \textbf{Default} by \textbf{RS}, \textbf{BO}, and \textbf{DOT} for individual queries from TPC-H (sf = 10).}}
\label{fig:improve:singlequery:tpch10}
\end{figure}

\subsubsection{\rthree{Query Groups with Different Number of Joins}}

\begin{figure*}
\centering
\subfigure[\textbf{TPC-H}, sf = 1]{ \label{fig:join-group:tpch1}
    \includegraphics[width=0.3\textwidth]{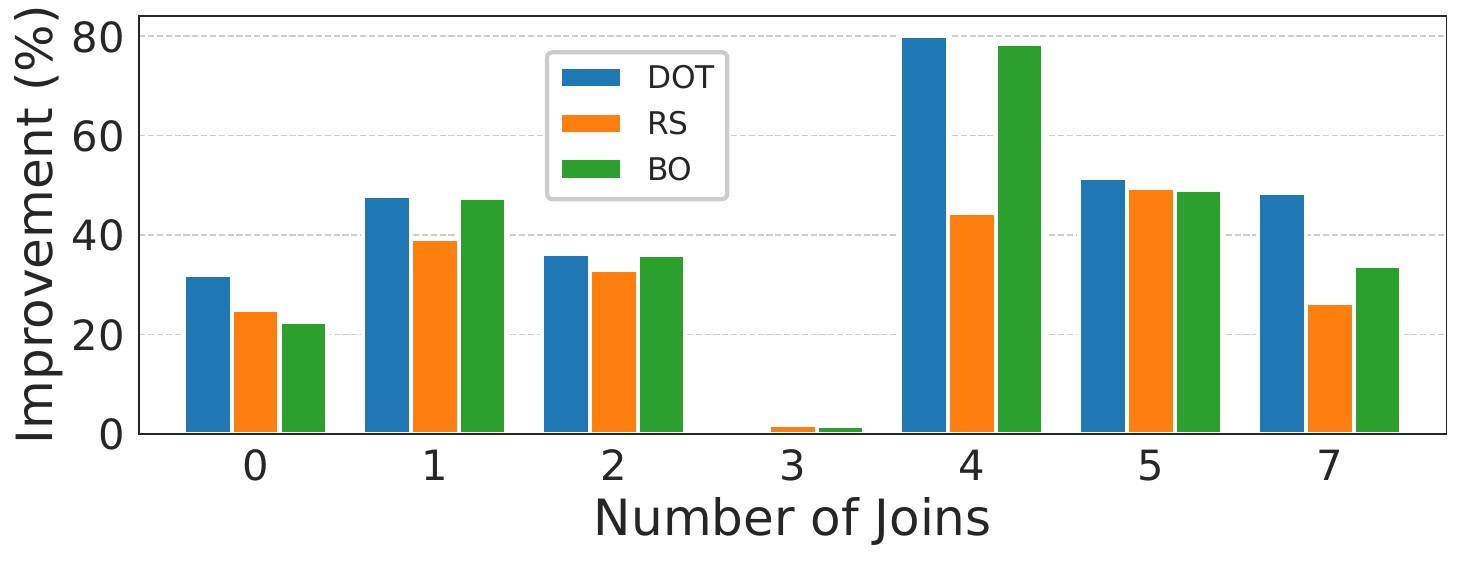}}
\subfigure[\textbf{TPC-H}, sf = 10]{ \label{fig:join-group:tpch10}
    \includegraphics[width=0.3\textwidth]{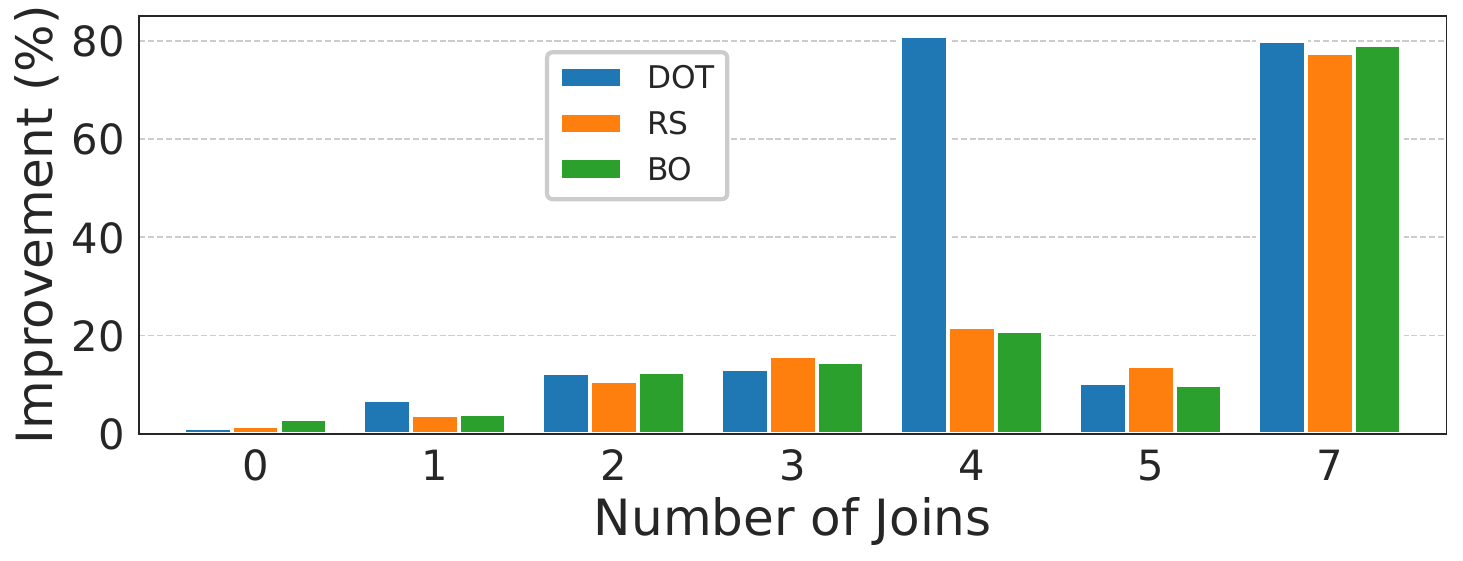}}
\subfigure[\textbf{TPC-DS}, sf = 1]{ \label{fig:join-group:tpcds1}
    \includegraphics[width=0.3\textwidth]{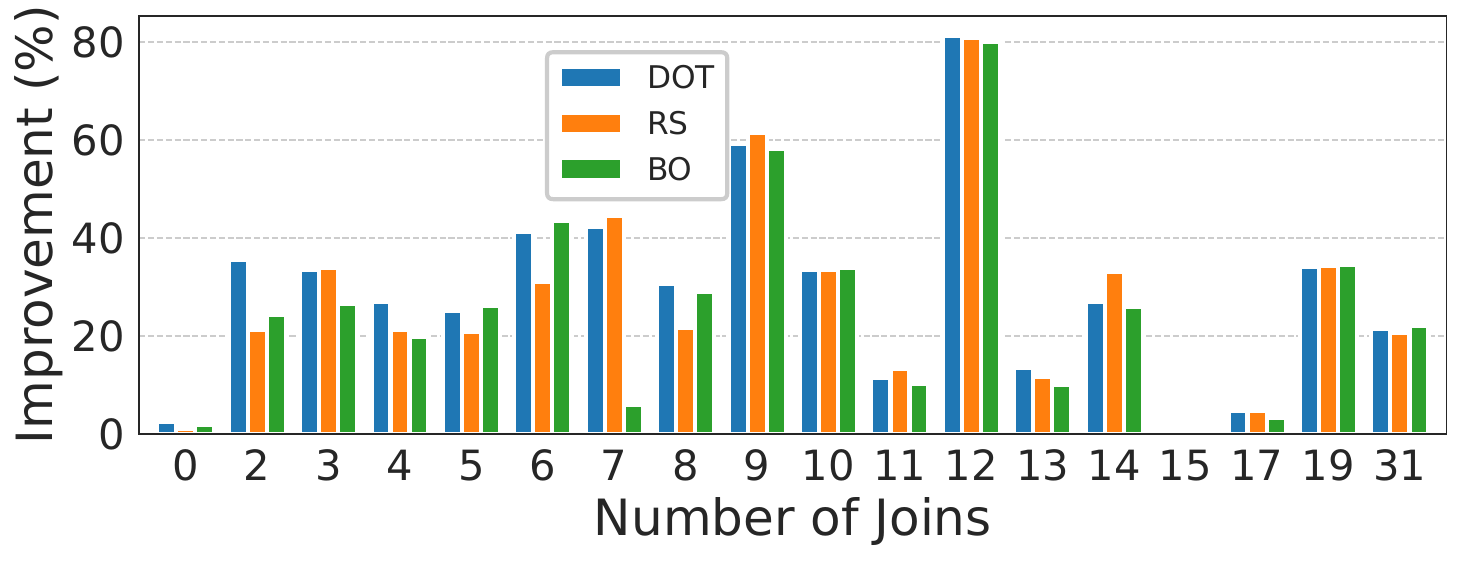}}
\subfigure[\textbf{TPC-DS}, sf = 10]{ \label{fig:join-group:tpcds10}
    \includegraphics[width=0.3\textwidth]{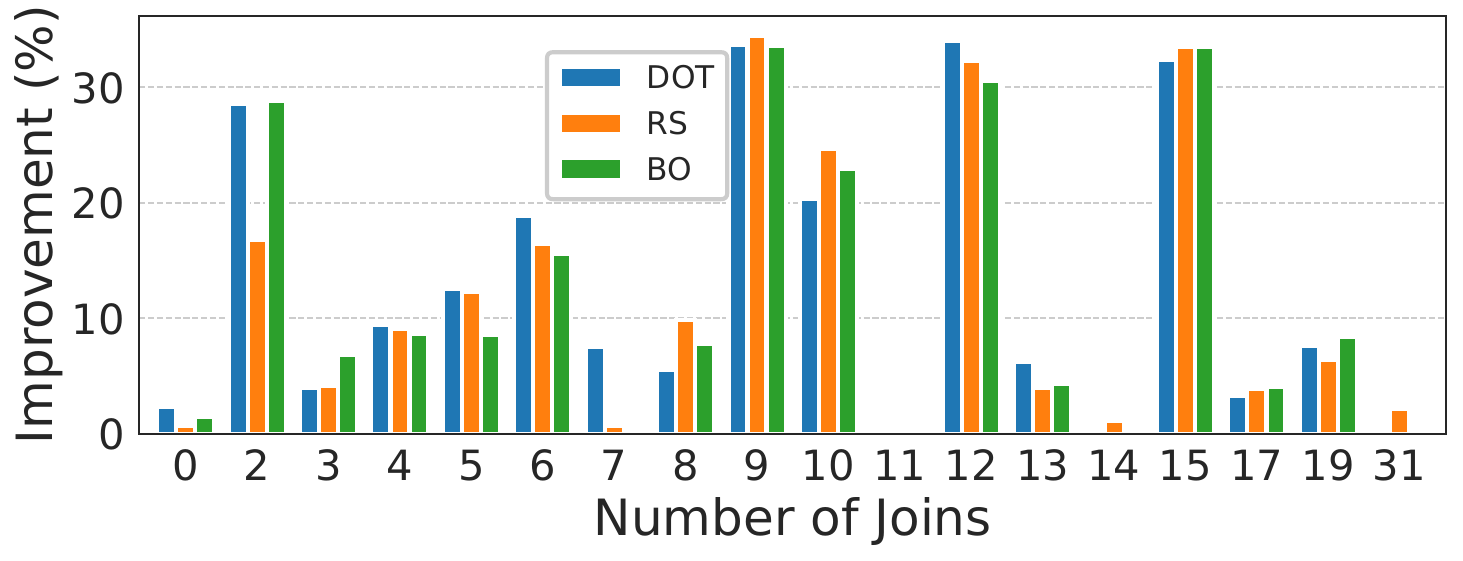}}
\subfigure[\textbf{JOB}]{ \label{fig:join-group:job}
    \includegraphics[width=0.3\textwidth]{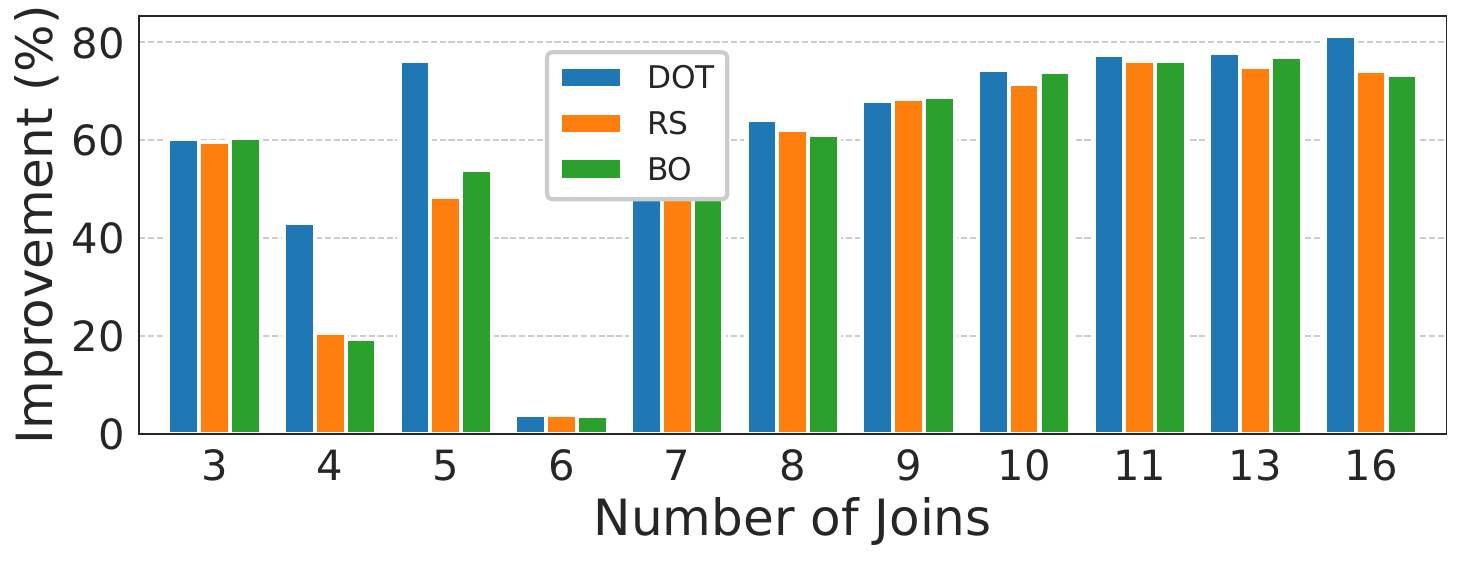}}
\subfigure[\textbf{STATS}]{ \label{fig:join-group:stats}
    \includegraphics[width=0.3\textwidth]{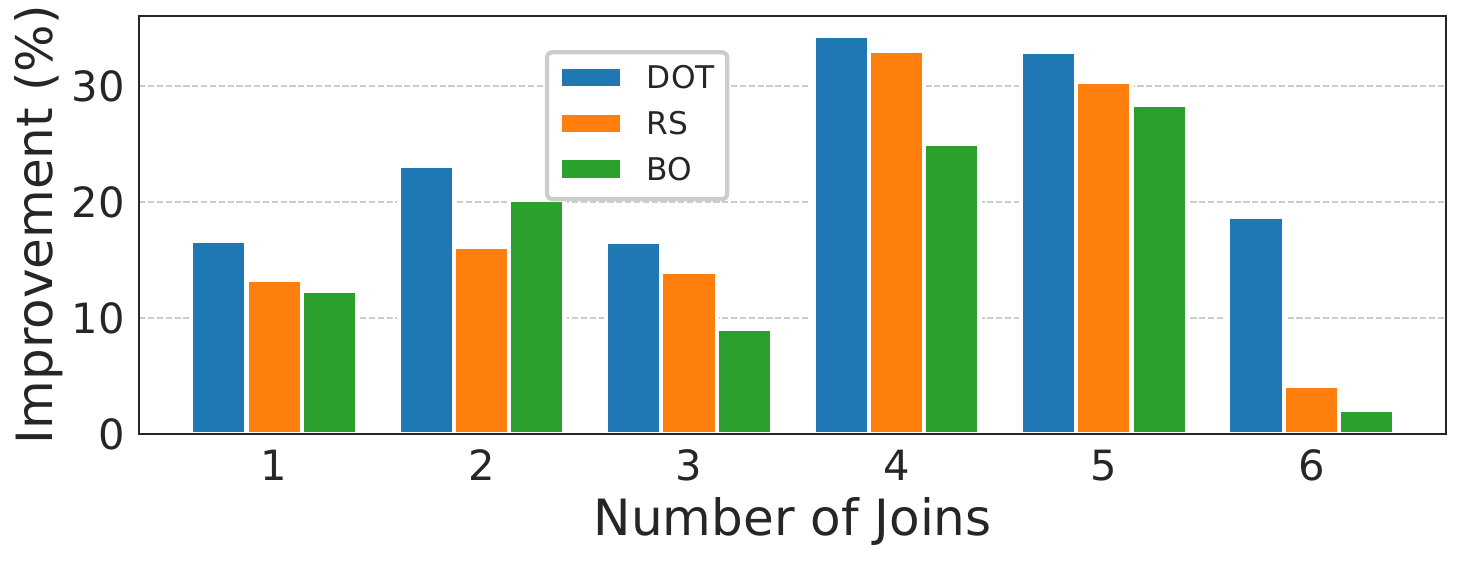}}
\vspace{-1.5em}
\caption{\rthree{Median improvements over \textbf{Default} by \textbf{RS}, \textbf{BO}, and \textbf{DOT} for individual queries grouped by the number of joins.}}
\label{fig:join-group}
\vspace{-1em}
\end{figure*}

\rthree{
We further analyze the benefits of \textbf{DOT} over the other baselines for queries with different complexity.
Specifically, we group queries by their number of joins and report the median improvement in each query group.
Figure~\ref{fig:join-group} presents the results.
We observe some interesting patterns that are workload dependent.
For example, the main benefit of the proposed approach for \textbf{TPC-H} (sf = 10) comes from queries with four joins.
The benefits for the other workloads are more evenly distributed, ranging from queries with only two joins (e.g., \textbf{TPC-DS}) to queries with six joins (e.g., \textbf{JOB}).
However, it is not true that queries with more joins can benefit more from \textbf{DOT}, even though the search spaces of candidate plans become larger. Since tuning CMPs comes with nontrivial overheads (even with the optimization techniques presented in Section~\ref{sec:optimization-tuning-time}), it is an interesting direction to develop prediction techniques to forecast whether a query is worth CMP tuning, for the sake of avoiding unnecessary tuning. 
We leave this for future work.
}

\subsubsection{\rtwo{Hard Cases}}

\begin{figure}
\centering
\subfigure[\textbf{TPC-H} (sf = 10), Query 18]{ \label{fig:analysis:hard-cases:tpch10}
    \includegraphics[width=0.48\columnwidth]{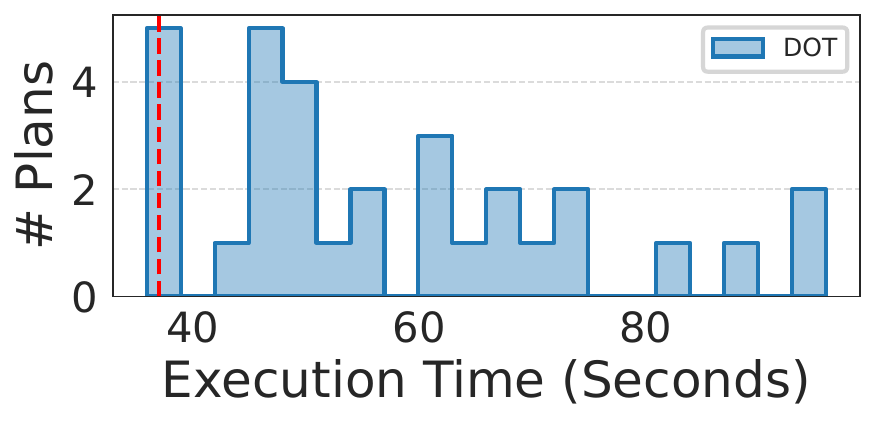}}
\subfigure[\textbf{JOB}, Query 23c]{ \label{fig:analysis:hard-cases:stat}
    \includegraphics[width=0.48\columnwidth]{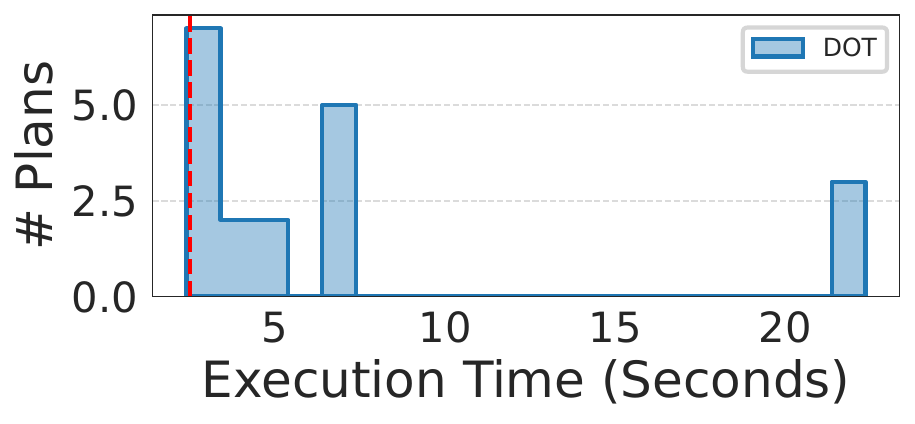}}
\vspace{-1.5em}
\caption{\rtwo{Execution time distribution of the candidate plans found by DOT for example ``hard'' queries. The red dashed line represents the execution time of the \textbf{Default} plan.}}
\label{fig:analysis:hard-cases}
\end{figure}

\rtwo{Although many queries can be significantly improved by \textbf{DOT}, there are queries that cannot be improved. We now take a closer look at such ``hard cases.'' In general, the hard cases fall into two classes: (1) lack of query plan diversity, and (2) the default plan already being the best.
Examples of the first class include the \textbf{TPC-H} query 1 and the \textbf{TPC-DS} query 41, which requires only scans over a single table and some aggregates on top of the scans.
It is clear that \textbf{DOT} (and other CMP tuning techniques) cannot help in such situations.
The second class typically includes queries where the query optimizer already does a good job.
For example, the degree of cardinality estimation (CE) errors is small, or CE errors have similar impacts on all candidate plans in the search space so that the ranking of the plans is not affected.
Figure~\ref{fig:analysis:hard-cases} presents two such examples: although \textbf{DOT} find many candidate plans, none of them can outperform the default plan.}

\vspace{-0.5em}
\subsection{Plan Space Exploration}
\label{sec:eval:exploration}

To understand the efficacy of the BSP-based plan space exploration strategy employed by \textbf{DOT}, we present an in-depth study of the query plans obtained.
We use two metrics for this study: (1) the number of distinct plans found in the exploration of the plan space, and (2) the number of query optimizer calls made for the exploration, which reflects the time spent on exploring the plan space.


\begin{figure*}
\centering
\subfigure[\textbf{TPC-H}, sf = 10]{ \label{fig:space:plans:tpch10}
    \includegraphics[width=0.28\textwidth]{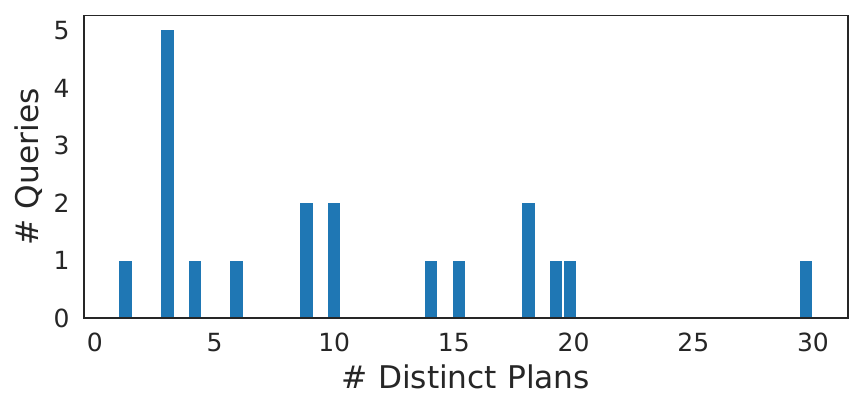}}
\subfigure[\textbf{TPC-DS}, sf = 10]{ \label{fig:space:plans:tpcds10}
    \includegraphics[width=0.28\textwidth]{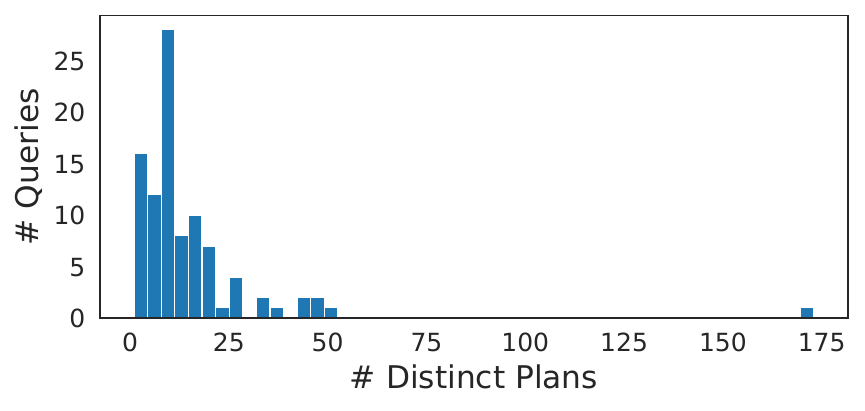}}
\subfigure[\textbf{JOB}]{ \label{fig:space:plans:job}
    \includegraphics[width=0.28\textwidth]{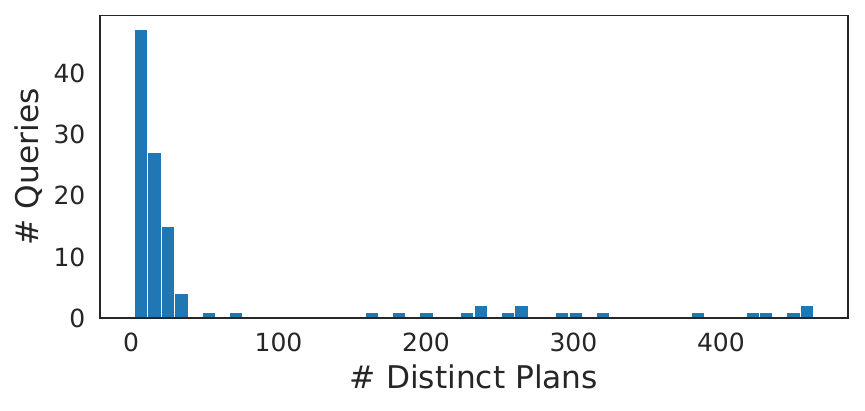}}
\vspace{-1.5em}
\caption{\rthree{Distribution of the number of distinct query plans found during plan space exploration by \textbf{DOT}.}}
\label{fig:space:plans}
\vspace{-1em}
\end{figure*}

\vspace{-0.5em}
\subsubsection{\rthree{Plans and Configurations Explored}}
\label{sec:eval:exploration:plan-config}

Figure~\ref{fig:space:plans} presents the distribution of queries based on the number of distinct plans found by \textbf{DOT} during plan space exploration.
Interestingly, we observe very different distributions for different workloads.
\rthree{For example, the distribution observed for \textbf{TPC-H} (sf = 10) is almost uniform, with the number of distinct plans ranging from 1 to 30.
Meanwhile, the distributions observed for \textbf{TPC-DS} and \textbf{JOB} are skewed and
have long tails --- most queries have only a dozen of plans, whereas a few outliers have hundreds of plans.}
This highlights the opportunities missed by \textbf{RS} and \textbf{BO} with 100 iterations, which can identify at most 100 plans.
It is possible that \textbf{RS} and \textbf{BO} can perform better with more iterations; \emph{however, it is difficult to tell how many iterations are required by \textbf{RS} or \textbf{BO} for a given query.}
\textbf{DOT} addresses this dilemma faced by \textbf{RS} and \textbf{BO} \emph{in an automated adaptive manner} by exploiting the linearity properties of the plan cost functions used by the query optimizer.
\rthree{The results from the other workloads are included in Appendix~\ref{appendix:results:analysis-search-space}.}

\begin{figure*}
\centering
\subfigure[\textbf{TPC-H}, sf = 10]{ \label{fig:space:configs:tpch10}
    \includegraphics[width=0.28\textwidth]{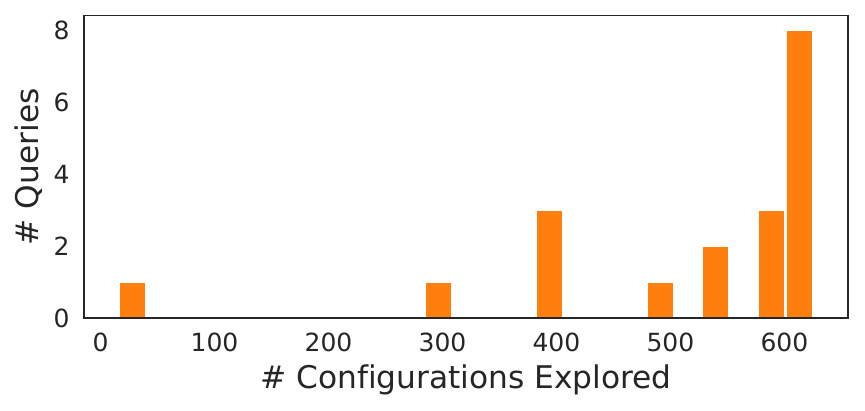}}
\subfigure[\textbf{TPC-DS}, sf = 10]{ \label{fig:space:configs:tpcds10}
    \includegraphics[width=0.28\textwidth]{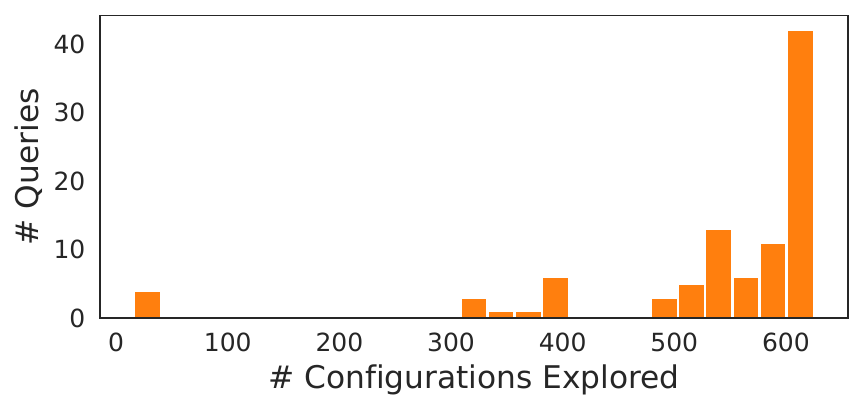}}
\subfigure[\textbf{JOB}]{ \label{fig:space:configs:job}
    \includegraphics[width=0.28\textwidth]{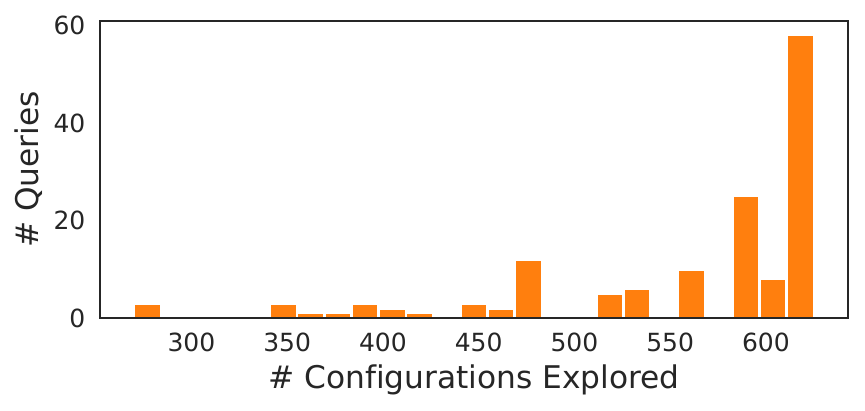}}
\vspace{-1.5em}
\caption{\rthree{Distribution of the number of CMP configurations explored (i.e., the number of query optimizer calls) by \textbf{DOT}.}}
\label{fig:space:configs}
\vspace{-1em}
\end{figure*}

\begin{figure*}
\centering
\subfigure[\textbf{TPC-H}, sf = 10]{ \label{fig:plan-config-per-query:tpch10}
    \includegraphics[width=0.28\textwidth]{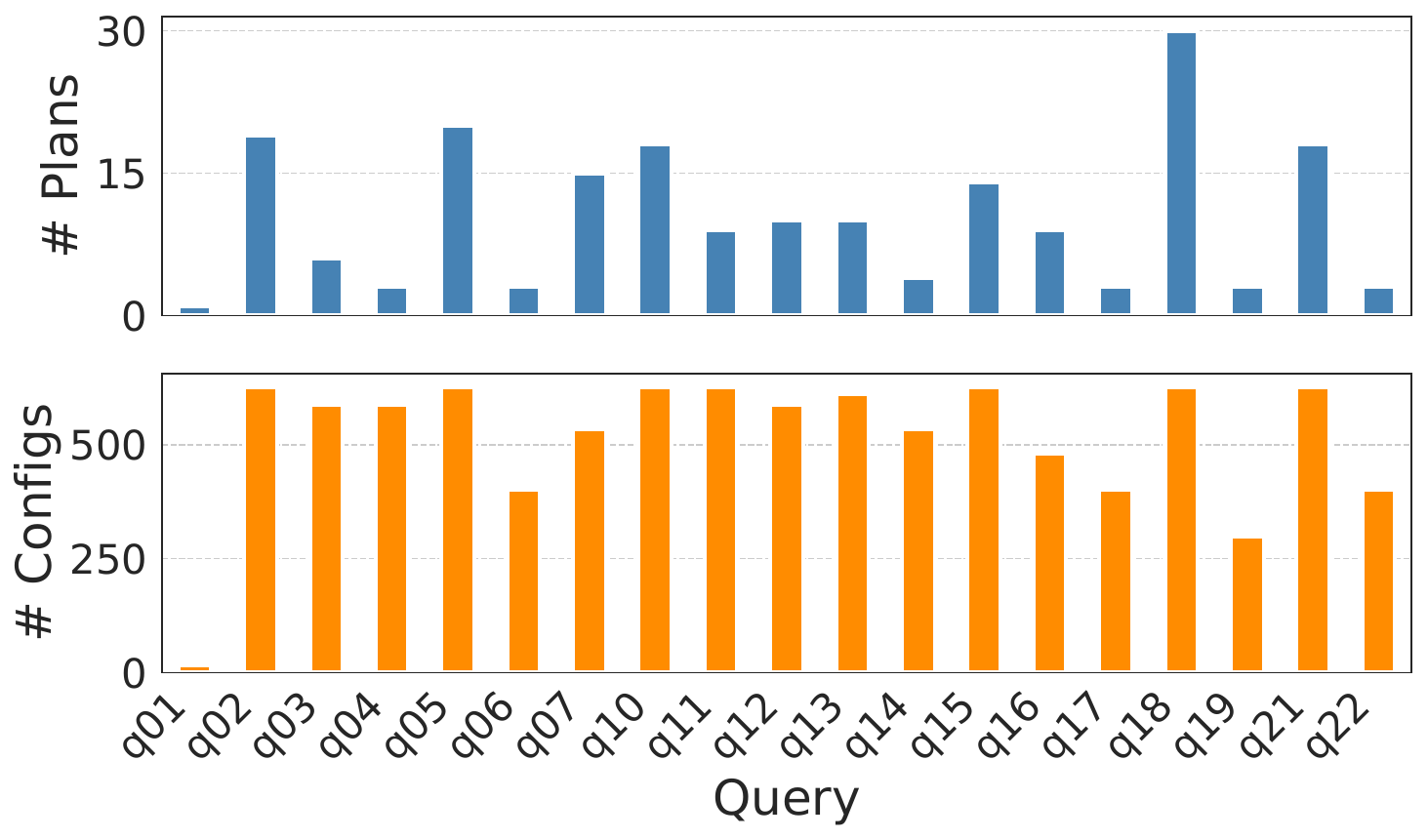}}
\subfigure[\textbf{TPC-DS}, sf = 10, Part I]{ \label{fig:plan-config-per-query:tpcds10:1}
    \includegraphics[width=0.28\textwidth]{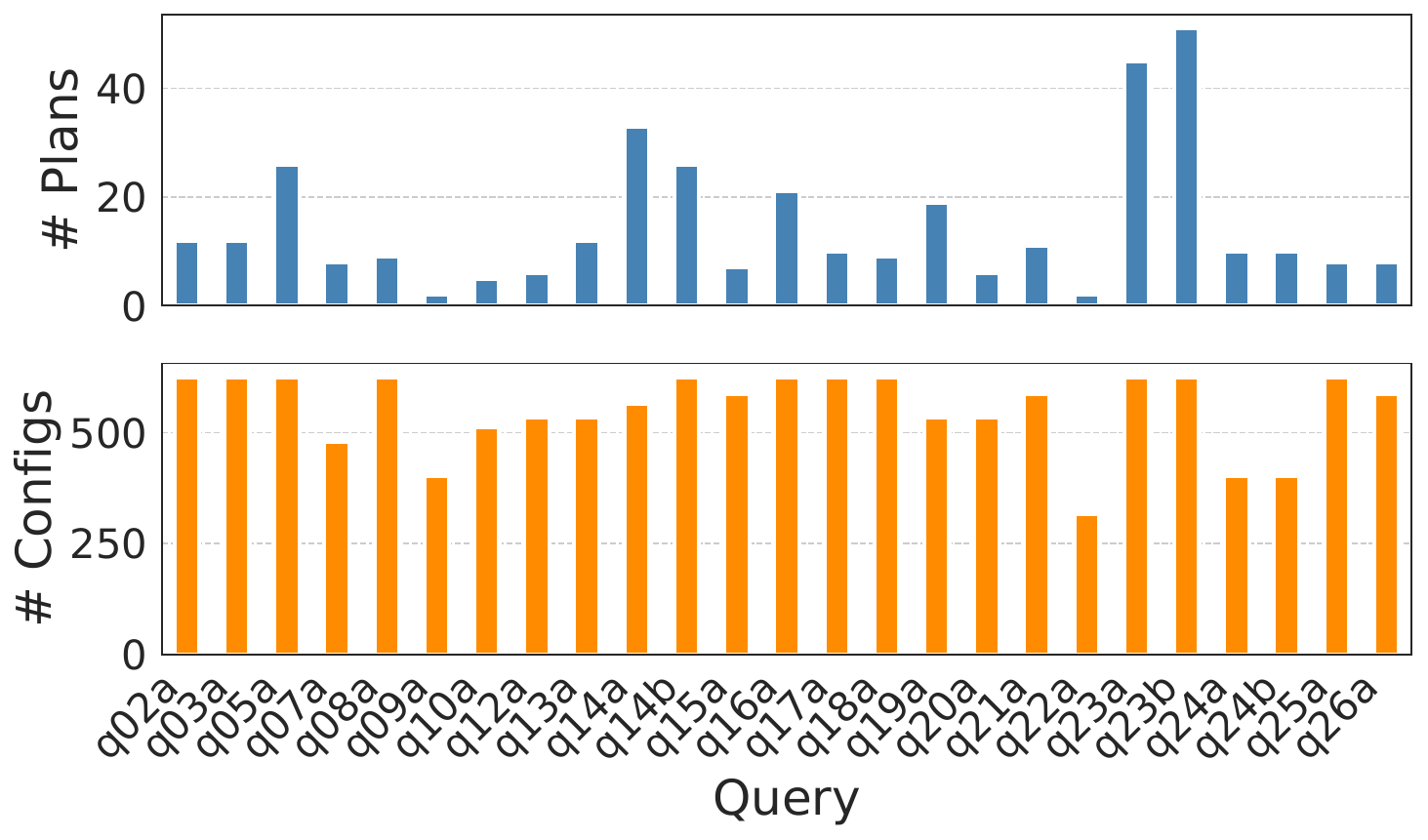}}
\subfigure[\textbf{JOB}, Part I]{ \label{fig:plan-config-per-query:job:1}
    \includegraphics[width=0.28\textwidth]{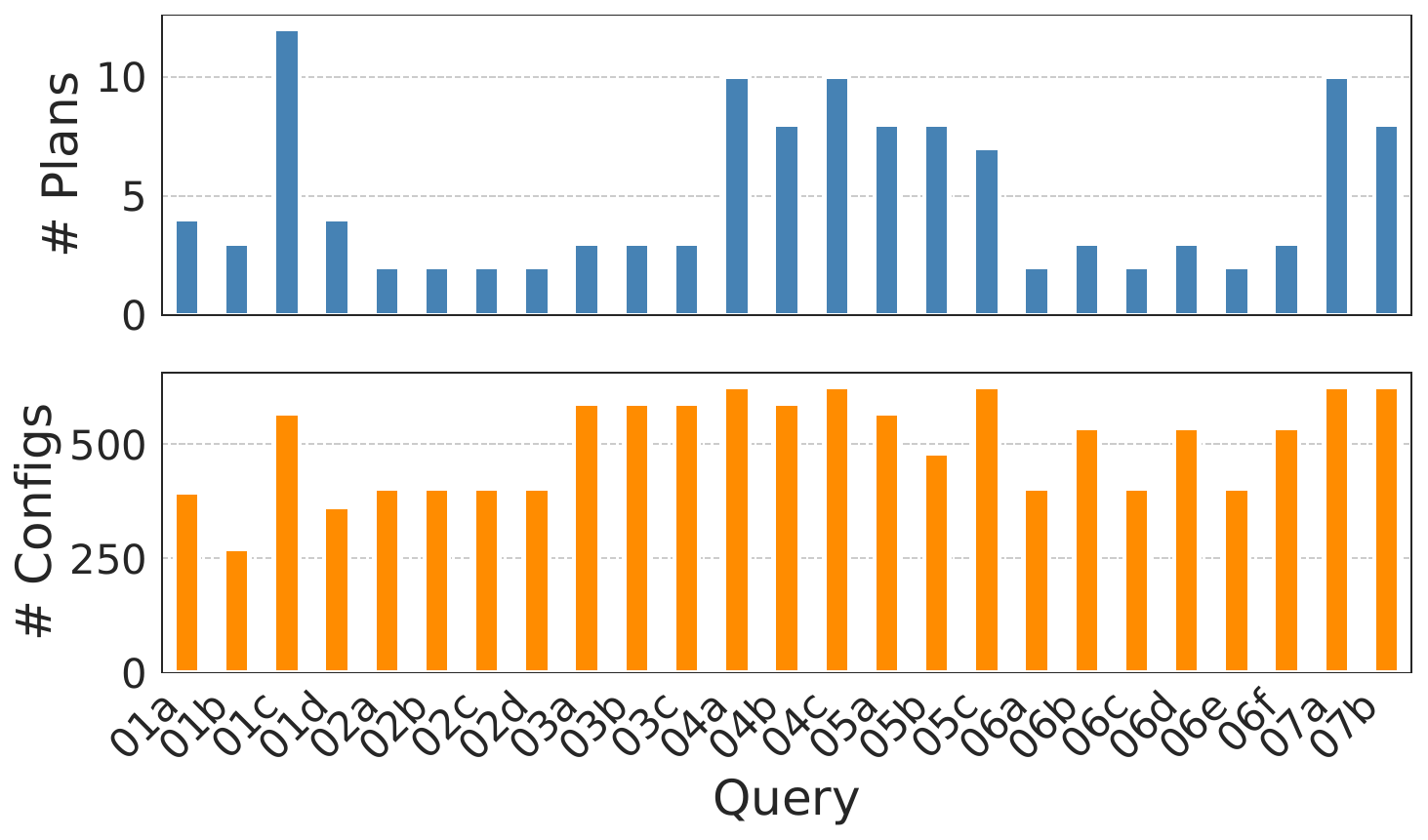}}
\vspace{-1.5em}
\caption{\rthree{The number of distinct query plans found and the number of CMP configurations explored (i.e., the number of query optimizer calls) during plan space exploration by \textbf{DOT} for individual queries.}}
\label{fig:plan-config-per-query}
\vspace{-1em}
\end{figure*}

Figure~\ref{fig:space:configs} presents the distribution for the number of CMP configurations explored by \textbf{DOT}, which is the same as the number of query optimizer calls made.
We observe that the majority of the queries have a rich space for exploration with hundreds of query optimizer calls.
Comparison of Figure~\ref{fig:space:configs} with Figure~\ref{fig:space:plans} reiterates the point we made in Section~\ref{sec:explore:multi-dim:optimality}: the exploration strategy employed by \textbf{DOT} is not optimal in terms of the number of query optimizer calls, which can result in more query optimizer calls than the number of distinct query plans.
We leave further optimization of the plan space exploration strategy for future work.

\rthree{Figure~\ref{fig:plan-config-per-query} further presents the number of plans and configurations explored for individual queries. Due to space constraints, only the first few queries of \textbf{TPC-DS} and \textbf{JOB} are shown and the rest are in Appendix~\ref{appendix:results:analysis-search-space}.
We can observe that, although \textbf{DOT} explored hundreds of configurations for most queries, the number of distinct plans found is much less and differs dramatically across queries.}

\subsubsection{\rtwo{Diversity of CMPs}}
\label{sec:eval:exploration:cmp-diversity}

\begin{figure*}
\centering
\subfigure[\textbf{TPC-H}, sf = 10]{ \label{fig:cmp-diversity:tpch10}
    \includegraphics[width=0.28\textwidth]{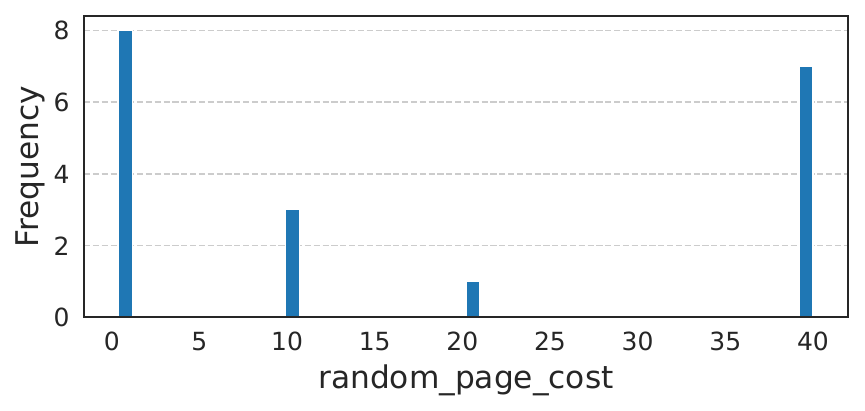}}
\subfigure[\textbf{TPC-DS}, sf = 10]{ \label{fig:cmp-diversity:tpcds10}
    \includegraphics[width=0.28\textwidth]{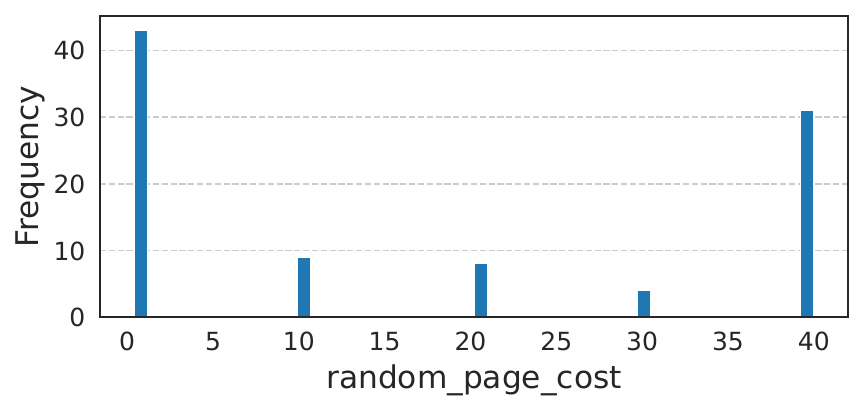}}
\subfigure[\textbf{JOB}]{ \label{fig:cmp-diversity:job}
    \includegraphics[width=0.28\textwidth]{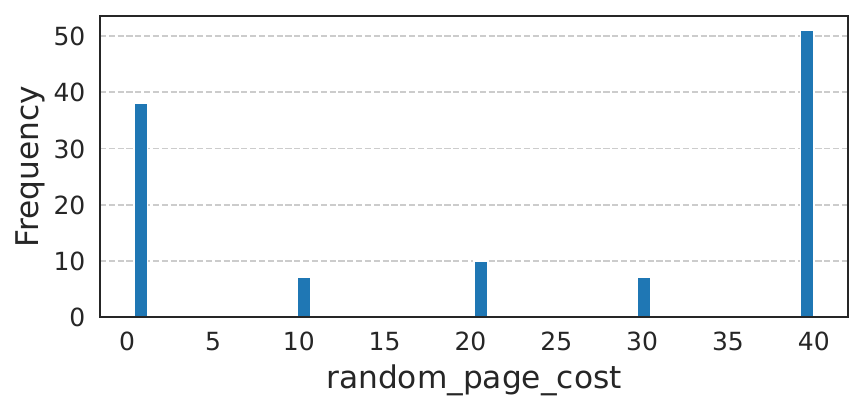}}
\vspace{-1.5em}
\caption{\rtwo{Distribution of values of the CMP \texttt{random\_page\_cost} used by the best query plans found by \textbf{DOT}.}}
\label{fig:cmp-diversity}
\vspace{-1em}
\end{figure*}

\rtwo{
It is an interesting question regarding the \emph{diversity} of the CMPs used by the best plans found by \textbf{DOT}.
To understand this better, Figure~\ref{fig:cmp-diversity} presents the distributions of the values of the CMP \texttt{random\_page\_cost} for different workloads.
This CMP has arguably the highest impact on the estimated plan cost under the ``cold start'' setup.
We observe that the distribution of \texttt{random\_page\_cost} is quite heterogeneous.
That is, the best plans found for different queries use very different values for this CMP.
This implies that a simple one-time calibration of the CMPs is not sufficient to find these best query plans.
The results from the other workloads are included in Appendix~\ref{appendix:results:analysis-cmpt-diversity}.}

\vspace{-0.5em}
\subsubsection{Case Study}
\label{sec:eval:exploration:case-study}

\begin{figure}
\centering
\subfigure[\textbf{TPC-H} (sf = 10), Query 11]{ \label{fig:space:casestudy:tpch10}
    \includegraphics[width=0.48\columnwidth]{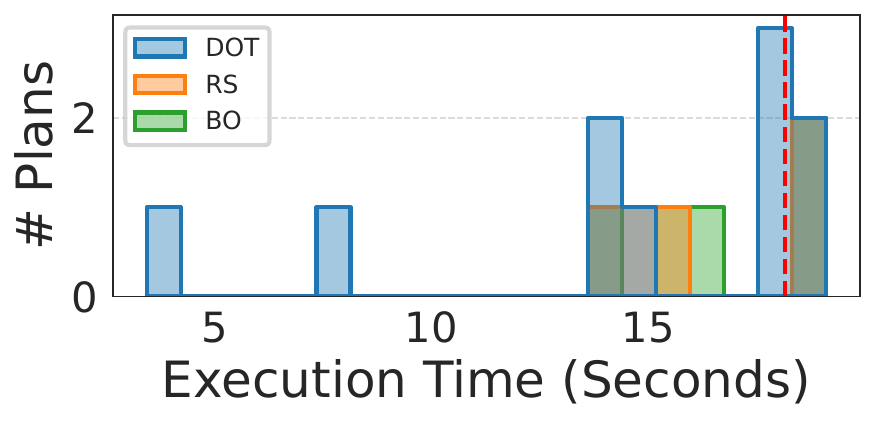}}
\subfigure[\textbf{STATS}, Query 47b]{ \label{fig:space:casestudy:stats}
    \includegraphics[width=0.48\columnwidth]{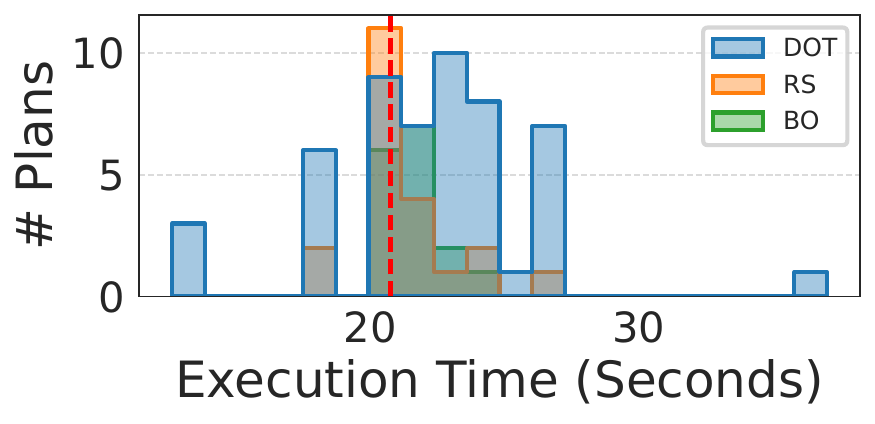}}
\vspace{-1.5em}
\caption{Execution time distribution of the candidate plans found by RS, BO and DOT. 
The red dashed line represents the execution time of the \textbf{Default} plan.}
\vspace{0.5em}
\label{fig:space:casestudy}
\end{figure}



\begin{figure}[t]
\centering
\subfigure[\textbf{TPC-H} (sf = 10), Query 11]{ \label{fig:casestudy:tpch10:q11}
    \includegraphics[clip, trim=1cm 6cm 1cm 5cm, width=1.0\columnwidth]{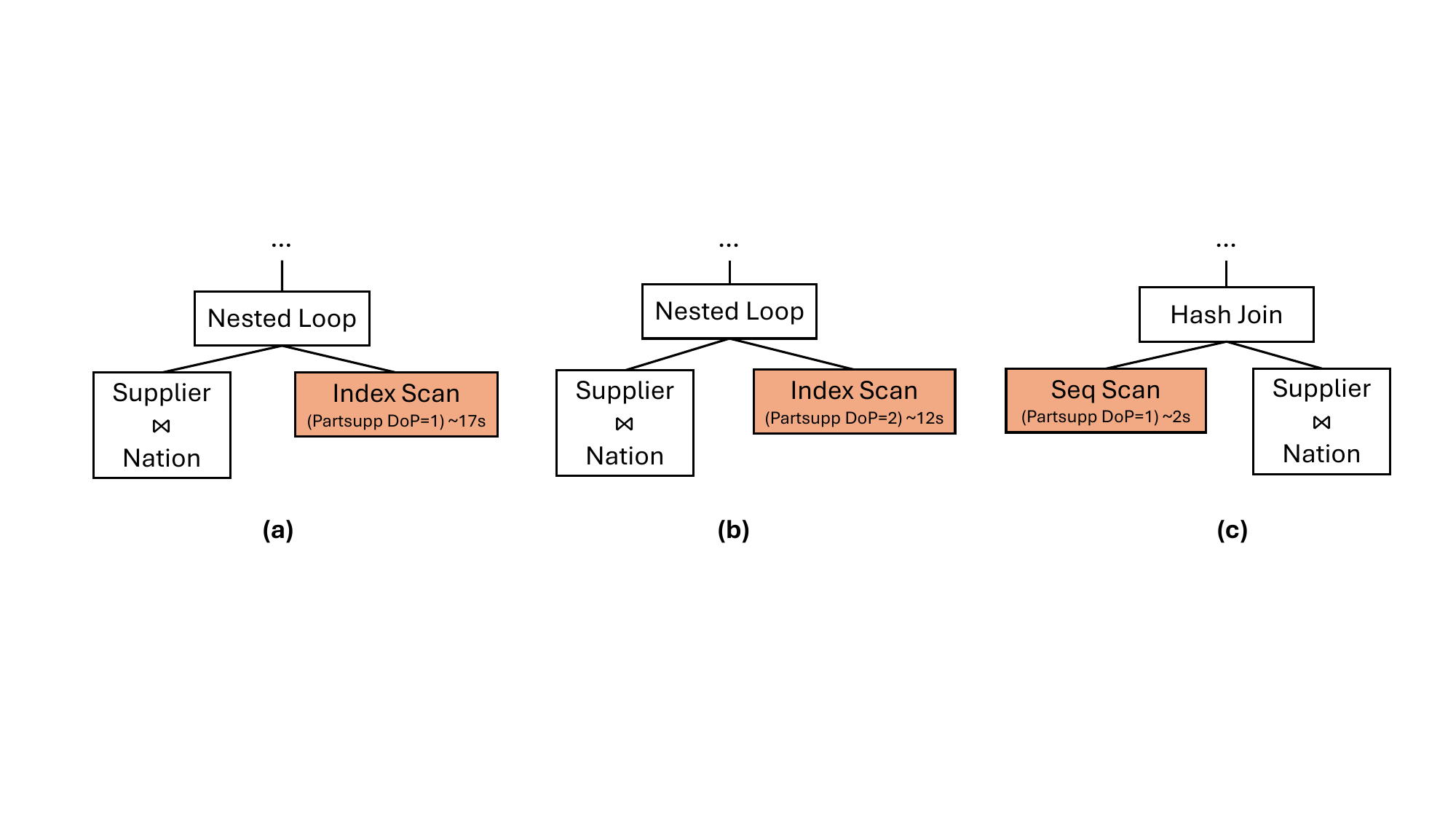}}
\subfigure[\textbf{STATS}, Query 47b]{ \label{fig:casestudy:stats:47b}
    \includegraphics[clip, trim=3cm 4cm 2cm 5cm, width=1.0\columnwidth]{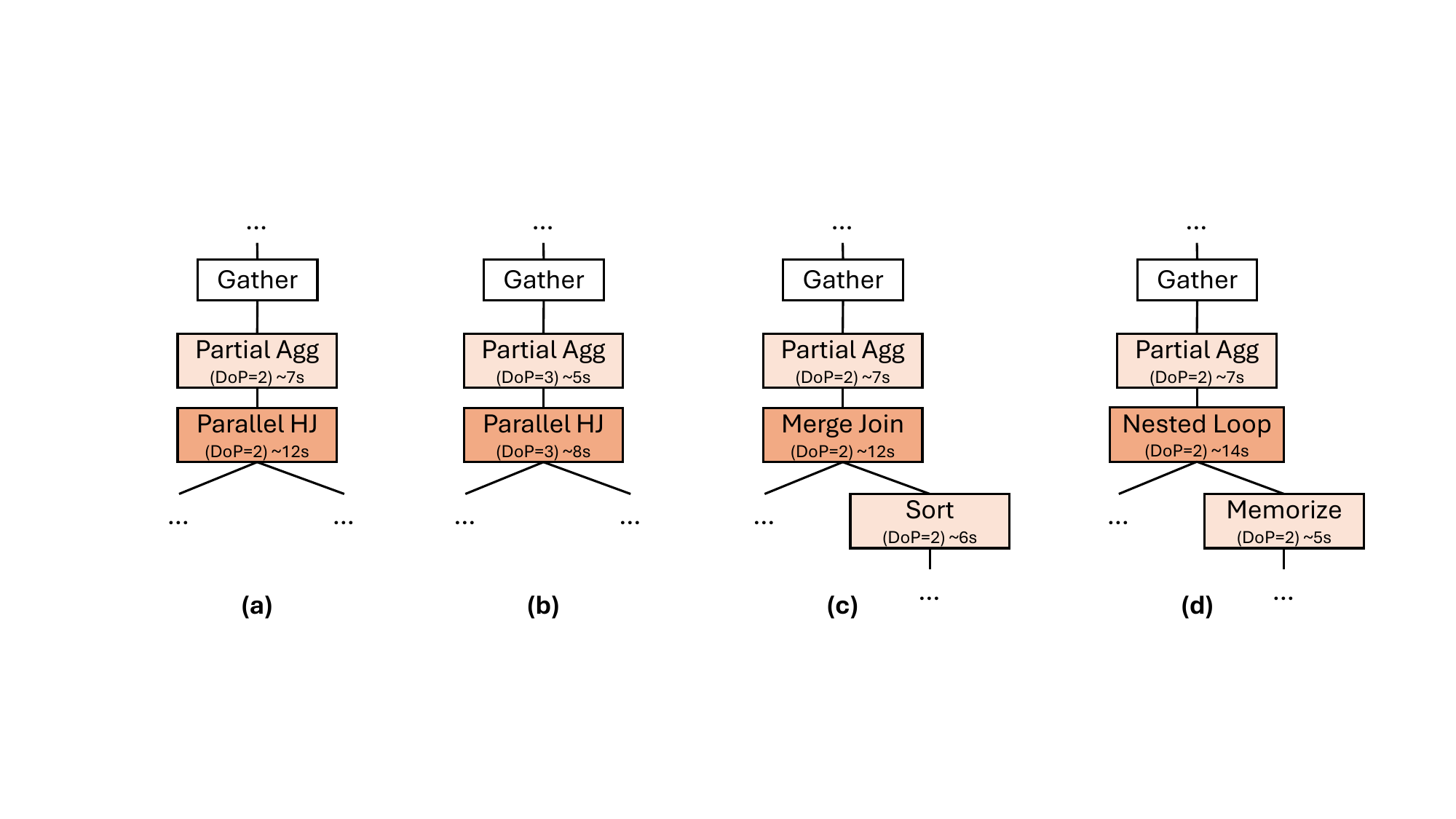}}
\vspace{-1.5em}
\caption{Variations of explored plans by altering CMPs in terms of the bottleneck operators and access paths.}
\label{fig:casestudy:queries}
\end{figure}

Given the large number of candidate plans, another interesting question is how \emph{homogeneous} or \emph{heterogeneous} they are in terms of execution time.
Clearly, the sheer number of plans does not always imply the potential to find much better plans (compared to the default one) if the candidate plans have similar execution time.
Our investigation of this question reveals that the outcome is query-dependent.
In the following, we present a case study with two queries (see Appendix~\ref{appendix:results:case-study-queries}) that exhibit very different structures and patterns in their candidate plans.

\vspace{-0.5em}
\paragraph*{\textbf{TPC-H} (sf = 10) Query 11}
Figure~\ref{fig:space:casestudy:tpch10} presents the distribution of the execution time of the candidate plans found by all the approaches evaluated. 
The blue bars represent the plans found by \textbf{DOT}. 
As shown in Figure~\ref{fig:casestudy:tpch10:q11}, these plans can be classified into three groups w.r.t. the scan operator used for the table \textsc{Partsupp}, which is also the execution bottleneck:
(a) \emph{index scan} without parallelism, which includes five plans with execution time of around 19 seconds;
(b) \emph{index scan} with degree of parallelism (DoP) = 2, which includes three plans with execution time of around 14 seconds;
(c) \emph{sequential scan}, which includes two plans with execution time of 4 and 7 seconds, respectively.
The \textbf{Default} plan falls into group (a). 
Both \textbf{RS} and \textbf{BO} found five plans, all of which fall into the categories of (a) and (b). 
For this query, the candidate plans are \emph{heterogeneous}: most plans have execution time greater than 14 seconds, whereas the best plan completes in around 4 seconds, a reduction of more than 3.5$\times$. 
Only \textbf{DOT} can find this best plan.


\vspace{-0.5em}
\paragraph*{\textbf{STATS} Query 47b}
Figure~\ref{fig:space:casestudy:stats} presents the distribution of the execution time of all discovered candidate plans. 
The plans covered by \textbf{DOT} can be classified into four groups according to the \emph{final} join operator used, as shown in Figure~\ref{fig:casestudy:stats:47b}:
(a) \emph{parallel hash join} with degree of parallelism (DoP) = 2, which contains 44 plans with execution time of around 21 seconds;
(b) \emph{parallel hash join} with DoP = 3, which contains three plans with execution time of around 13 seconds;
(c) \emph{merge join} with DoP = 2, which contains a single plan with execution time of around 37 seconds;
and (d) \emph{nested-loop join} with DoP = 2, which contains four plans with execution time of around 27 seconds. 
The \textbf{Default} plan belongs to group (a). 
Neither \textbf{RS} nor \textbf{BO} discovers any plans from the best group (b).


\vspace{-0.5em}
\paragraph*{Remark}
The fact that grouping candidate plans by their access paths is well correlated with the corresponding regions in the distribution of plan execution time is our main motivation for the similarity-based plan filtering mechanism that relies on matching bottleneck operators (Section~\ref{sec:optimization:similarity:bottleneck}).

\vspace{-0.5em}
\subsection{Reduction of Tuning Time}
\label{sec:eval:tuning-time}

\begin{figure*}
\centering
\subfigure[\textbf{TPC-H}, sf = 1]{ \label{fig:tune:reduction:tpch1}
    \includegraphics[width=0.24\textwidth]{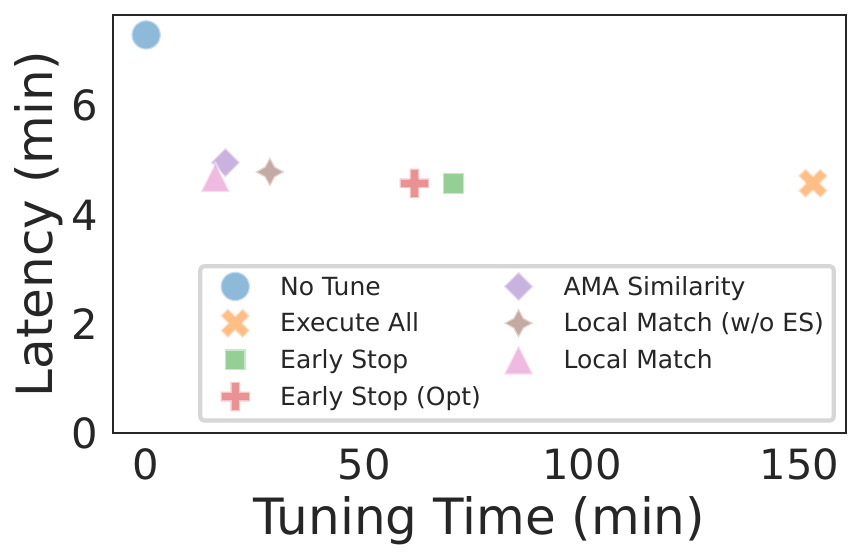}}
\subfigure[\textbf{TPC-DS}, sf = 1]{ \label{fig:tune:reduction:tpcds1}
    \includegraphics[width=0.24\textwidth]{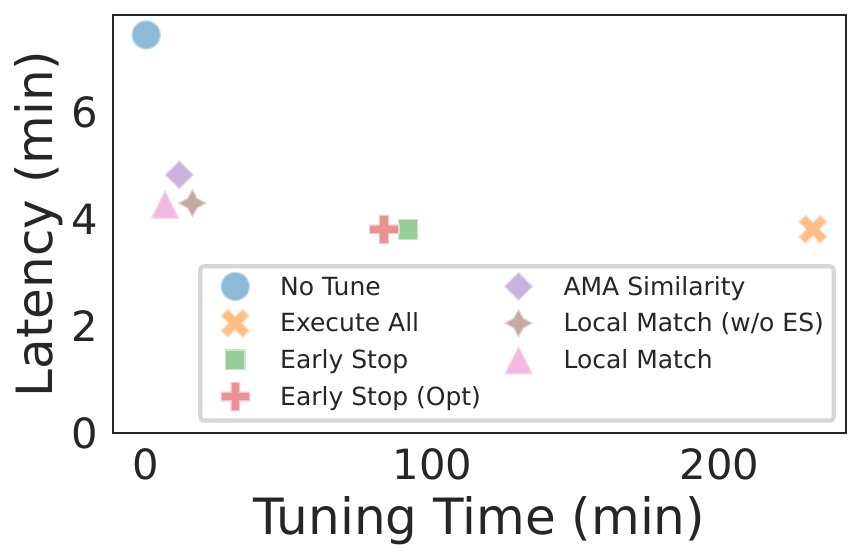}}
\subfigure[\textbf{JOB}]{ \label{fig:tune:reduction:job}
    \includegraphics[width=0.24\textwidth]{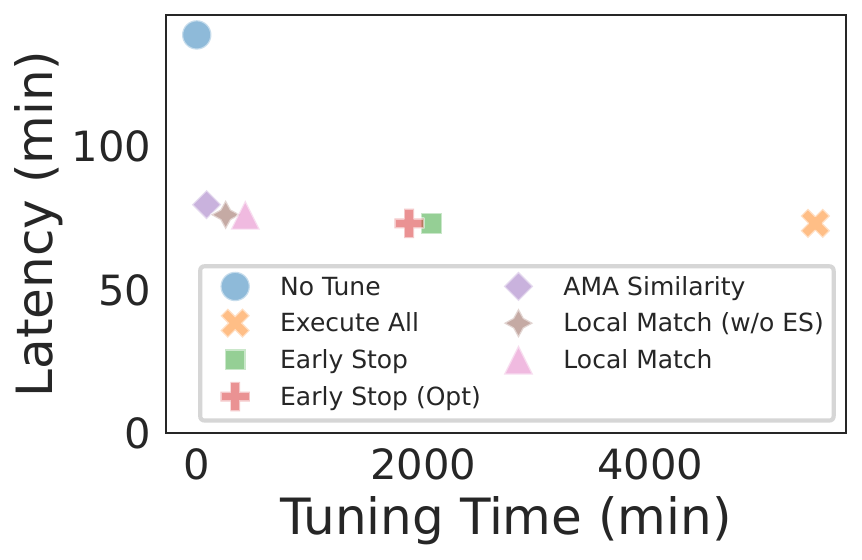}}
\subfigure[\textbf{STATS}]{ \label{fig:tune:reduction:stats}
    \includegraphics[width=0.24\textwidth]{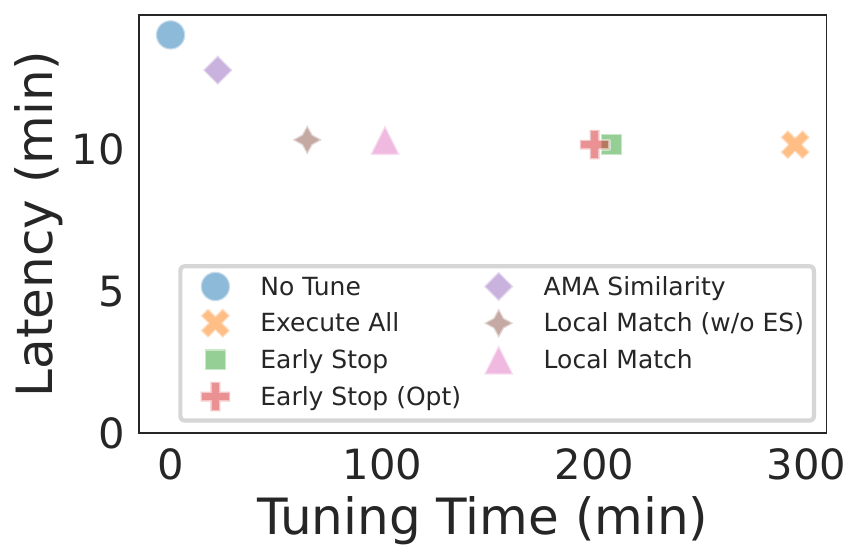}}
\vspace{-1.5em}
\caption{Comparison of variant approaches for reducing tuning time as well as baseline approaches. In particular, the baseline ``No Tune'' by simply using the default query plan results in the shortest tuning time but the longest execution time. \shepherd{Since the tuning overhead is incurred only once, CMP tuning is most appropriate for recurring queries that are executed multiple times.}}
\vspace{-1em}
\label{fig:tune:reduction}
\end{figure*}

We now evaluate the efficacy of the timeout-based and similarity-based mechanisms.
We compare the following variants in terms of how the two mechanisms are combined.
\begin{itemize}
    \item \emph{Early Stop}: This includes only the progressive timeout mechanism (Section~\ref{sec:optimization:timeout:progressive}) without combining with any similarity-based plan filtering mechanisms.
    \item \emph{Early Stop (Opt)}: This represents an ideal case for \emph{Early Stop} where candidate plans are evaluated in increasing order with respect to their execution time.
    \item \emph{AMA Similarity}: This is an implementation of combining the progressive timeout mechanism (i.e., \emph{Early Stop}) with the embedding-based similarity mechanism (Section~\ref{sec:optimization:similarity:embedding}), where we use the plan embedding techniques presented in ``AI meets AI (AMA)''~\cite{DingDM0CN19}. 
    We use $\tau=0.9$ as the similarity threshold of Algorithm~\ref{alg:optimization:embedding-sim}.
    \item \emph{Local Match (w/o ES)}: This implements the bottleneck-based similarity mechanism (Section~\ref{sec:optimization:similarity:bottleneck}).
    \item \emph{Local Match}: This further combines the bottleneck-based similarity mechanism with \emph{Early Stop} (Algorithm~\ref{alg:optimization:timeout-sim-combine}). We set the two input parameters of Algorithm~\ref{alg:optimization:bottleneck-sim} as $\alpha=0.2$ (threshold for determining bottleneck operator) and $M=2$ (threshold for selecting the top bottleneck operators).
\end{itemize}

We compare the mechanisms used for reducing tuning time with the following two simple baseline approaches:
\begin{itemize}
    \item \emph{No Tune}: This corresponds to the \textbf{Default} approach by always sticking to the default plan.
    \item \emph{Execute All}: This corresponds to another straightforward approach by executing all candidate plans without timeout.
\end{itemize}

Figure~\ref{fig:tune:reduction} presents the results of all the variant combinations of timeout-based and similarity-based mechanisms, as well as those of the two baselines, in terms of the trade-off between workload execution time with the best query plans found and the tuning time.
Clearly, the two baselines represent two extremes in this trade-off spectrum: \emph{No Tune} results in the minimum tuning time with the maximum workload execution time, while \emph{Execute All} results in the maximum tuning time but with the minimum execution time.
The results with \textbf{TPC-H} and \textbf{TPC-DS} (sf = 10) are in Appendix~\ref{appendix:results:reduction-tuning-time}.

\vspace{-0.5em}
\subsubsection{Comparison of the Variants}
Compared to \emph{Execute All}, \emph{Early Stop} significantly reduces the tuning time while achieving the same minimum execution time of the workload.
The order of plan execution does make a difference in \emph{Early Stop}, as there is an observable gap between the tuning time of \emph{Early Stop} and \emph{Early Stop (Opt)}.
For example, the tuning time for \textbf{TPC-H} (sf = 1) would have been reduced by 10 minutes if the optimal order of plan execution could be known beforehand, as shown in Figure~\ref{fig:tune:reduction:tpch1}.
\emph{AMA Similarity} further reduces the tuning time by combining embedding-based similarity with \emph{Early Stop}, though it may increase the execution time of the workload.
For example, as shown in Figure~\ref{fig:tune:reduction:stats}, the execution time of the workload \textbf{STATS} using \emph{AMA Similarity} is closer to that of \emph{No Tune} than that of \emph{Execute All}, indicating that embedding-based plan similarity does not always subsume the closeness of plan execution time.
\emph{Local Match} overcomes this challenge faced by \emph{AMA Similarity}: It can achieve the minimum execution time of the workload by retaining the best plans found for individual queries, with
slightly increased tuning time.
One may think that \emph{Local Match} should always outperform \emph{Local Match (w/o ES)}; however, this is interestingly not true, as shown in Figure~\ref{fig:tune:reduction:stats}.
Note that the plan filtering mechanism by checking bottleneck operators can be made more effective if more plans are allowed to complete their executions.
That is, more \emph{remaining} candidate plans can be filtered without execution, which compensates for the increased plan execution time due to eliminating timeout.


\vspace{-0.5em}
\subsubsection{Progressive vs. Round-robin Timeout}
\label{sec:eval:tuning-time:timeout-compare}

\begin{table}
\caption{Comparison of tuning time (in hours) with the progressive and the round-robin timeout mechanisms.}
\vspace{-1em}
    \centering
    \begin{tabular}{l|rrrr}
    \toprule
       {\bf Workload} & $T^{\progressive}$ & $T^{\roundrobin}$ & {\bf Slowdown} $\frac{T^{\roundrobin}}{T^{\progressive}}$\\
    \midrule
         {\bf TPC-H} (sf = 1) & 1.2 & 2.2 & 1.89 $\times$ \\
         {\bf TPC-H} (sf = 10) & 2.7 & 4.8 & 1.76 $\times$ \\ 
         {\bf TPC-DS} (sf = 1) & 1.5 & 3.1 & 2.06 $\times$ \\
         {\bf TPC-DS} (sf = 10) & 22.0 & 46.7 & 2.12 $\times$ \\
         {\bf JOB} & 34.5 & 76.7 & 2.22 $\times$ \\
         {\bf STATS} & 3.5 & 8.2 & 2.36 $\times$ \\
    \bottomrule
    \end{tabular}
    \label{tab:timeout:compare}
\end{table}

As we noted in Section~\ref{sec:optimization:combination}, the round-robin timeout mechanism introduced in Section~\ref{sec:optimization:timeout:rr}, while having an advantage with worst-case guarantees, is difficult to combine with the plan filtering mechanism based on the check of bottleneck operators (Section~\ref{sec:optimization:similarity:bottleneck}).
As a result, we have omitted its evaluation thus far because it is not fair to compare it with any of the compound mechanisms.
Nevertheless, for completeness, we compare it with the progressive timeout mechanism alone.
Table~\ref{tab:timeout:compare} summarizes the results.
We observe that the tuning time with the round-robin timeout mechanism ($T^{\roundrobin}$) is between $1.8\times$ and $2.4\times$ of the tuning time with the progressive timeout mechanism ($T^{\progressive}$), as shown in the last column of Table~\ref{tab:timeout:compare}.


\vspace{-0.5em}
\subsection{Extended Evaluation on PostgreSQL}
\label{sec:eval:pg-extended}

\subsubsection{\rtwo{Tuned Databases}}
\rtwo{It is common practice to tune the configuration knobs of a database system, such as the \emph{shared\_buffers} of PostgreSQL (PG), to improve query execution performance.
As a result, it is interesting to see whether \textbf{DOT} remains beneficial for a tuned database system.
For this purpose, we run the well-known PGTune tool~\cite{pgtune} to recommend the appropriate settings for the knobs and compare \textbf{DOT} with the other baselines on the PG system with these recommended settings.}

\begin{figure}
\centering
\subfigure[\textbf{TPC-H}, sf = 10]{\label{fig:improve:query:tuned:tpch10}
    \includegraphics[width=0.27\columnwidth]{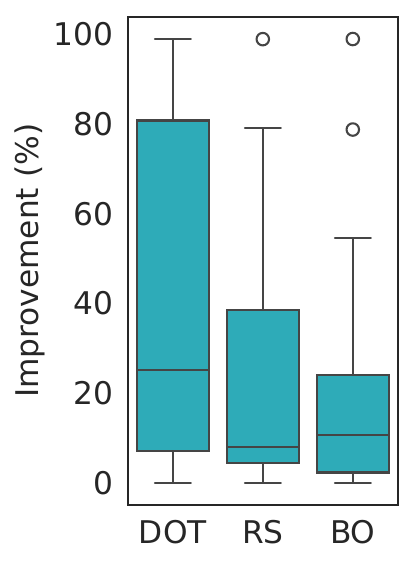}}
\subfigure[\textbf{TPC-DS}, sf = 10]{\label{fig:improve:query:tuned:tpcds10}
    \includegraphics[width=0.27\columnwidth]{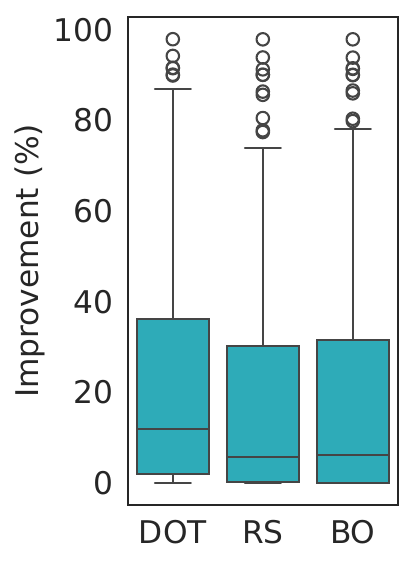}}
\subfigure[\textbf{STATS}]{\label{fig:improve:query:tuned:stats}
    \includegraphics[width=0.27\columnwidth]{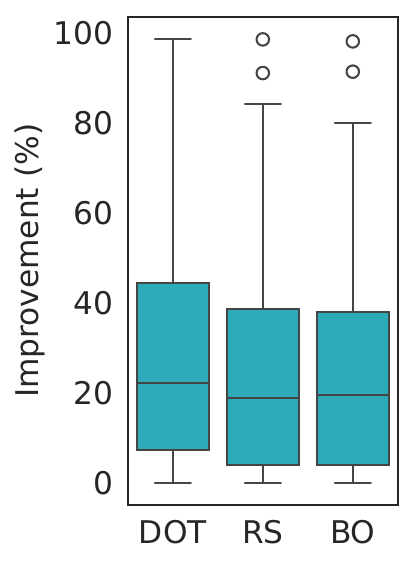}}
\vspace{-1.5em}
\caption{\rv{Distribution of query-level improvement by \textbf{RS}, \textbf{BO}, and \textbf{DOT} over \textbf{Default} on a tuned PG system.}}
\label{fig:improve:query:tuned}
\vspace{0.5em}
\end{figure}

\rtwo{Figure~\ref{fig:improve:query:tuned} summarizes the distributions of query-level execution time improvement.
We observe similar advantages of \textbf{DOT} over the other baselines, compared to the results shown in Figure~\ref{fig:improve:query}.
We leave to the details of individual query improvement to Appendix~\ref{appendix:results:reduction-tuning-time}, due to space constraints (same below).}

\vspace{-0.5em}
\subsubsection{\rv{Warm Cache}}
\rtwo{Although we have been focusing on using a ``cold start'' to execute queries for better reproducibility of the results, in practice it is common with data resident in the memory.
Figure~\ref{fig:improve:query:warm-cache} presents the evaluation results regarding the benefit of \textbf{DOT} in this ``warm cache'' scenario, where we observe similar advantages of \textbf{DOT} compared to the other baselines.}

\begin{figure}[t]
\centering
\subfigure[\textbf{TPC-H}, sf = 10]{\label{fig:improve:query:warm-cache:tpch10}
    \includegraphics[width=0.27\columnwidth]{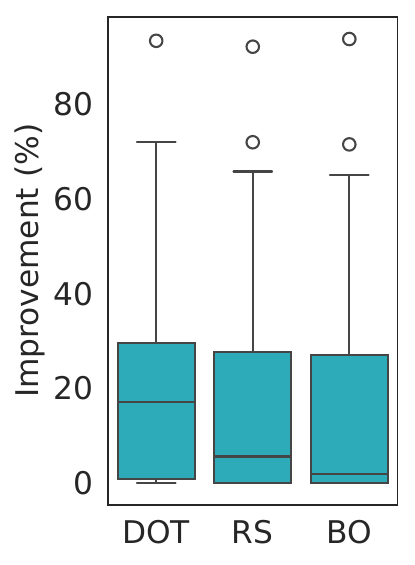}}
\subfigure[\textbf{TPC-DS}, sf = 10]{\label{fig:improve:query:warm-cache:tpcds10}
    \includegraphics[width=0.27\columnwidth]{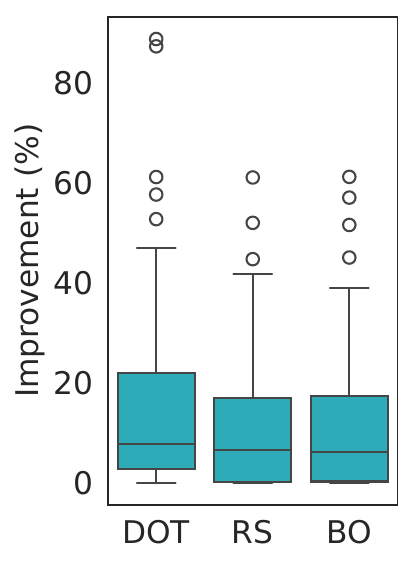}}
\subfigure[\textbf{STATS}]{\label{fig:improve:query:warm-cache:stats}
    \includegraphics[width=0.27\columnwidth]{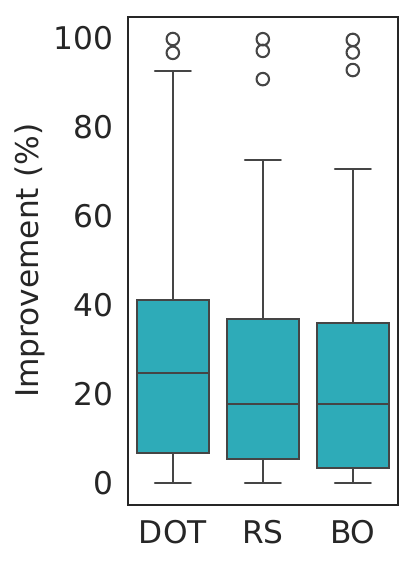}}
\vspace{-1.5em}
\caption{\rv{Distribution of query-level improvement by \textbf{RS}, \textbf{BO}, and \textbf{DOT} over \textbf{Default} with the ``warm cache'' scenario.}}
\label{fig:improve:query:warm-cache}
\vspace{0.5em}
\end{figure}

\vspace{-0.5em}
\subsubsection{\rthree{Databases with Larger Scales}}

\begin{figure}[t]
\centering
\subfigure[\textbf{TPC-H}, sf = 25]{\label{fig:improve:query:scale:tpch25}
    \includegraphics[width=0.26\columnwidth]{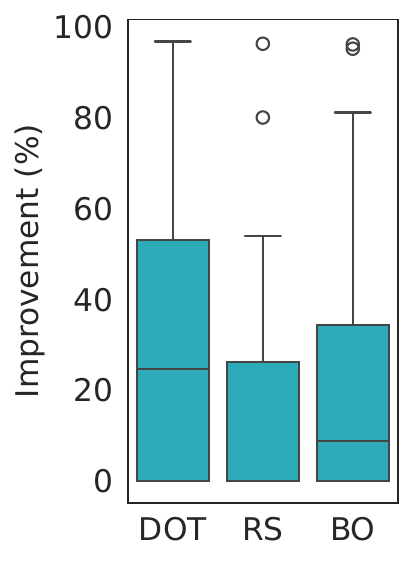}}
\subfigure[\textbf{TPC-DS}, sf = 25]{\label{fig:improve:query:scale:tpcds25}
    \includegraphics[width=0.26\columnwidth]{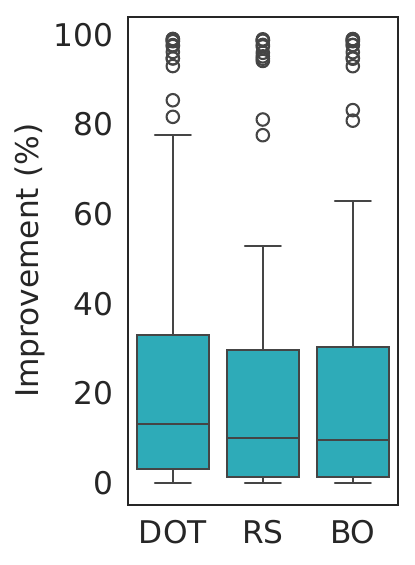}}
\subfigure[\textbf{TPC-H}, sf = 100]{\label{fig:improve:query:scale:tpch100}
    \includegraphics[width=0.26\columnwidth]{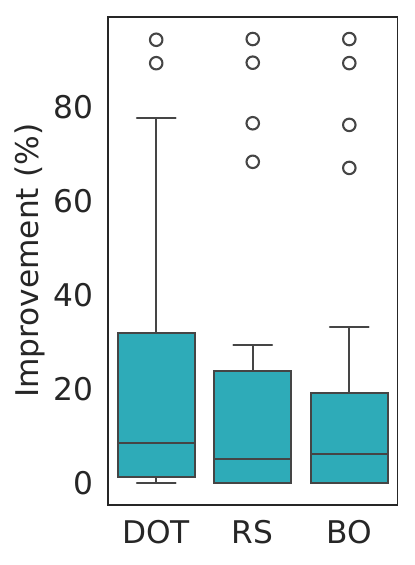}}
\vspace{-1.5em}
\caption{\rthree{Distribution of query-level improvement by \textbf{RS}, \textbf{BO}, and \textbf{DOT} over \textbf{Default} with larger scale databases.}}
\label{fig:improve:query:scale}
\vspace{0.5em}
\end{figure}

\rthree{
We further evaluate \textbf{DOT} on databases with larger scales.
Specifically, we increase the scaling factors of \textbf{TPC-H} and \textbf{TPC-DS} databases to 25 and 100.
Figure~\ref{fig:improve:query:scale} presents the evaluation results. 
We again observe that \textbf{DOT} can significantly outperform \textbf{RS} and \textbf{BO} on these larger scale databases.
}

\vspace{-1.5em}
\subsection{\rthree{Results with Microsoft SQL Server}}

\begin{figure}[t]
\centering
\subfigure[\textbf{TPC-H}, sf = 10]{\label{fig:improve:query:mssql:tpch10}
    \includegraphics[width=0.26\columnwidth]{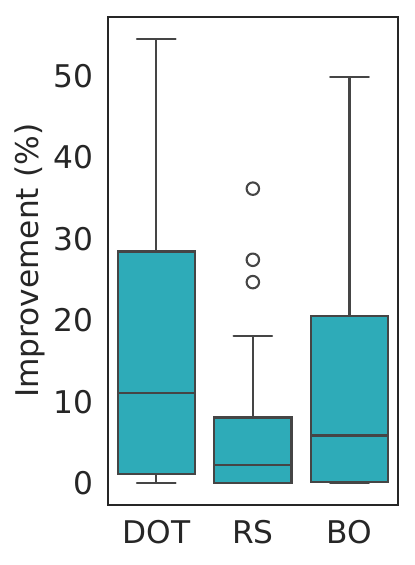}}
\subfigure[\textbf{TPC-DS}, sf = 10]{\label{fig:improve:query:mssql:tpcds10}
    \includegraphics[width=0.26\columnwidth]{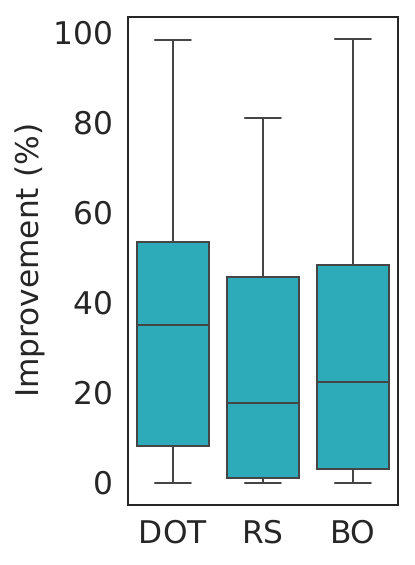}}
\subfigure[\textbf{JOB}]{\label{fig:improve:query:mssql:job}
    \includegraphics[width=0.26\columnwidth]{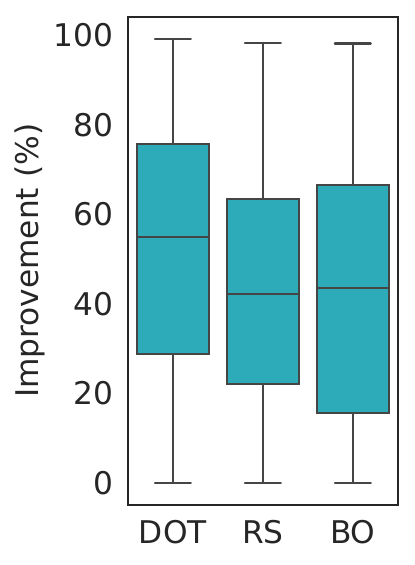}}
\vspace{-1.5em}
\caption{\rthree{Distribution of query-level improvement by \textbf{RS}, \textbf{BO}, and \textbf{DOT} over \textbf{Default} with Microsoft SQL Server.}}
\label{fig:improve:query:mssql}
\end{figure}

\rthree{To test the general applicability of \textbf{DOT}, we further evaluate it using Microsoft SQL Server. Following previous work~\cite{WuW24}, we focus on tuning CMPs that are critical to the costs of workhorse operators such as \emph{table scan}, \emph{index seek}, \emph{hash join}, and \emph{nested-loop join}.
Figure~\ref{fig:improve:query:mssql} presents the evaluation results.
We observe that \textbf{DOT} remains advantageous over \textbf{RS} and \textbf{BO} on the three workloads tested, namely, \textbf{TPC-H} (sf = 10), \textbf{TPC-DS} (sf = 10), and \textbf{JOB}.}

\vspace{-0.5em}
\section{Related Work}
\label{sec:related-work}

\paragraph*{CMP Tuning}
The idea of tuning CMPs to find better query execution plans has been sketched in the literature.
Most previous work views CMPs as platform-dependent constants that require only a one-shot calibration~\cite{DuKS92,SororMASKK08,WuCHN13,WuWHN14,WuNS16,YangWZDLC23}.
The alternative view of treating CMPs as query-dependent variables was first introduced by recent work~\cite{WuW24}, where the authors proposed to view the CMP tuning problem from an AutoML perspective.
However, the focus of~\cite{WuW24} remains on using random search or Bayesian optimization to address the CMP tuning problem.

\vspace{-0.5em}
\paragraph*{Database Knob Tuning}
In recent years, there has been intensive and extensive research on the problem of database knob tuning. 
In a nutshell, modern database systems typically have configurable knobs that require careful tuning to find their best set-ups to accelerate
query execution. For example, setting the appropriate size for the \emph{shared\_buffers} of PostgreSQL is critical to database system performance. 
Existing solutions to database knob tuning are based on Bayesian optimization~\cite{DuanTB09,AkenPGZ17,CeredaVCD21,KanellisDKMCV22}, reinforcement learning~\cite{LiZLG19,ZhangLZLXCXWCLR19,AkenYBFZBP21,WangTB21}, and most recently large language models~\cite{Trummer24,LaoWLWZCCTW25,GiannakourisT25}.
Moreover, some recent approaches~\cite{WangTB21,GiannakourisT25} further include \emph{index tuning} as part of the database knob tuning problem, although index tuning is generally viewed as a different problem that itself has been studied extensively~\cite{ChaudhuriN97,ValentinZZLS00,DashPA11,SchlosserK019,KossmannHJS20,WuWSWNCB22,KossmannKS22,ml-index-tuning-overview,Wii,Esc}.
Database knob tuning aims to find the globally optimal knob settings that maximize the performance of a multi-query workload, whereas CMP tuning is performed at a per-query basis with the goal of maximizing the performance of an individual query.
This setup of per-query tuning cannot be applied to database knob tuning because many knobs, such as \emph{shared\_buffers}, can only be set for the entire database system rather than for each query session.


\vspace{-0.5em}
\paragraph*{Query Hinting}
There has also been active research along the line of using query hints to steer the query optimizer~\cite{MarcusNMTAK21, WoltmannTHHL23, AnneserTCXPLM23} in the hope of finding better execution plans. 
Although both query hinting and CMP tuning share the goal of improving the query plan, they are orthogonal \rthree{techniques} that can be combined to further improve their efficacy.
On the one hand, query hints can be thought of as \emph{binary} switches that constrain the search space of the query optimizer;
on the other hand, the \emph{continuous} CMPs studied in this paper alter the plan costs in the entire plan space to allow the query optimizer to make different decisions.
Under the objective of finding the best query plan with minimum execution time, both approaches can be viewed as different ways of obtaining a diverse set of candidate plans, and, therefore, are complementary to each other.
Moreover, recent work~\cite{YiTIM25} also considered the problem of reducing the plan evaluation time in the context of query hinting.

\vspace{-0.5em}
\paragraph*{Robust Query Optimization and Processing}
The techniques for plan space exploration developed in this paper are closely related to the decades of research on robust query optimization (QO) and query processing (QP)~\cite{yin2015robust,Haritsa24}.
This line of work aims to overcome the potential mistakes made by the query optimizer when generating the query execution plans, due to reasons such as uncertainty in cardinality estimation and inaccuracy in cost modeling~\cite{IoannidisC91,ChuHG02,BabcockC05,WuCZTHN13,WuWHN14,LeisGMBK015}.
One common idea is to retain \emph{multiple} access paths at certain points of the query plan (or multiple query plans) and to defer the decision on the selection of the access path (or plan) to the query execution runtime~\cite{GraefeW89,MarklRSLP04,ReddyH05,DuttH14,WuNS16}.
In particular, our definition of the plan space in Section~\ref{sec:exploration-plan-space} is inspired by the notion of \emph{plan diagram} developed in a series of previous work on robust QO~\cite{ReddyH05,DDH08,AbhiramaBDSH10},  
although the plan diagram is based on varying the selectivity estimates 
rather than the CMPs.


\vspace{-0.5em}
\section{Conclusion}
\label{sec:conclusion}

We presented a deterministic and optimal approach to the problem of CMP tuning.
It contains two major components: (1) a complete exploration strategy to harvest all possible candidate plans in the search space without the overhead of an exhaustive enumeration, and (2) a combination of timeout-based and similarity-based plan filtering mechanisms to minimize evaluation time of candidate plans.
Unlike existing approaches that rely on probabilistic mechanisms such as random search or Bayesian optimization, our approach avoids indeterminism and uncertainty in the outcome of tuning while offering guarantees on finding the best plan in the search space.
Experimental results on top of widely used public database benchmarks with PostgreSQL \rthree{and Microsoft SQL Server} demonstrated the efficacy of our approach, showing that it can find query
plans that are orders of magnitude faster than both the default query plan returned by the query optimizer and the best plan found by randomized approaches.

\clearpage

\bibliographystyle{ACM-Reference-Format}
\bibliography{./paper}

\clearpage
\appendix

\section{Proofs}

\subsection{Proof of Theorem~\ref{theorem:one-dim}}
\label{appendix:proofs:theorem:one-dim}

\begin{proof}
Suppose that the theorem is not true. Then there exists some $c_l<c_y<c_h$ s.t. $P_y\neq P_l$, where $P_y$ is the plan when setting $c=c_y$.
Let the cost function of $P_y$ be $f_y$. 
We have 
\begin{equation}\label{eq:linearity:hypothesis}
f_y(c_y)<f_l(c_y)    
\end{equation}
by Assumption~\ref{assumption:no-same-cost}, since otherwise the query optimizer would not have returned $P_y$ when setting $c=c_y$.
Since $f_y$ and $f_l$ are both linear functions of $c$, we have
$f_y(c)=a_y\cdot c + b_y$ and $f_l(c)=a_l\cdot c + b_l$. Now define
$$g(c)=f_y(c)-f_l(c)=(a_y-a_l)\cdot c+(b_y-b_l).$$
It is clear that $g$ is also a linear function of $c$. Since $c_l<c_y<c_h$, we can write
$$c_y=\rho\cdot c_l+(1-\rho)\cdot c_h,$$
where $0<\rho< 1$. Due to the linearity of $g$, we have
$$g(c_y)=\rho\cdot g(c_l)+(1-\rho)\cdot g(c_h).$$
Since $g(c_l)>0$ and $g(c_h)>0$, we have $g(c_y)>0$, which is contradictory with Equation~\ref{eq:linearity:hypothesis} that implies $g(c_y)<0$.
This completes the proof of the theorem.
\end{proof}

\subsection{Proof of Theorem~\ref{theorem:multi-dim}}
\label{appendix:proofs:theorem:multi-dim}

To prove this theorem, we need the following lemma.
\begin{lemma}\label{lemma:convex-combination}
Each $\vec{u}\in\mathcal{R}$ can be represented as a \emph{convex combination} of the vertices $\vec{v}\in\mathcal{V}$.
That is, for each vertex $\vec{v}\in\mathcal{V}$ there exists some $\rho_v$ s.t. (1) $\rho_v\geq 0$, (2) $\sum_{\vec{v}\in\mathcal{V}}\rho_v=1$, and (3) $\vec{u}=\sum_{\vec{v}\in\mathcal{V}}\rho_v\cdot\vec{v}$.
\end{lemma}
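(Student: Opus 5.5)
I will prove Lemma~\ref{lemma:convex-combination} by induction on the dimension $K$ of the hyperrectangle $\mathcal{R}=\prod_{i=1}^K[c_i^l,c_i^h]$. The plan also yields an explicit product-form set of weights, which I will give as an alternative direct argument. Throughout, vertices are identified with sign vectors $\vec{s}\in\{l,h\}^K$, where $\vec{v}_{\vec{s}}=(c_1^{s_1},\dots,c_K^{s_K})$.

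\textbf{Base case.} For $K=1$ and $u\in[c^l,c^h]$, if $c^l=c^h$ take $\rho_l=1,\rho_h=0$. Otherwise set $\rho_h=(u-c^l)/(c^h-c^l)$ and $\rho_l=1-\rho_h$. Both weights lie in $[0,1]$, they sum to one, and $\rho_l c^l+\rho_h c^h=u$. This is the same representation used in the proof of Theorem~\ref{theorem:one-dim}.

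\textbf{Inductive step.} Write $\vec{u}=(\vec{u}',u_K)$ with $\vec{u}'\in\mathcal{R}'=\prod_{i=1}^{K-1}R_i$.
\begin{enumerate}
\item By the induction hypothesis, $\vec{u}'=\sum_{\vec{v}'}\rho'_{v'}\vec{v}'$ over the vertices $\vec{v}'$ of $\mathcal{R}'$.
\item By the base case, $u_K=\lambda_l c_K^l+\lambda_h c_K^h$ with $\lambda_l,\lambda_h\ge 0$ and $\lambda_l+\lambda_h=1$.
\item Define $\rho_{(\vec{v}',c_K^s)}=\rho'_{v'}\lambda_s$ for each vertex $(\vec{v}',c_K^s)$ of $\mathcal{R}$, which covers all $2^K$ vertices.
\end{enumerate}
Nonnegativity is immediate. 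The weights sum to $(\sum\rho')(\lambda_l+\lambda_h)=1$. For the combination itself, the first $K-1$ coordinates give $\sum_s\lambda_s\sum_{v'}\rho'_{v'}\vec{v}'=\vec{u}'$, and the last coordinate gives $\sum_{v'}\rho'_{v'}\sum_s\lambda_s c_K^s=u_K$.

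\textbf{Direct alternative.} Set $\rho_{\vec{s}}=\prod_{i=1}^K\lambda^{(i)}_{s_i}$, where $\lambda^{(i)}$ are the one-dimensional weights of $u_i$ in $R_i$. The coordinate-$i$ component of $\sum_{\vec{s}}\rho_{\vec{s}}\vec{v}_{\vec{s}}$ factors as $\big(\sum_{s_i}\lambda^{(i)}_{s_i}c_i^{s_i}\big)\prod_{j\ne i}\big(\sum_{s_j}\lambda^{(j)}_{s_j}\big)=u_i$.

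\textbf{Expected obstacle.} The only real subtlety is bookkeeping: the inductive step must cover every vertex exactly once, and degenerate ranges with $c_i^l=c_i^h$ must be handled. In the degenerate case the two ``vertices'' along that axis coincide as points. Assigning all the weight to one of them, or splitting it arbitrarily, still works, since the lemma only requires existence of the weights.
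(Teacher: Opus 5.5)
Your proposal is correct and is essentially the paper's argument: the paper directly defines $\theta_i=(u_i-c_i^l)/(c_i^h-c_i^l)$ and sets $\rho_v=\prod_{i=1}^K\lambda_i$ with $\lambda_i=1-\theta_i$ or $\theta_i$ according to whether $v_i=c_i^l$ or $v_i=c_i^h$, which is exactly your ``direct alternative'' and what your induction on $K$ unrolls to. Your version also explicitly verifies conditions (1)--(3) and handles degenerate ranges $c_i^l=c_i^h$, where the paper's $\theta_i$ is undefined; the paper leaves both points implicit.
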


\begin{proof}
We prove the lemma by construction.
Given any $\vec{u}\in\mathcal{R}$ with $\vec{u}=(u_1, ..., u_K)$, we define
$\theta_i=\frac{u_i-c_i^l}{c_i^h-c_i^l}$
for $1\leq i\leq K$.
Clearly, $\theta_i\in[0,1]$.
Consider a vertex $\vec{v}\in\mathcal{V}$ with $\vec{v}=(v_1, ..., v_K)$.
Define 
$$
\lambda_i = 
    \begin{cases}
    1-\theta_i, & \text{if } v_i=c_i^l; \\
    \theta_i, & \text{otherwise (i.e, } v_i=c_i^h\text{)}.
    \end{cases}  
$$
We can now set $\rho_v=\prod_{i=1}^K \lambda_i$. This completes the proof.
\end{proof}

\begin{example}
It is easy to verify the correctness of $\rho_v$ constructed in the proof of Lemma~\ref{lemma:convex-combination}.
We take the two-dimensional plan space in Figure~\ref{fig:plan-cost-multi-dim} as an example.
For any $\vec{u}=(u_1, u_2)$, define
$$\theta_1=\frac{u_1-c_1^l}{c_1^h-c_1^l},\quad \theta_2=\frac{u_2-c_2^l}{c_2^h-c_2^l}.$$
We then set the $\lambda_v$'s as follows:
$$\lambda_{(c_1^l, c_2^l)}=(1-\theta_1)(1-\theta_2),$$
$$\lambda_{(c_1^l, c_2^h)}=(1-\theta_1)\theta_2,$$
$$\lambda_{(c_1^h, c_2^l)}=\theta_1(1-\theta_2),$$
$$\lambda_{(c_1^h, c_2^h)}=\theta_1\theta_2.$$
One can check the three conditions (1), (2), and (3) for the $\lambda_v$'s.
\end{example}

\begin{proof}[Proof of Theorem~\ref{theorem:multi-dim}]
Suppose that there exists some $\vec{u}\in\mathcal{R}$ s.t. the query plan $P(q,\vec{u})\neq P_q$.
Let the (linear) cost functions of the two plans $P(q,\vec{u})$ and $P_q=P(q,\vec{v})$ be
$f_u(\vec{c})$ and $f_v(\vec{c})$.

Define $g(\vec{c})=f_u(\vec{c})-f_v(\vec{c})$.
Since both $f_u(\vec{c})$ and $f_v(\vec{c})$ are linear, $g(\vec{c})$ is a liner function of $\vec{c}$ as well.
We have $g(\vec{u})<0$ because the query optimizer returned $P(q,\vec{u})$ instead of $P(q,\vec{v})$ when setting $\vec{c}=\vec{u}$.
By Lemma~\ref{lemma:convex-combination}, $\vec{u}$ can be expressed as a convex (and therefore linear) combination of the vertices, namely $\vec{u}=\sum_{\vec{v}\in\mathcal{V}}\rho_v\cdot\vec{v}$.
Applying the linearity of $g(\vec{u})$ yields
$$g(\vec{u})=g(\sum\nolimits_{\vec{v}\in\mathcal{V}}\rho_v\cdot\vec{v})=\sum\nolimits_{\vec{v}\in\mathcal{V}}\rho_v\cdot g(\vec{v})=g(\vec{v})\cdot\sum\nolimits_{\vec{v}\in\mathcal{V}}\rho_v.$$
Again by Lemma~\ref{lemma:convex-combination}, $\sum_{\vec{v}\in\mathcal{V}}\rho_v=1$.
As a result, we obtain $g(\vec{u})=g(\vec{v})$, which implies $g(\vec{v})<0$ for any vertex $\vec{v}\in\mathcal{V}$.
That is, $f_u(\vec{v})<f_v(\vec{v})$ for any vertex $\vec{v}$, which implies that the query optimizer should have returned $P(q,\vec{u})$ rather than $P(q,\vec{v})$ when setting $\vec{c}=\vec{v}$.
This contradiction completes the proof of the theorem.
\end{proof}

\subsection{Proof of Theorem~\ref{theorem:progressive-timeout}}
\label{appendix:proofs:theorem:progressive-timeout}

\begin{proof}
Clearly, $T^{\progressive}$ achieves its minimum when we process the plans in the order of increasing execution time.
In this case, all plans except the best one are terminated after hitting the timeout $\delta=t_{\min}$.
On the other hand, $T^{\progressive}$ achieves its maximum when we process the plans in the order of decreasing execution time.
In this case, no plan can be terminated early by the timeout before its completion. As a result, 
$T^{\progressive}=\sum\nolimits_{P\in\mathcal{P}_q}t(P)\leq N\cdot t_{\max}.$
\end{proof}

\subsection{Proof of Theorem~\ref{theorem:rr-timeout}}
\label{appendix:proofs:theorem:rr-timeout}

\begin{proof}
Let $\delta_r$ be the timeout used in the round $r$ of Algorithm~\ref{alg:optimization:roundrobin-timeout}.
Let $m$ be the round in which the plan with execution time $t_{\min}$ is completed.
We have $\delta_{m-1}<t_{\min}\leq\delta_{m}$.
Since no plan can finish in less than $t_{\min}$, $m$ is the earliest round in which some plan can finish execution.
Therefore, Algorithm~\ref{alg:optimization:roundrobin-timeout} can terminate after $m$ rounds.
The total tuning time $T^{\roundrobin}$ can therefore be bounded as
$$N\cdot\sum\nolimits_{r=1}^{m-1}\delta_r\leq T^{\roundrobin}\leq N\cdot\sum\nolimits_{r=1}^m\delta_r$$
Moreover, we have
$$\sum\nolimits_{r=1}^m\delta_r=\delta_0\cdot\sum\nolimits_{r=1}^m 2^{r-1}=\delta_0\cdot(2^m-1).$$
As a result, it follows that
$$N\cdot\delta_0\cdot (2^{m-1}-1)\leq T^{\roundrobin}\leq N\cdot\delta_0\cdot (2^m-1),$$
namely, $N\cdot(\delta_m-\delta_0)\leq T^{\roundrobin}\leq N\cdot(2\cdot\delta_m-\delta_0)$ by noticing $\delta_m=\delta_0\cdot 2^{m-1}$.
This completes the proof.
\end{proof}

\section{More Details of Experimental Results}

\subsection{\rthree{Query-level Improvement in Section~\ref{sec:eval:exec-time}}}
\label{appendix:results:query-level-impr}

\begin{figure}
\centering
\subfigure[\textbf{TPC-H}, sf = 1]{ \label{fig:improve:singlequery:tpch1}
    \includegraphics[width=1.0\columnwidth]{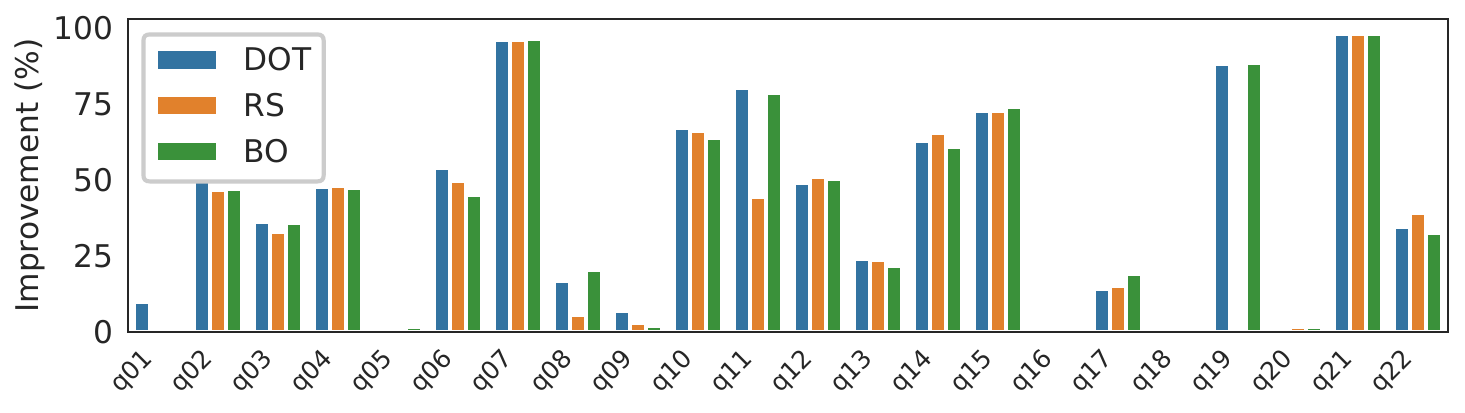}}
\subfigure[\textbf{TPC-DS}, sf = 1]{ \label{fig:improve:singlequery:tpcds1}
    \includegraphics[width=1.0\columnwidth]{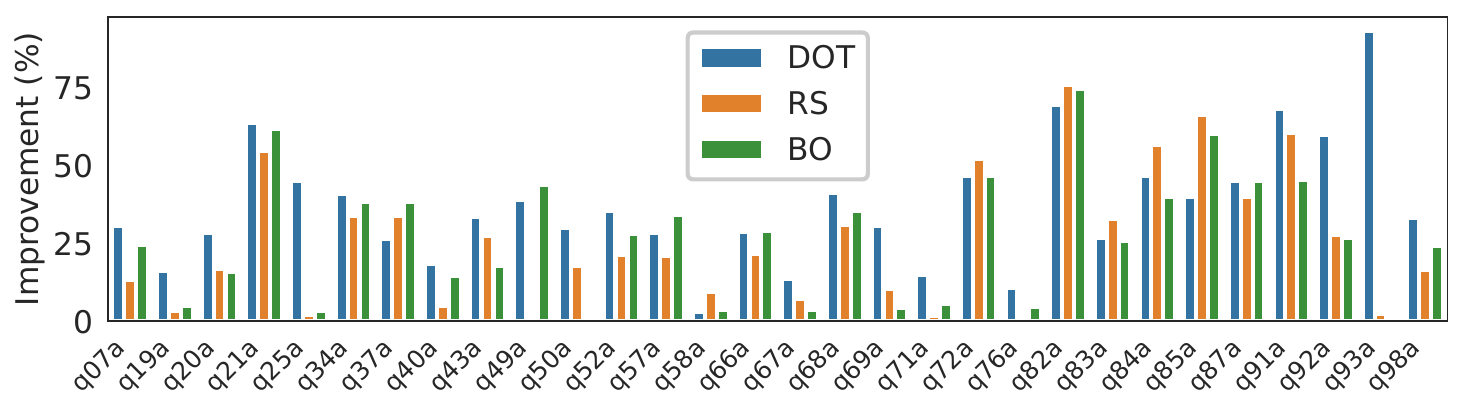}}
\subfigure[\textbf{TPC-DS}, sf = 10]{ \label{fig:improve:singlequery:tpcds10}
    \includegraphics[width=1.0\columnwidth]{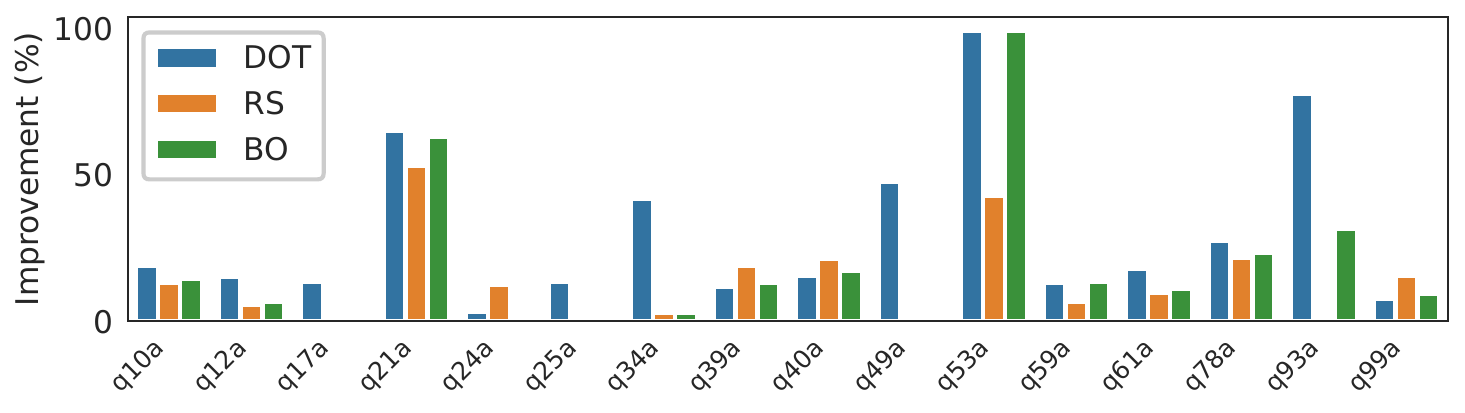}}
\subfigure[\textbf{JOB}, Part I]{ \label{fig:improve:singlequery:job:part-1}
    \includegraphics[width=1.0\columnwidth]{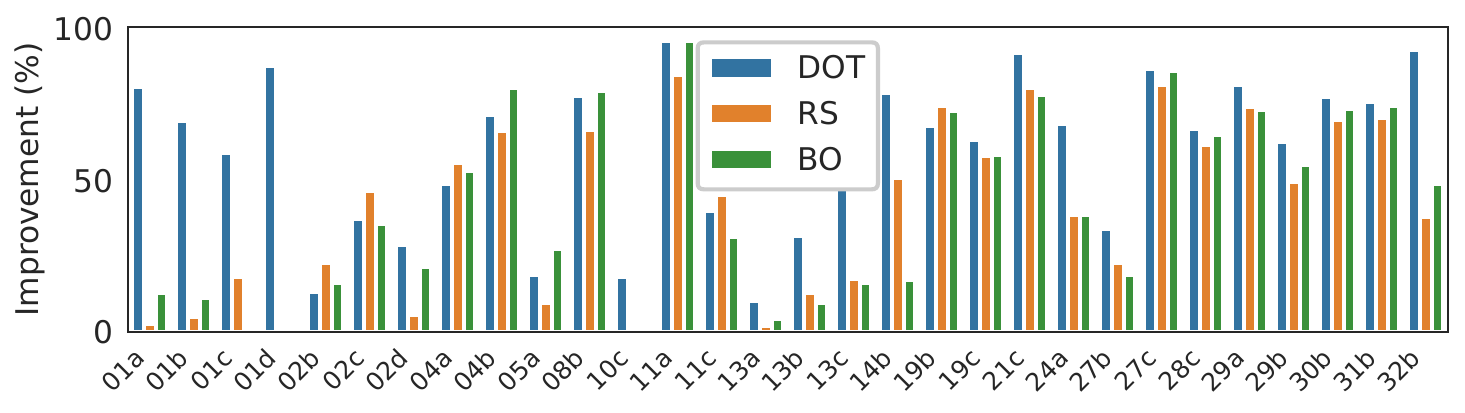}}
\subfigure[\textbf{JOB}, Part II]{ \label{fig:improve:singlequery:job:part-2}
    \includegraphics[width=1.0\columnwidth]{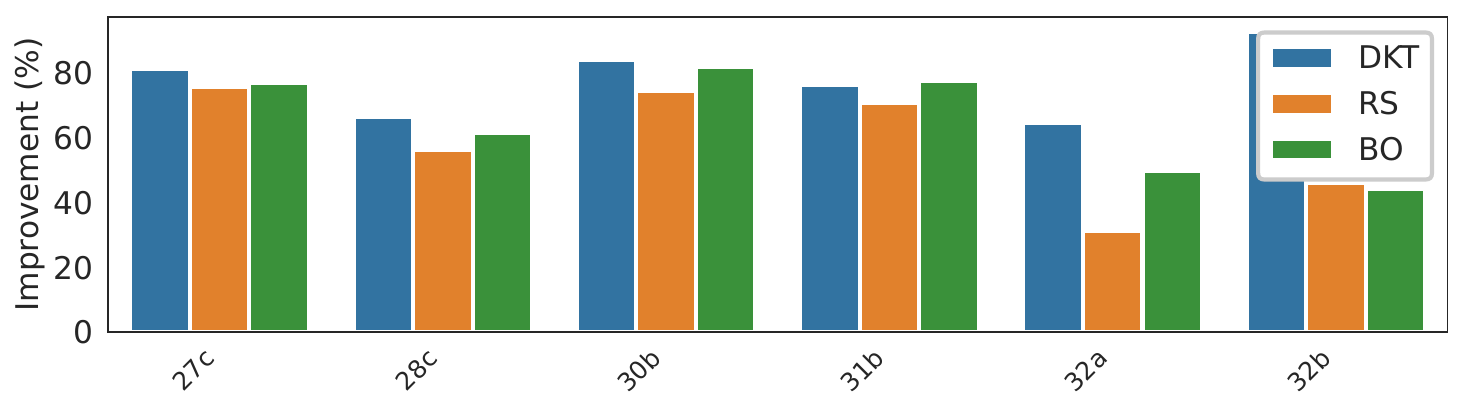}}
\subfigure[\textbf{STATS}, Part I]{ \label{fig:improve:singlequery:stats:part-1}
    \includegraphics[width=1.0\columnwidth]{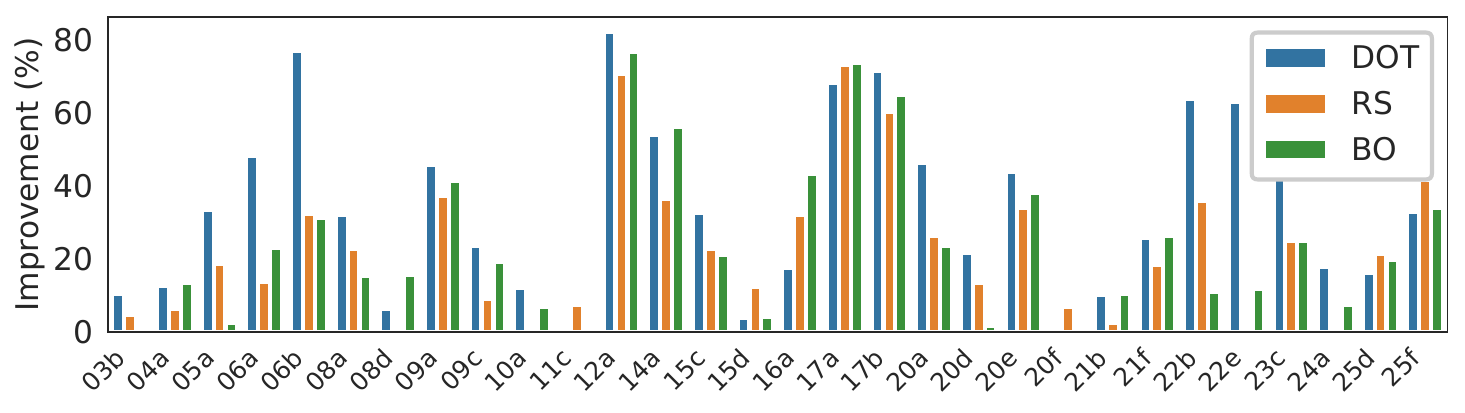}}
\subfigure[\textbf{STATS}, Part II]{ \label{fig:improve:singlequery:stats:part-2}
    \includegraphics[width=1.0\columnwidth]{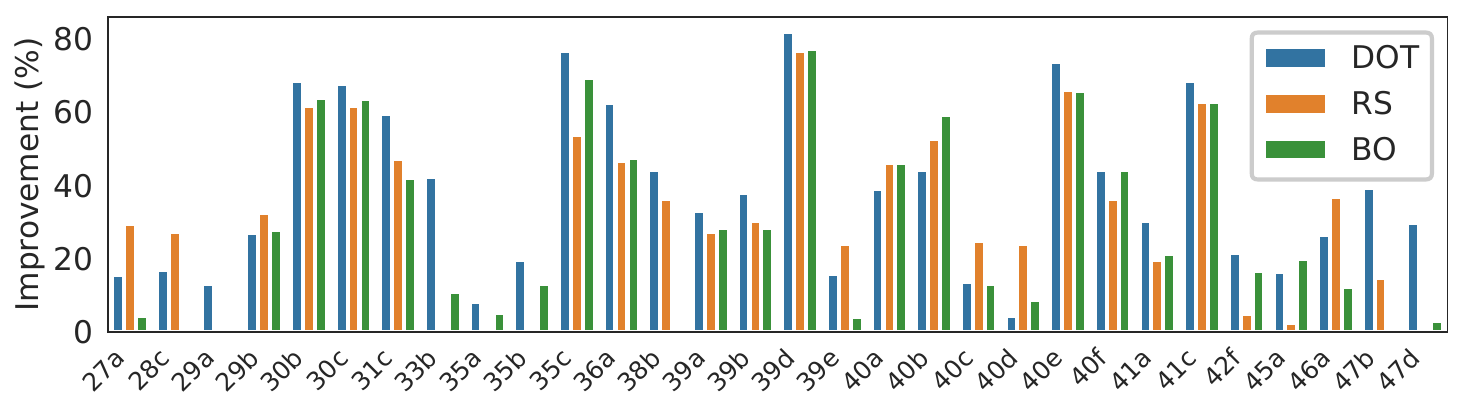}}
\vspace{-1.5em}
\caption{\rthree{Improvements over \textbf{Default} by \textbf{RS}, \textbf{BO}, and \textbf{DOT} for individual queries from each workload.}}
\label{fig:improve:singlequery:rest}
\end{figure}

\rthree{Figure~\ref{fig:improve:singlequery:rest} presents the improvements of individual queries for the remaining workloads.
To avoid clutter, we only show queries with improvements obtained by \textbf{RS} and \textbf{DOT} differing by at least 5\% for the workloads \textbf{TPC-DS}, \textbf{JOB}, and \textbf{STATS}.}

\vspace{-0.5em}
\subsection{\rthree{Analysis of Search Space in Section~\ref{sec:eval:exploration:plan-config}}}
\label{appendix:results:analysis-search-space}

\begin{figure*}
\centering
\subfigure[\textbf{TPC-H}, sf = 1]{ \label{fig:space:plans:tpch1}
    \includegraphics[width=0.28\textwidth]{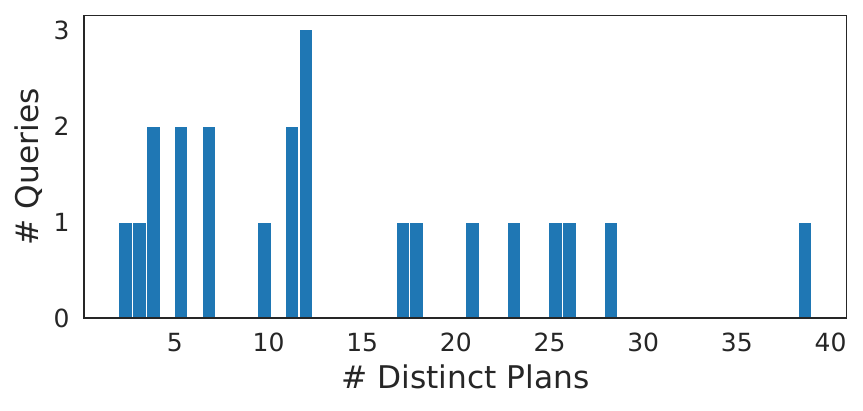}}
\subfigure[\textbf{TPC-DS}, sf = 1]{ \label{fig:space:plans:tpcds1}
    \includegraphics[width=0.28\textwidth]{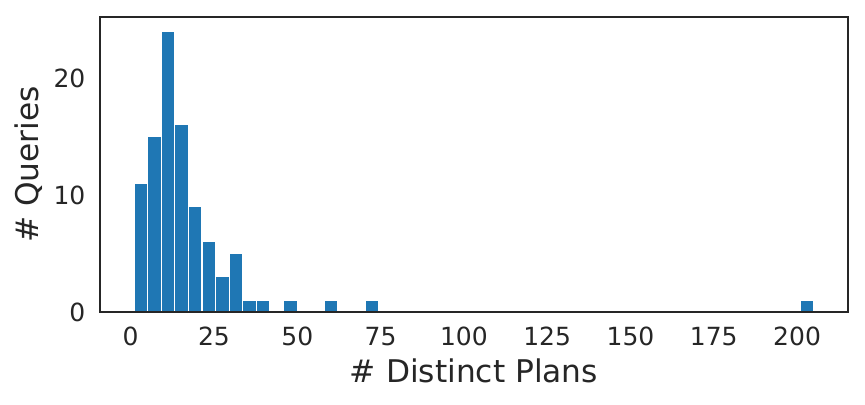}}
\subfigure[\textbf{STATS}]{ \label{fig:space:plans:stats}
    \includegraphics[width=0.28\textwidth]{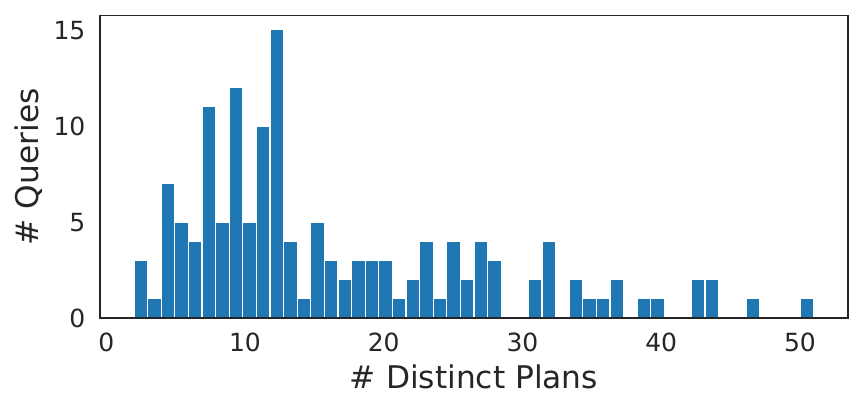}}
\vspace{-1.5em}
\caption{\rthree{Distribution of the number of distinct query plans found during plan space exploration by \textbf{DOT}.}}
\label{fig:space:plans:rest}
\end{figure*}

\begin{figure*}
\centering
\subfigure[\textbf{TPC-H}, sf = 1]{ \label{fig:space:configs:tpch1}
    \includegraphics[width=0.28\textwidth]{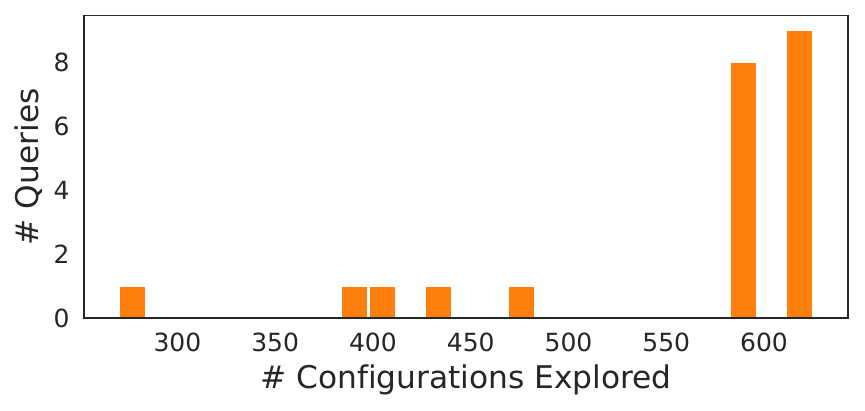}}
\subfigure[\textbf{TPC-DS}, sf = 1]{ \label{fig:space:configs:tpcds1}
    \includegraphics[width=0.28\textwidth]{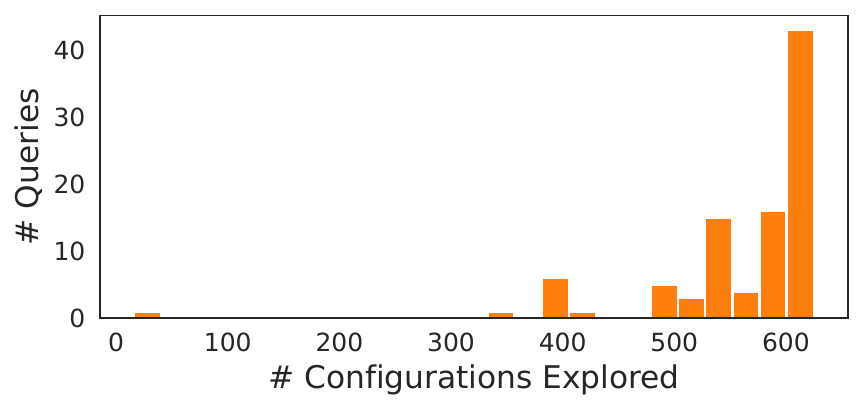}}
\subfigure[\textbf{STATS}]{ \label{fig:space:configs:stats}
    \includegraphics[width=0.28\textwidth]{figures/expr/space_num_configs_stats-eps-converted-to.pdf}}
\vspace{-1.5em}
\caption{\rthree{Distribution of the number of CMP configurations explored (i.e., the number of query optimizer calls) by \textbf{DOT}.}}
\label{fig:space:configs:rest}
\end{figure*}

\begin{figure*}
\centering
\subfigure[\textbf{TPC-H}, sf = 1]{ \label{fig:plan-config-per-query:tpch1}
    \includegraphics[width=0.29\textwidth]{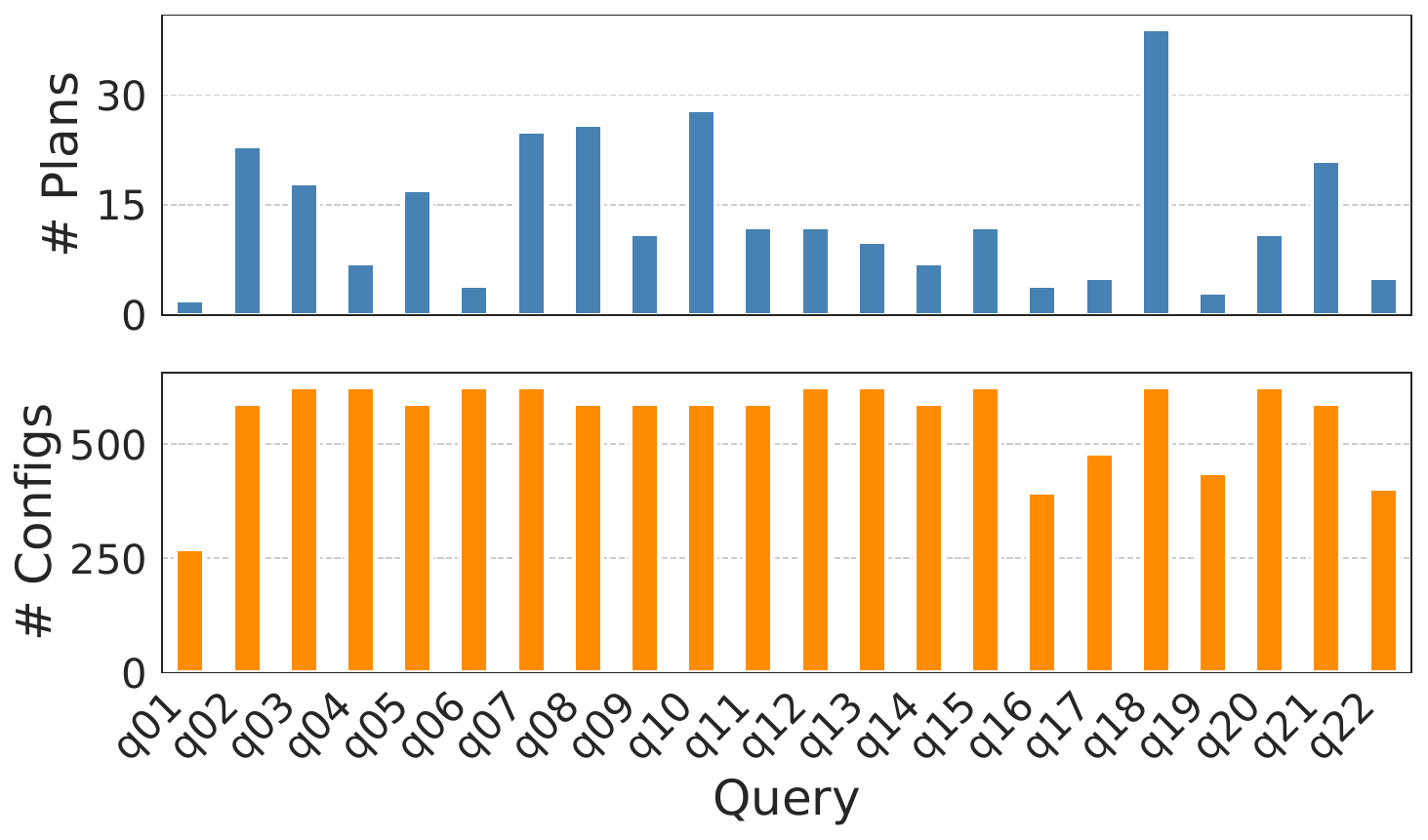}}
\subfigure[\textbf{TPC-DS}, sf = 1, Part I]{ \label{fig:plan-config-per-query:tpcds1:1}
    \includegraphics[width=0.29\textwidth]{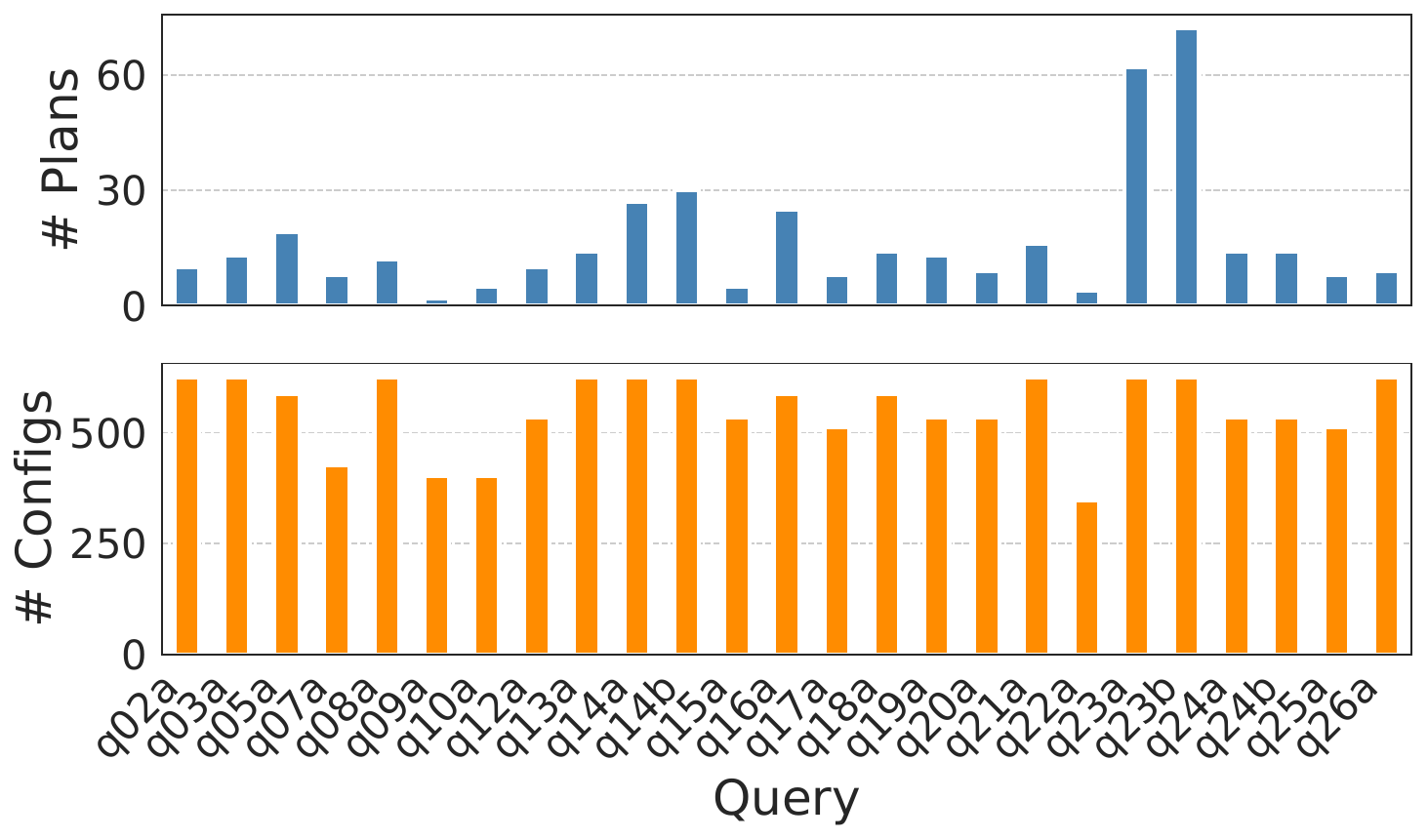}}
\subfigure[\textbf{TPC-DS}, sf = 1, Part II]{ \label{fig:plan-config-per-query:tpcds1:2}
    \includegraphics[width=0.29\textwidth]{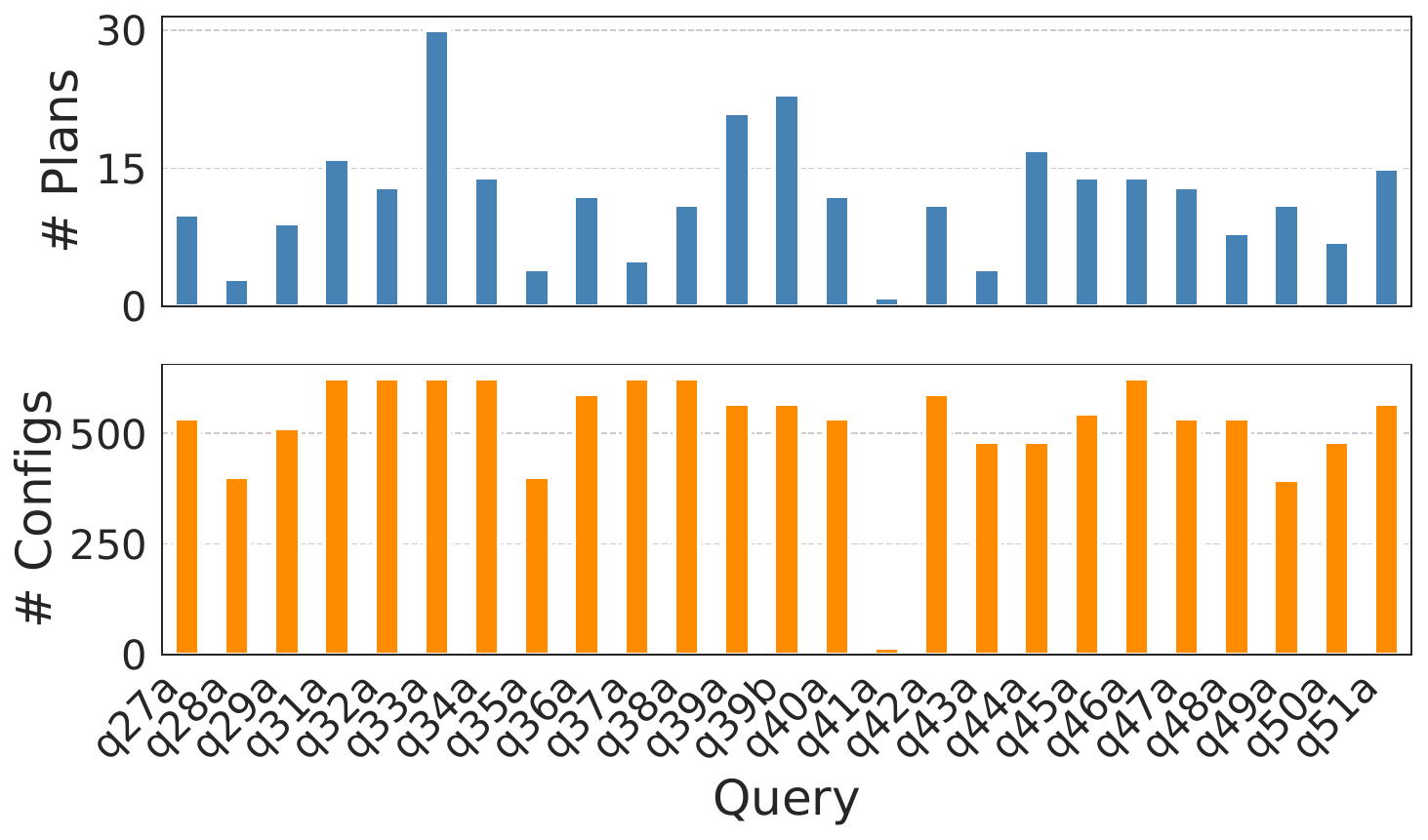}}
\subfigure[\textbf{TPC-DS}, sf = 1, Part III]{ \label{fig:plan-config-per-query:tpcds1:3}
    \includegraphics[width=0.29\textwidth]{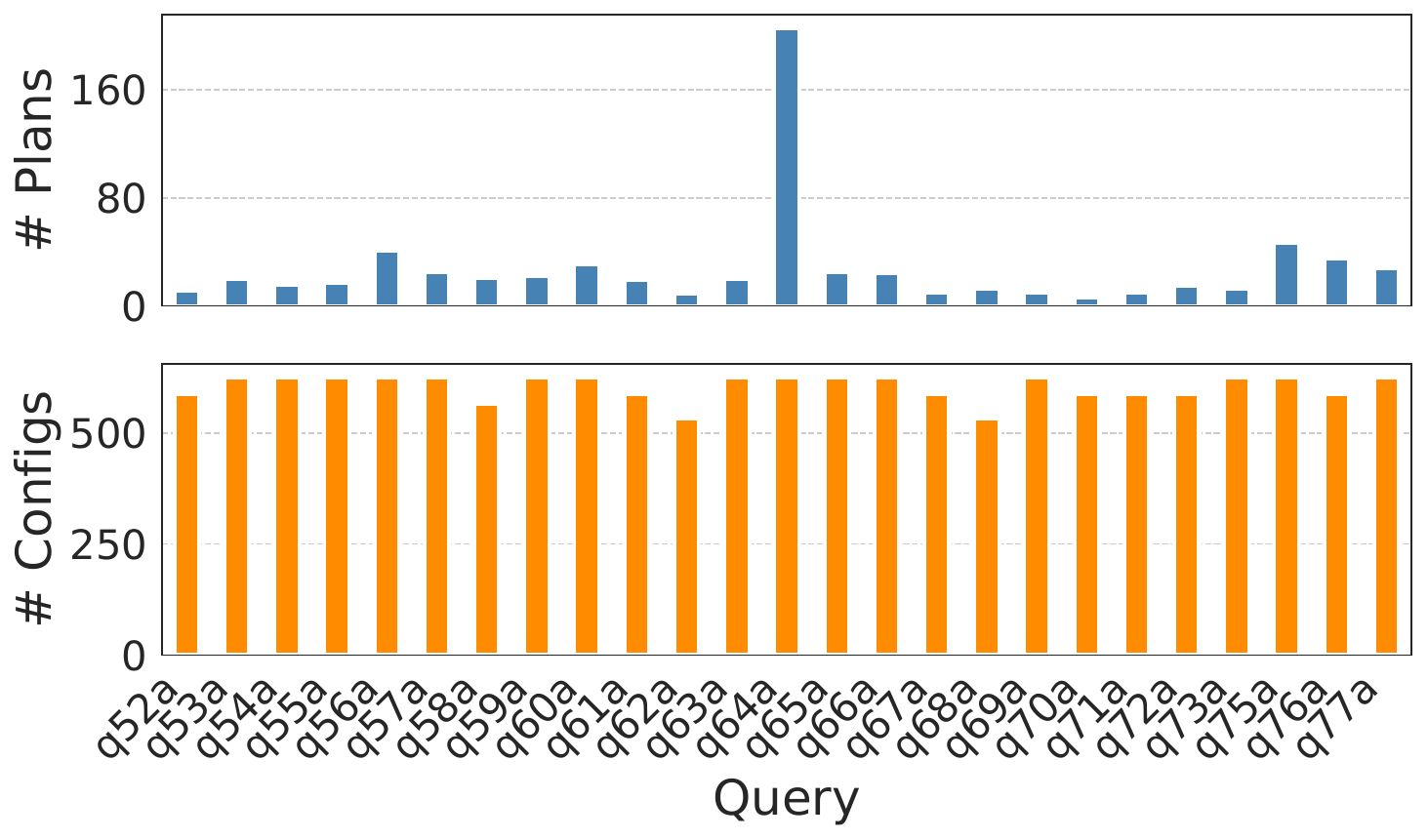}}
\subfigure[\textbf{TPC-DS}, sf = 1, Part IV]{ \label{fig:plan-config-per-query:tpcds1:4}
    \includegraphics[width=0.29\textwidth]{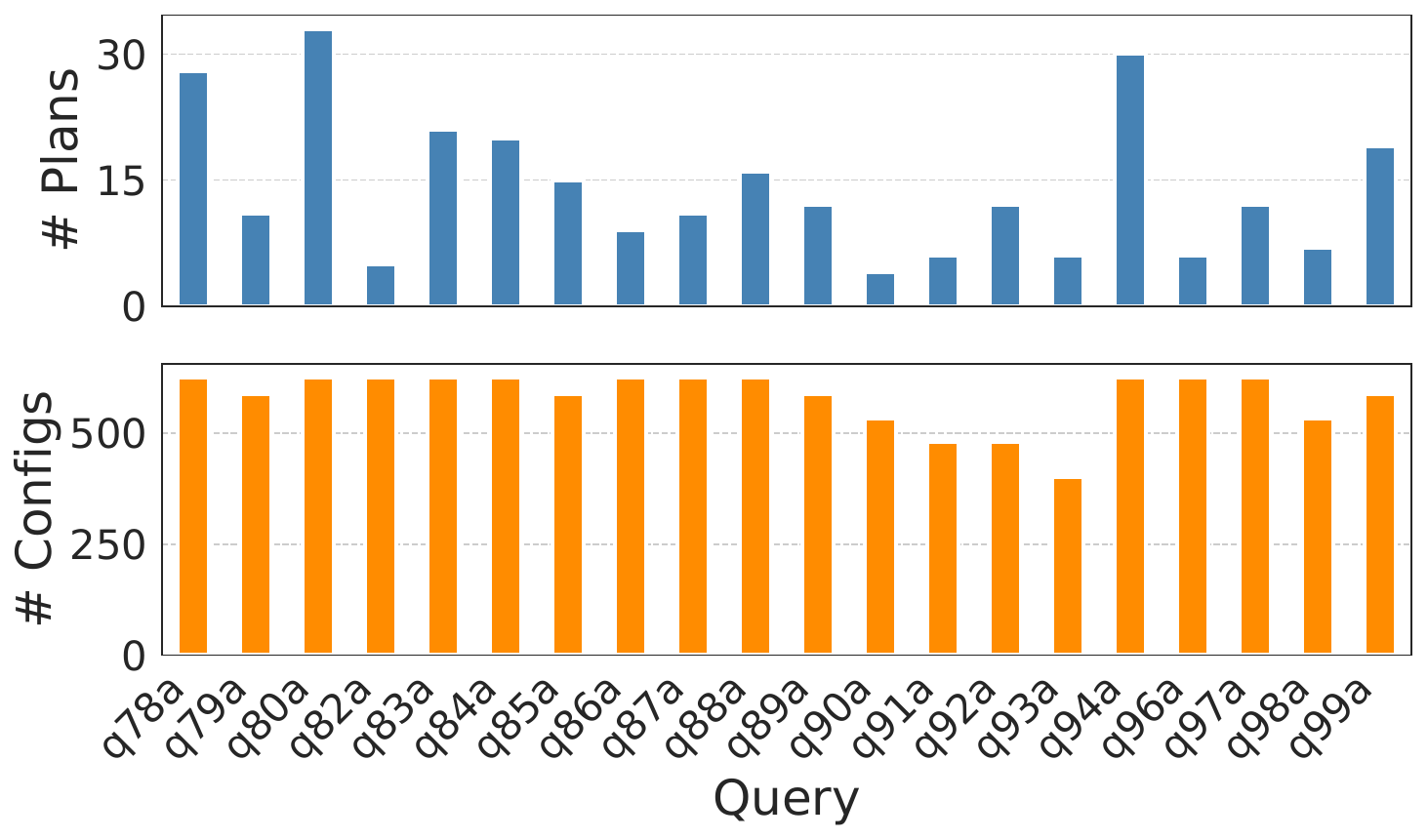}}
\subfigure[\textbf{TPC-DS}, sf = 10, Part II]{ \label{fig:plan-config-per-query:tpcds10:2}
    \includegraphics[width=0.29\textwidth]{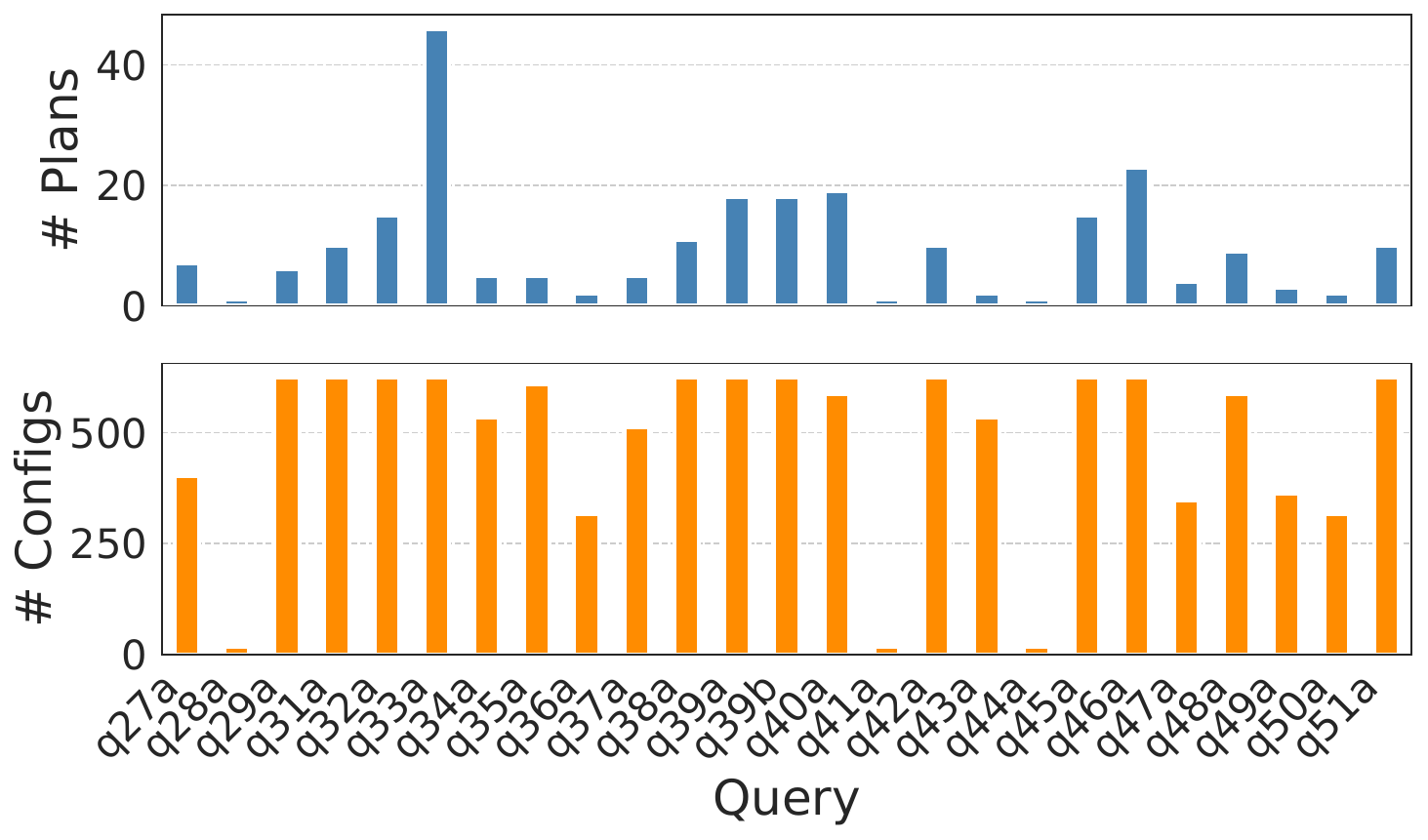}}
\vspace{-1.5em}
\caption{\rthree{(Part I) The number of distinct query plans found and the number of CMP configurations explored (i.e., the number of query optimizer calls) during plan space exploration by \textbf{DOT} for individual queries.}}
\label{fig:plan-config-per-query:all:part-1}
\end{figure*}

\begin{figure*}
\centering
\subfigure[\textbf{TPC-DS}, sf = 10, Part III]{ \label{fig:plan-config-per-query:tpcds10:3}
    \includegraphics[width=0.29\textwidth]{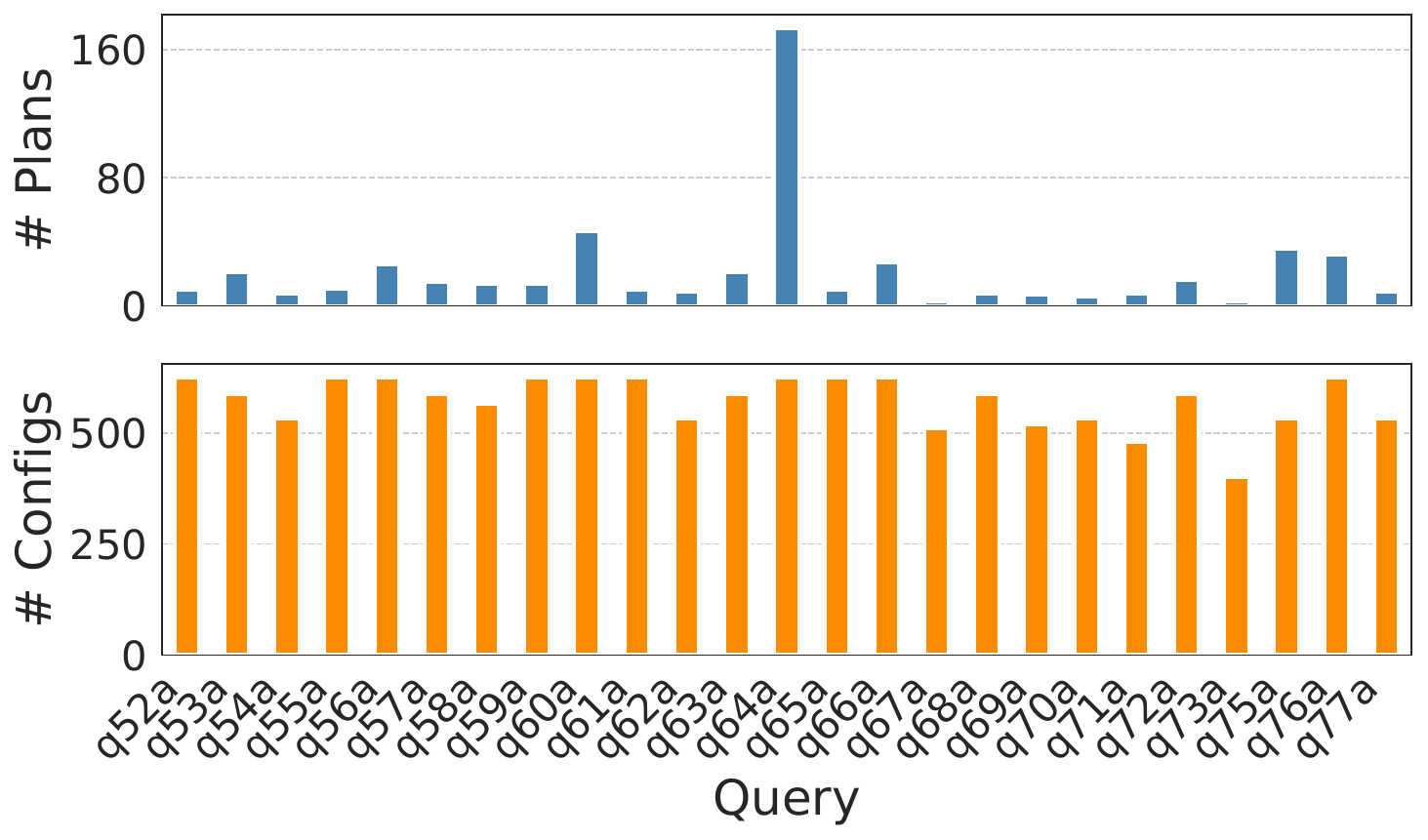}}
\subfigure[\textbf{TPC-DS}, sf = 10, Part IV]{ \label{fig:plan-config-per-query:tpcds10:4}
    \includegraphics[width=0.29\textwidth]{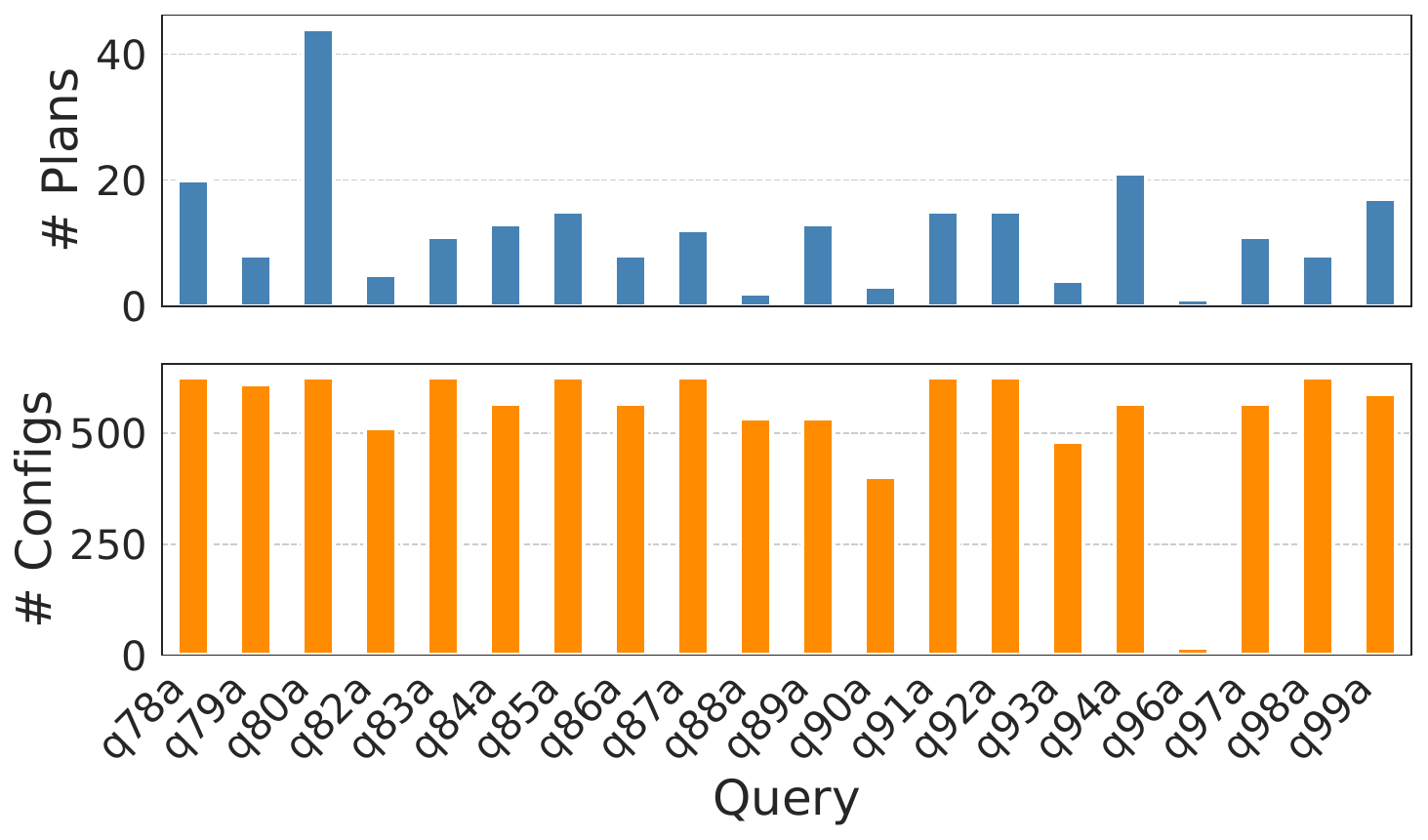}}
\subfigure[\textbf{JOB}, Part II]{ \label{fig:plan-config-per-query:job:2}
    \includegraphics[width=0.29\textwidth]{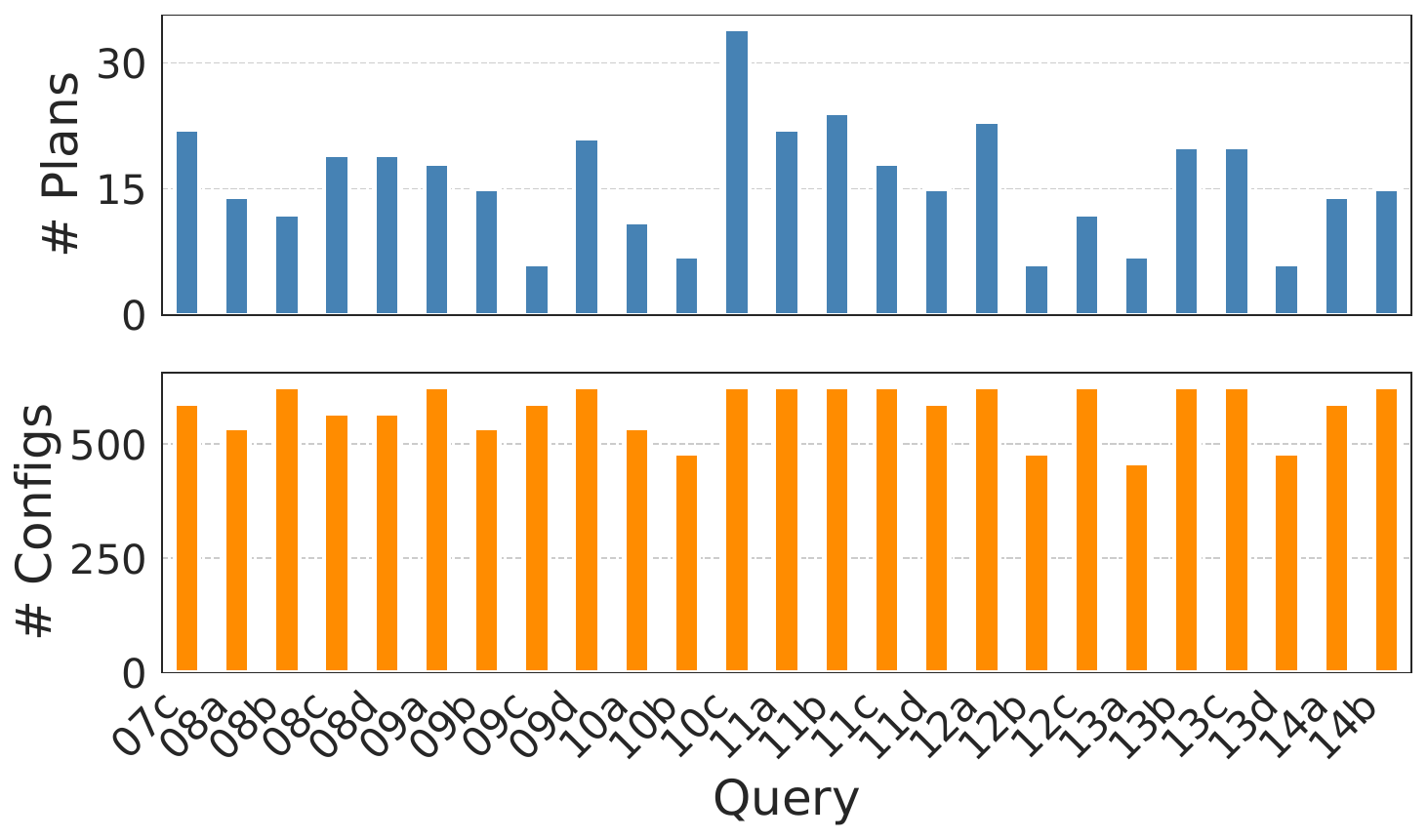}}
\subfigure[\textbf{JOB}, Part III]{ \label{fig:plan-config-per-query:job:3}
    \includegraphics[width=0.29\textwidth]{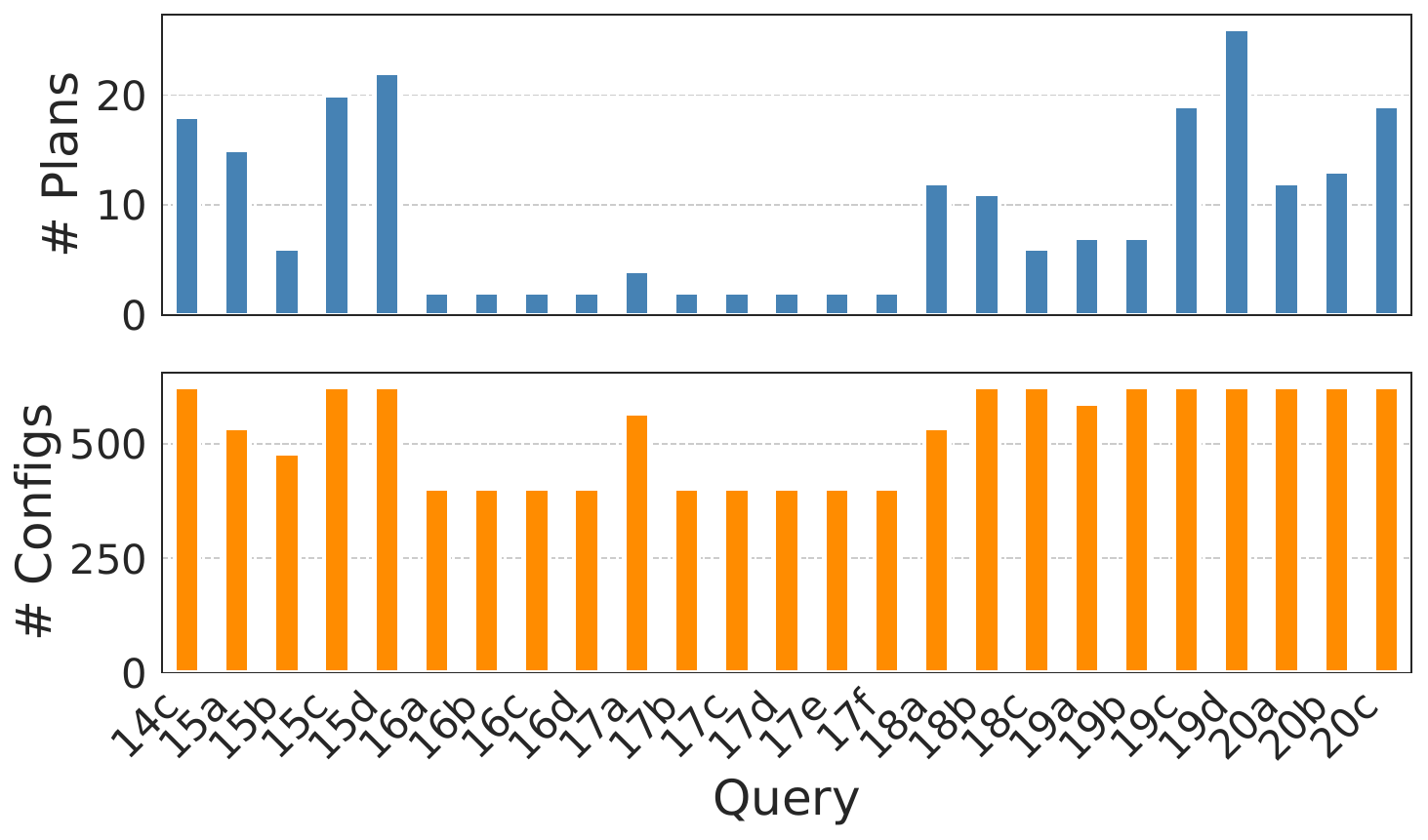}}
\subfigure[\textbf{JOB}, Part IV]{ \label{fig:plan-config-per-query:job:4}
    \includegraphics[width=0.29\textwidth]{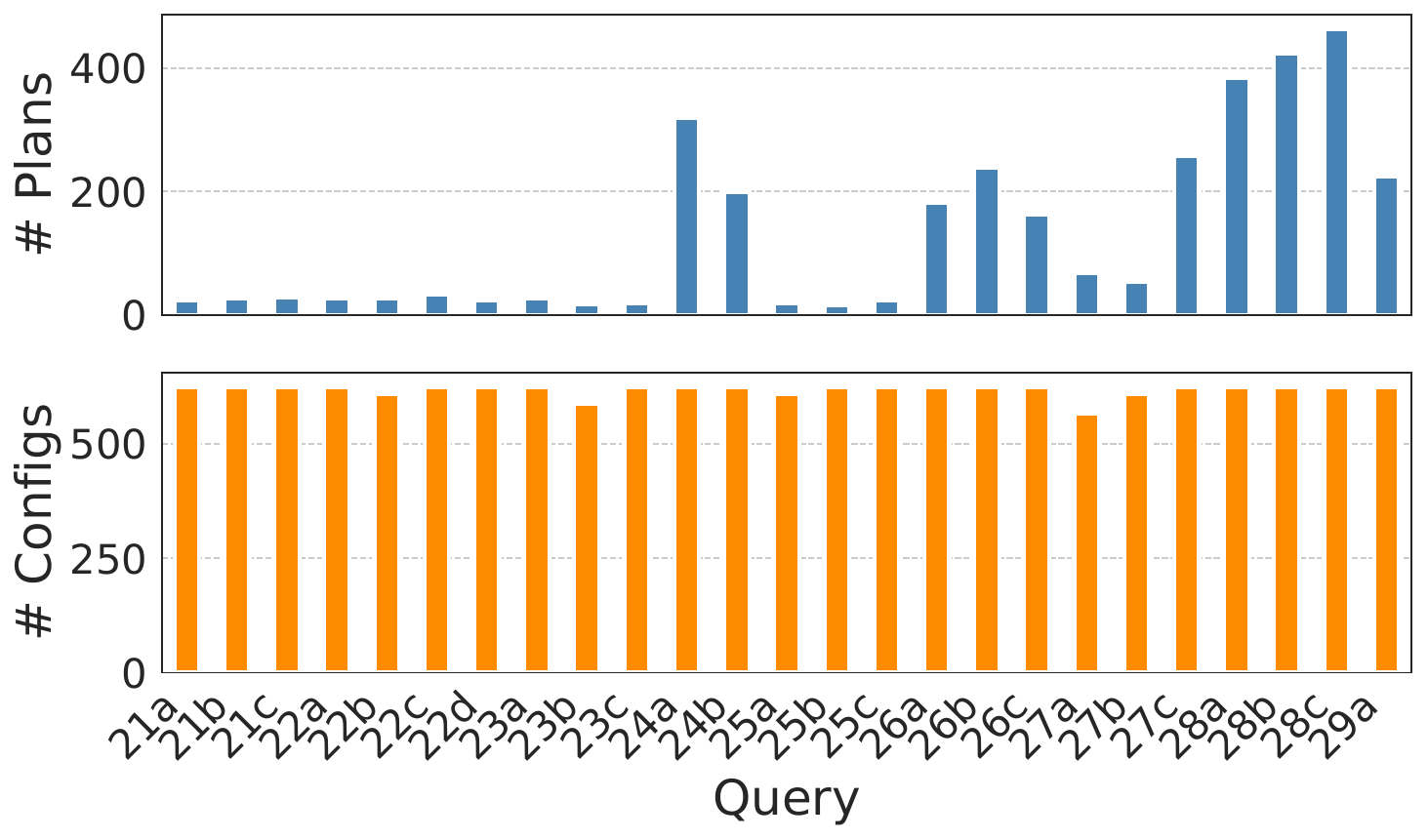}}
\subfigure[\textbf{JOB}, Part V]{ \label{fig:plan-config-per-query:job:5}
    \includegraphics[width=0.29\textwidth]{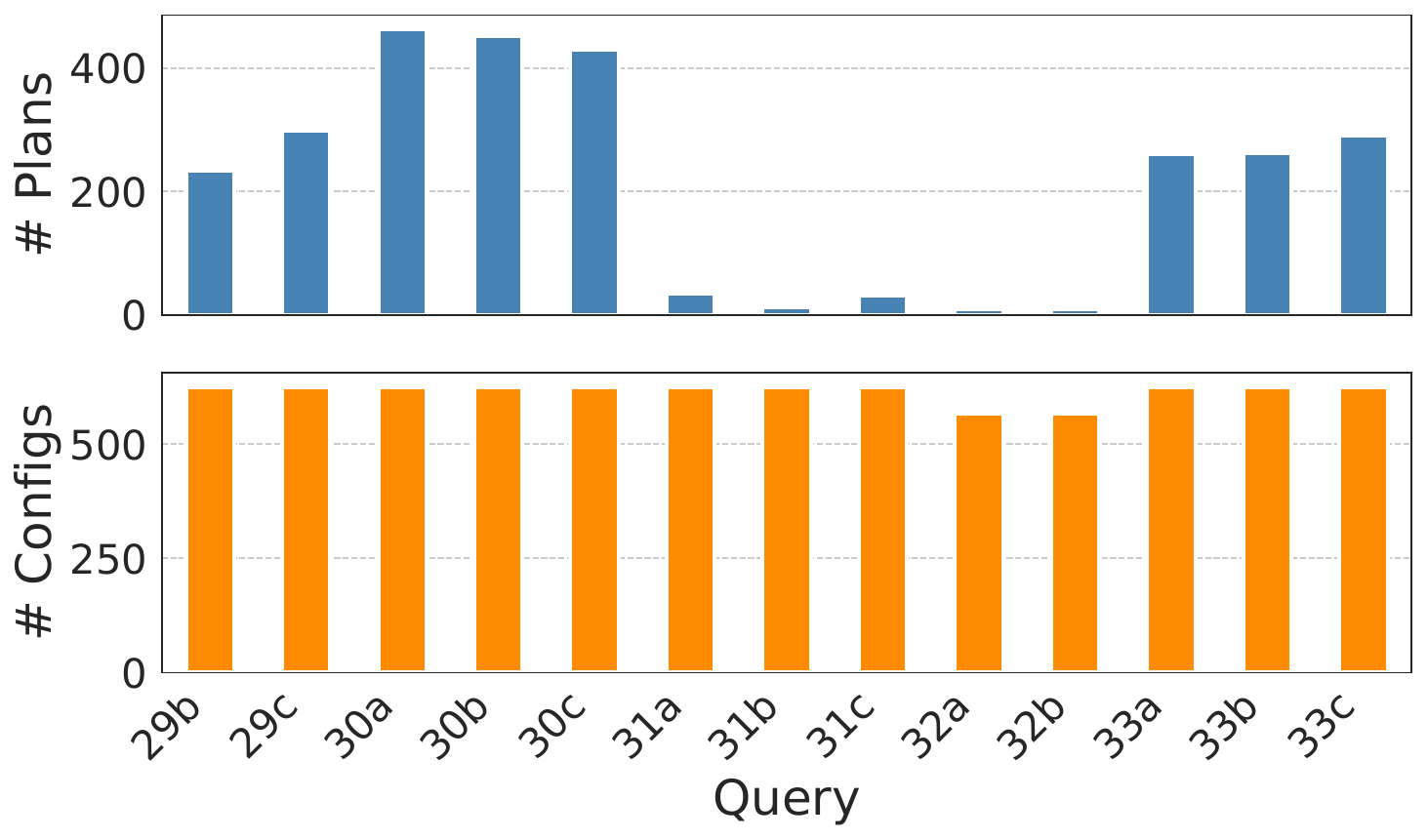}}
\vspace{-1.5em}
\caption{\rthree{(Part II) The number of distinct query plans found and the number of CMP configurations explored (i.e., the number of query optimizer calls) during plan space exploration by \textbf{DOT} for individual queries.}}
\label{fig:plan-config-per-query:all:part-2}
\end{figure*}

\begin{figure*}
\centering
\subfigure[\textbf{STATS}, Part I]{ \label{fig:plan-config-per-query:stats:1}
    \includegraphics[width=0.29\textwidth]{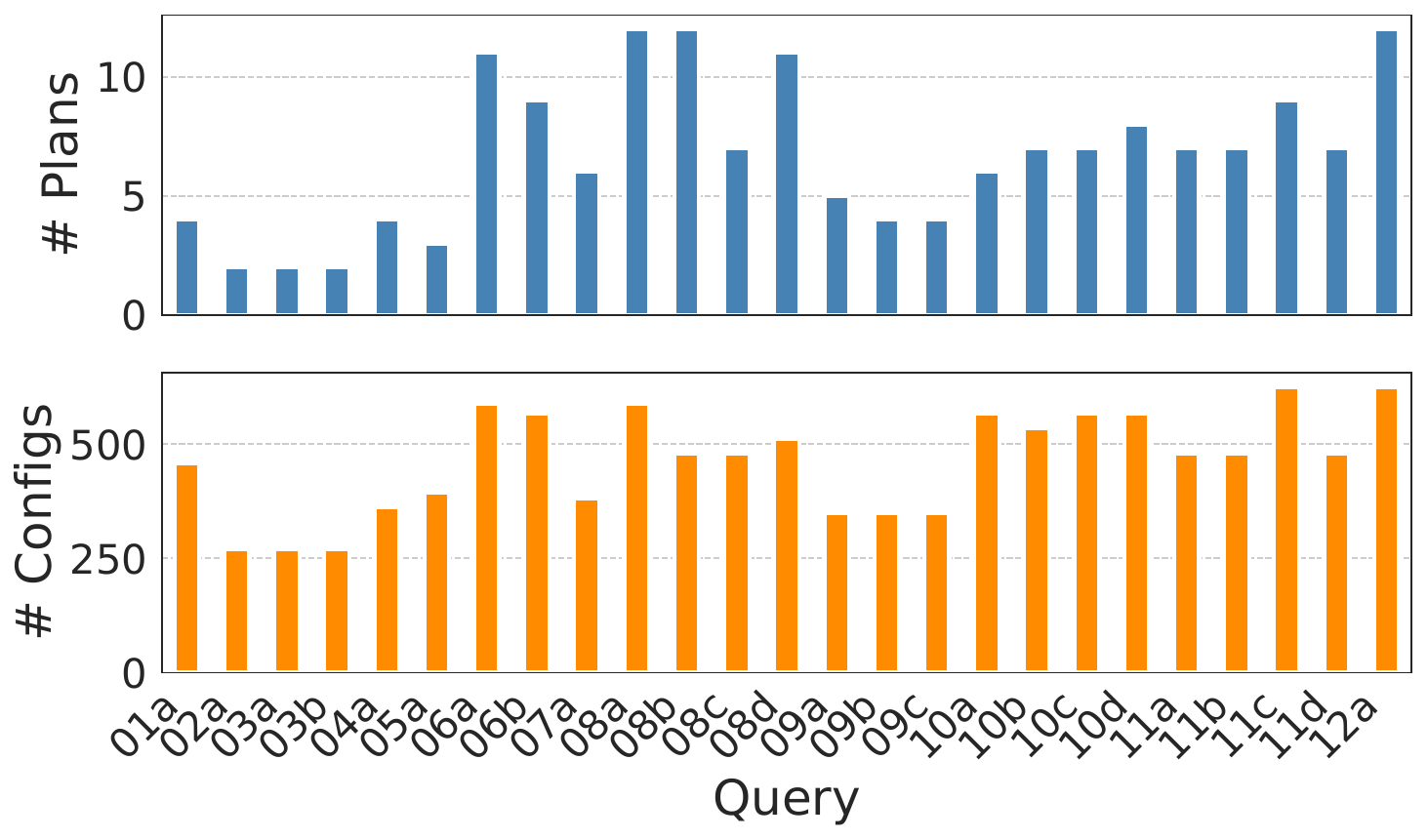}}
\subfigure[\textbf{STATS}, Part II]{ \label{fig:plan-config-per-query:stats:2}
    \includegraphics[width=0.29\textwidth]{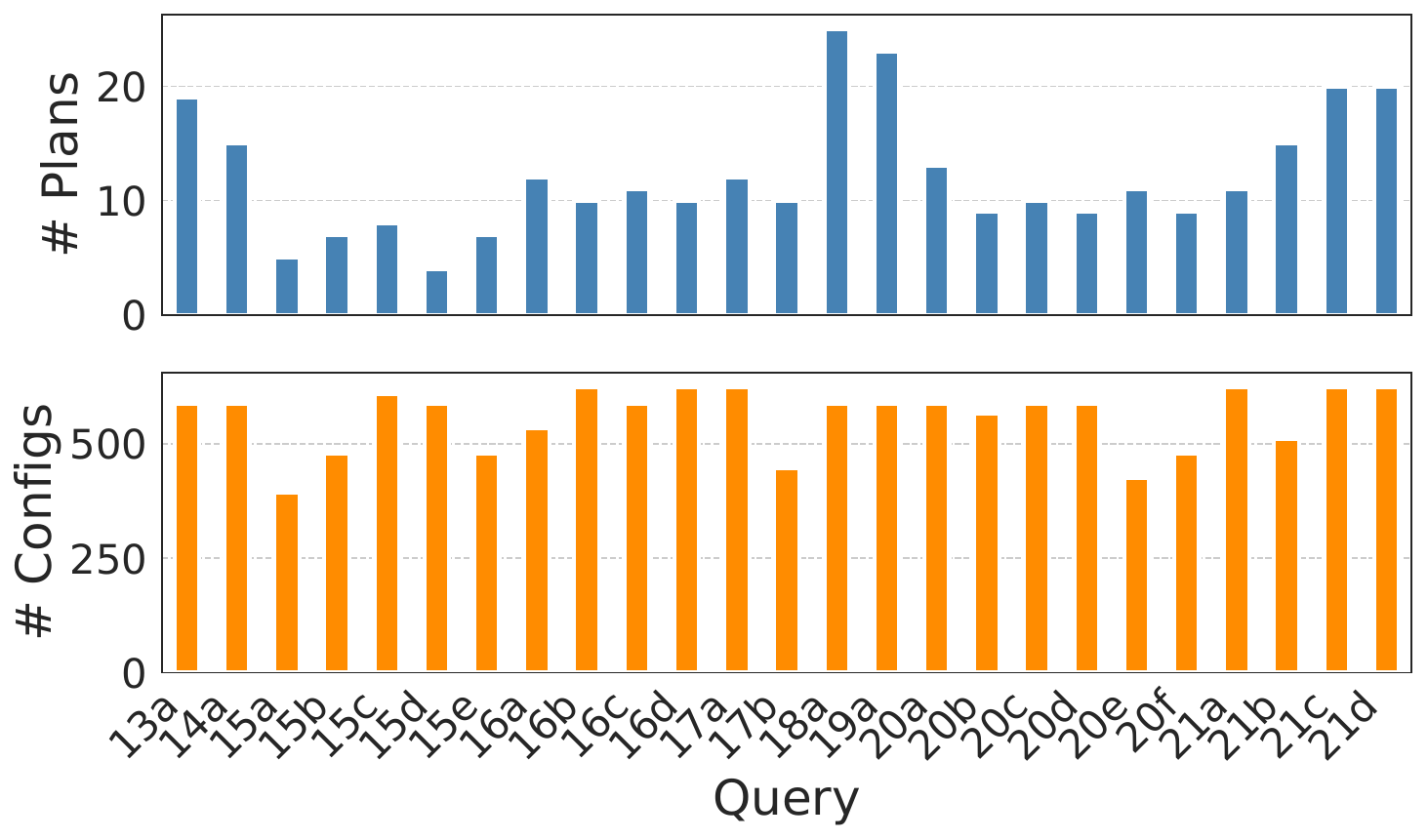}}
\subfigure[\textbf{STATS}, Part III]{ \label{fig:plan-config-per-query:stats:3}
    \includegraphics[width=0.29\textwidth]{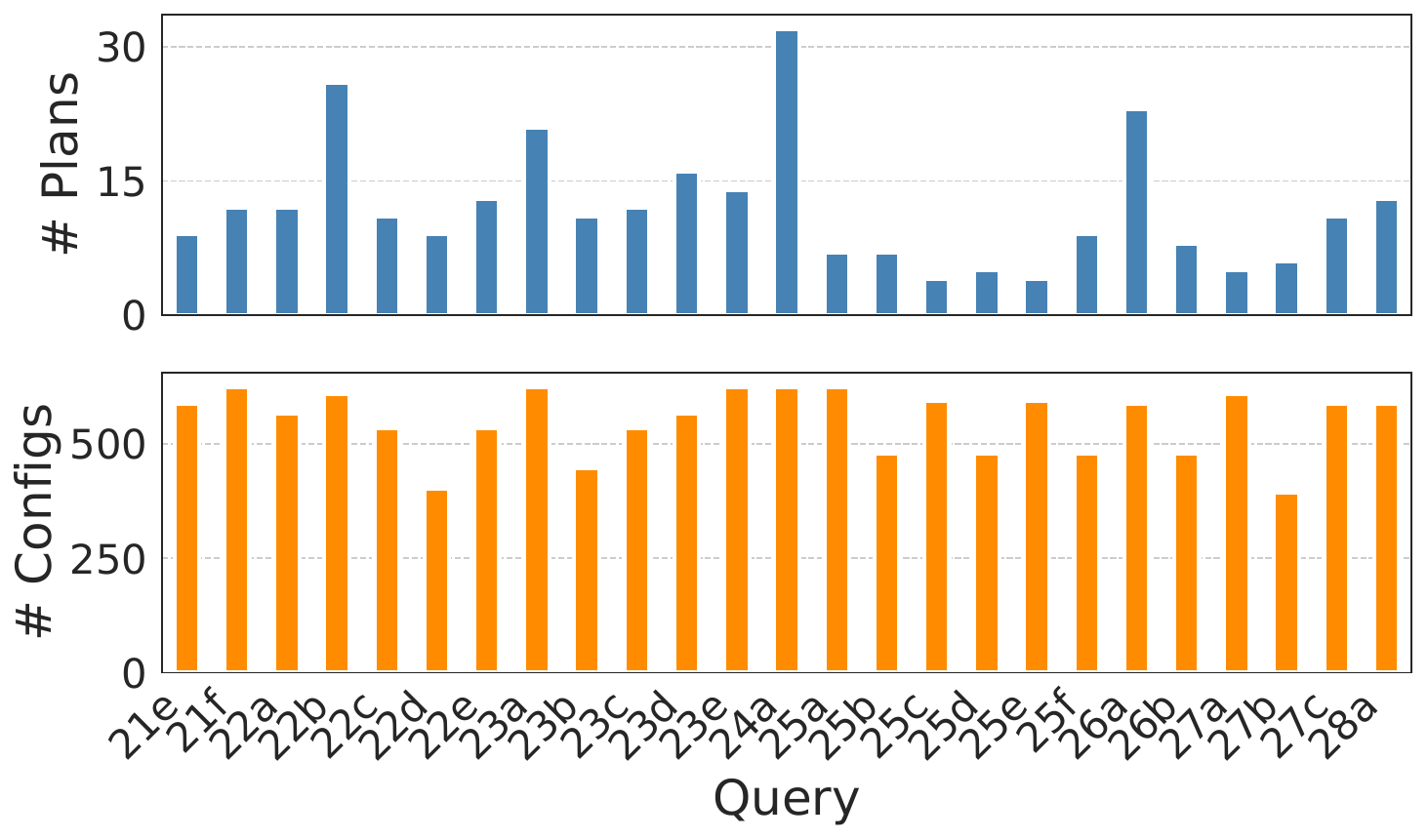}}
\subfigure[\textbf{STATS}, Part IV]{ \label{fig:plan-config-per-query:stats:4}
    \includegraphics[width=0.29\textwidth]{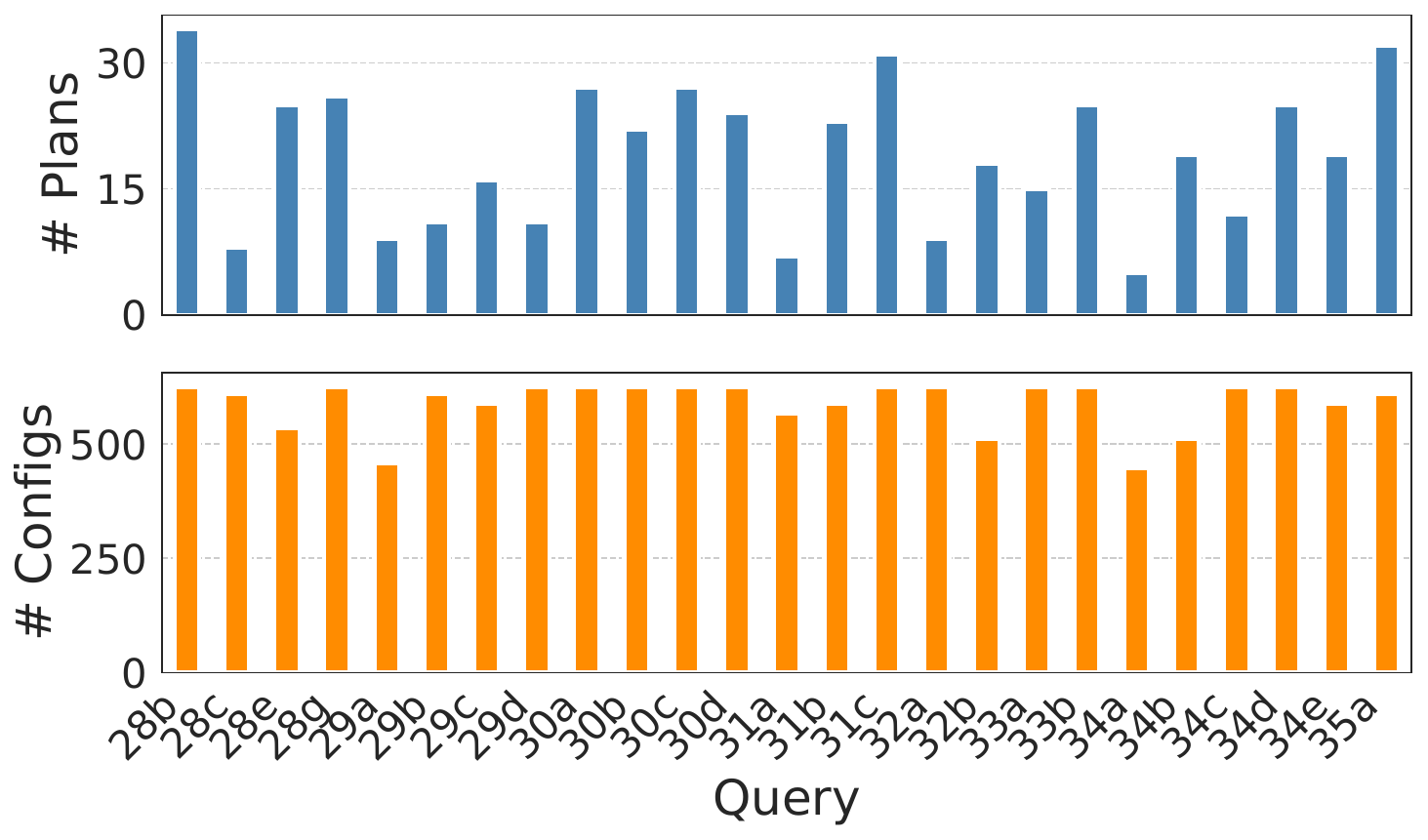}}
\subfigure[\textbf{STATS}, Part V]{ \label{fig:plan-config-per-query:stats:5}
   \includegraphics[width=0.29\textwidth]{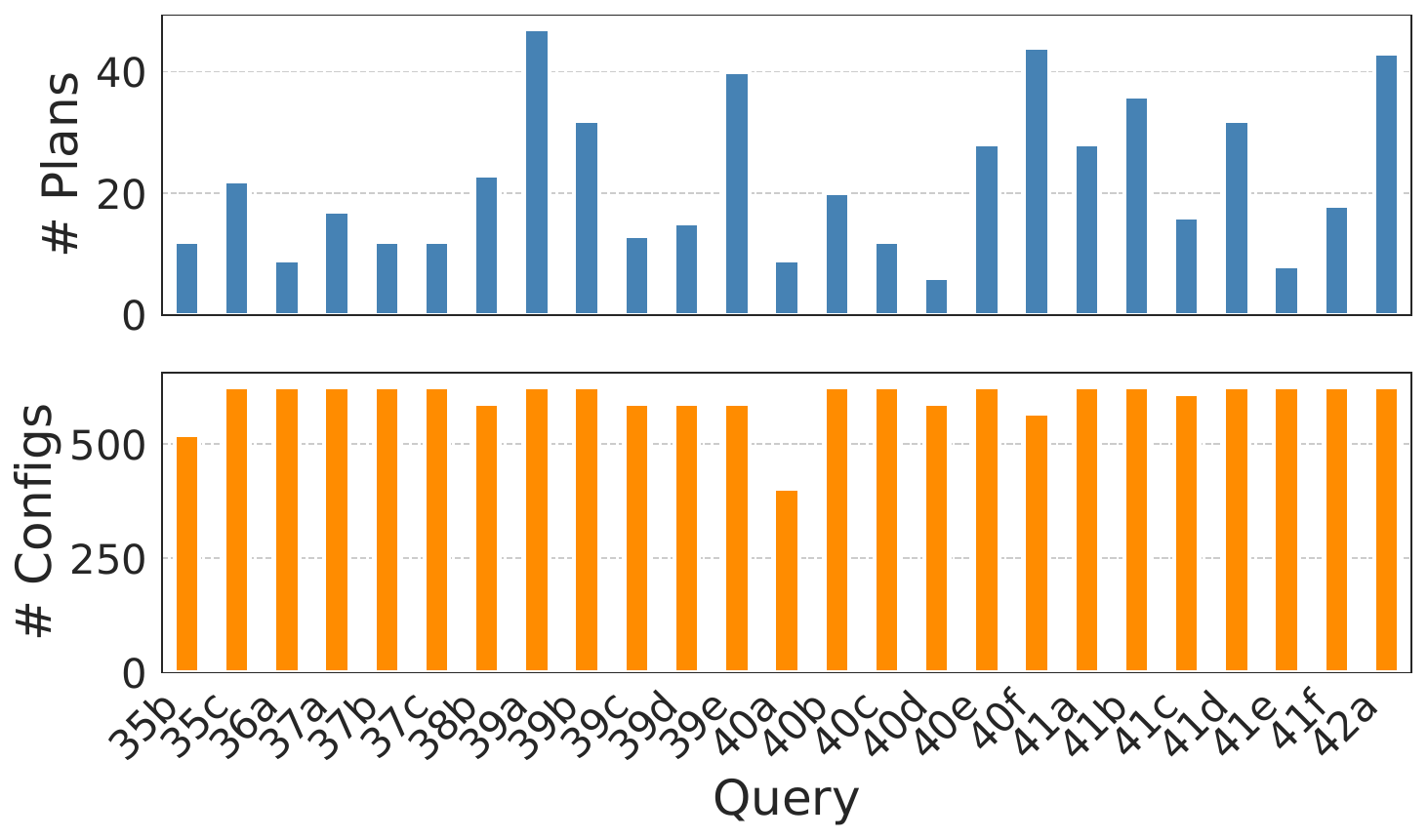}}
\subfigure[\textbf{STATS}, Part VI]{ \label{fig:plan-config-per-query:stats:6}
   \includegraphics[width=0.29\textwidth]{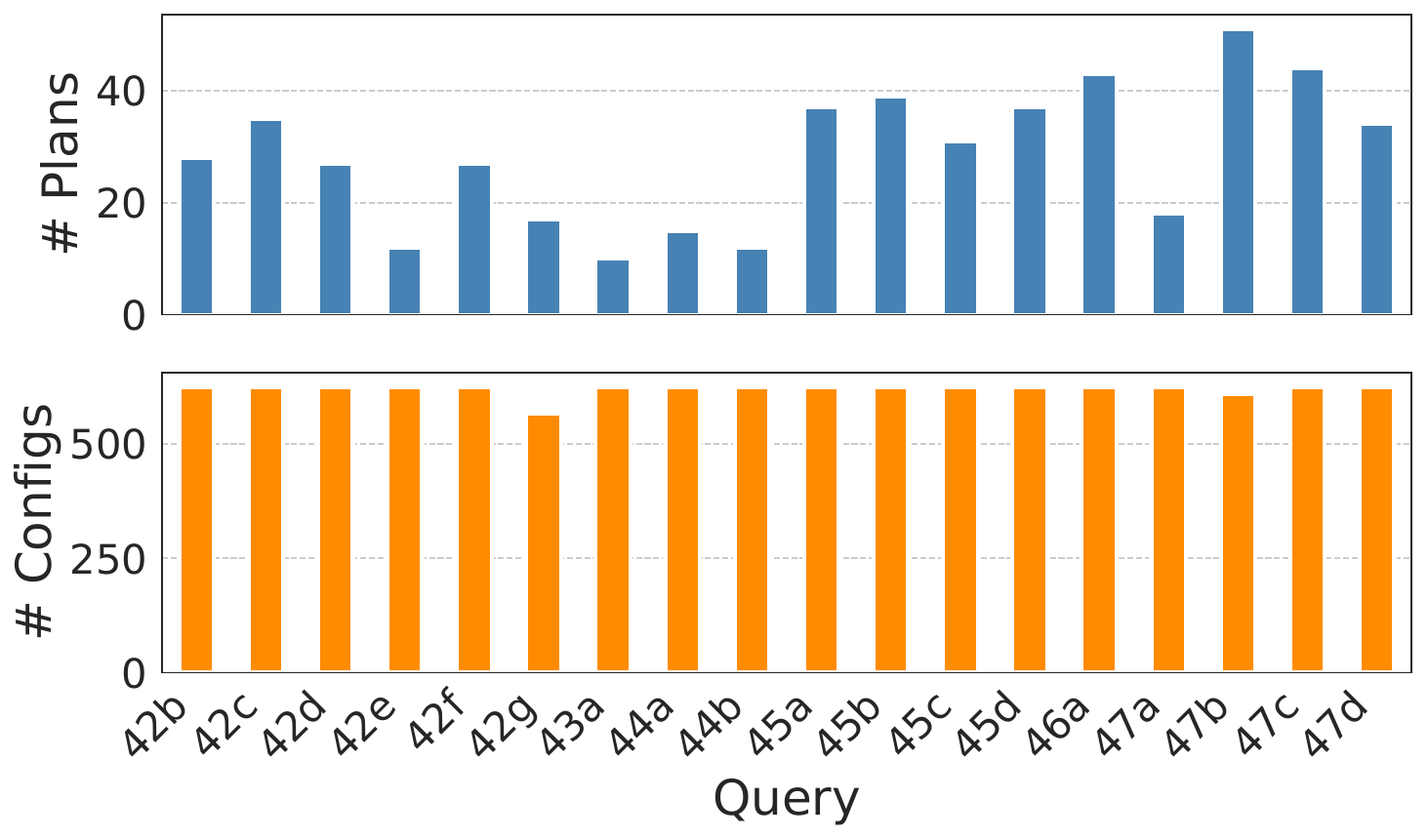}}
\vspace{-1.5em}
\caption{\rthree{(Part III) The number of distinct query plans found and the number of CMP configurations explored (i.e., the number of query optimizer calls) during plan space exploration by \textbf{DOT} for individual queries.}}
\label{fig:plan-config-per-query:all:part-3}
\vspace{-1em}
\end{figure*}

\rthree{Figures~\ref{fig:space:plans:rest} and~\ref{fig:space:configs:rest} present distributions of the number of distinct plans found and the number of configurations explored for the remaining workloads. Figures~\ref{fig:plan-config-per-query:all:part-1},~\ref{fig:plan-config-per-query:all:part-2}, and~\ref{fig:plan-config-per-query:all:part-3} further present the number of distinct plans found and the number of configurations explored for individual queries of those workloads.}

\vspace{-0.5em}
\subsection{\rthree{Analysis of CMP Diversity in Section~\ref{sec:eval:exploration:cmp-diversity}}}
\label{appendix:results:analysis-cmpt-diversity}

\begin{figure*}
\centering
\subfigure[\textbf{TPC-H}, sf = 1]{ \label{fig:cmp-diversity:tpch1}
    \includegraphics[width=0.28\textwidth]{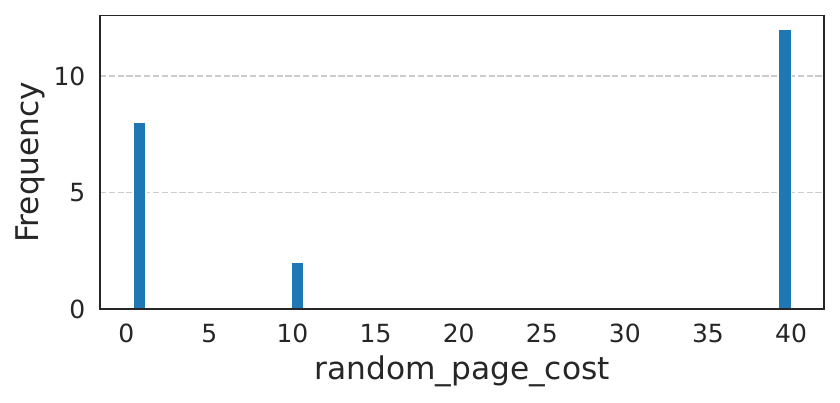}}
\subfigure[\textbf{TPC-DS}, sf = 1]{ \label{fig:cmp-diversity:tpcds1}
    \includegraphics[width=0.28\textwidth]{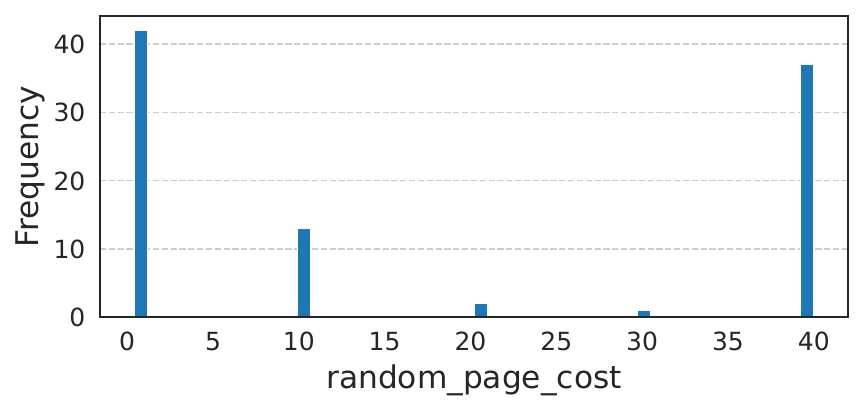}}
\subfigure[\textbf{STATS}]{ \label{fig:cmp-diversity:stats}
    \includegraphics[width=0.28\textwidth]{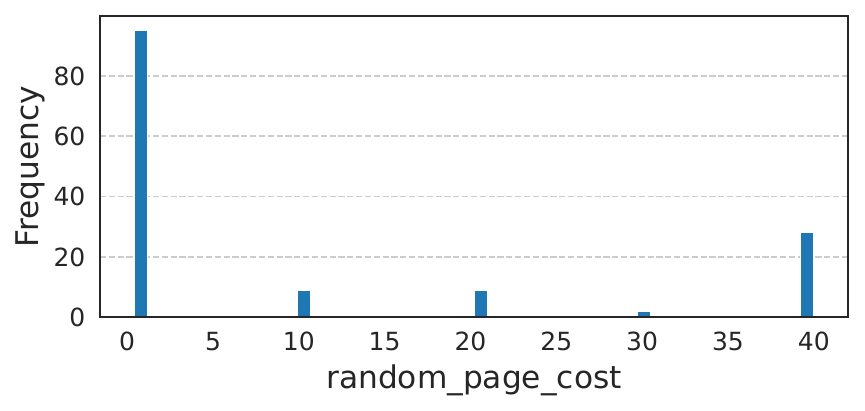}}
\vspace{-1.5em}
\caption{\rtwo{Distribution of values of the CMP \texttt{random\_page\_cost} used by the best query plans found by \textbf{DOT}.}}
\label{fig:cmp-diversity:rest}
\vspace{-1em}
\end{figure*}

\rthree{Figure~\ref{fig:cmp-diversity:rest} presents the distribution of \texttt{random\_page\_cost} values used by the best query plans found by \textbf{DOT} for the other workloads.}

\subsection{Queries in the Case Study of Section~\ref{sec:eval:exploration:case-study}}
\label{appendix:results:case-study-queries}

\subsubsection{\textbf{TPC-H} (sf = 10) Query 11}
\begin{verbatim}
select
    ps_partkey,
    sum(ps_supplycost * ps_availqty) as value
from
    partsupp,
    supplier,
    nation
where
    ps_suppkey = s_suppkey
    and s_nationkey = n_nationkey
    and n_name = 'GERMANY'
group by
    ps_partkey 
having sum(ps_supplycost * ps_availqty) > (
			select
				sum(ps_supplycost * ps_availqty) * 0.0000100000
			from
				partsupp,
				supplier,
				nation
			where
				ps_suppkey = s_suppkey
				and s_nationkey = n_nationkey
				and n_name = 'GERMANY'
		)
order by
    value desc
\end{verbatim}

\subsubsection{\textbf{Stats} Query 47b}

\begin{verbatim}
SELECT COUNT(*) 
FROM 
    comments as c, 
    posts as p, 
    postLinks as pl, 
    postHistory as ph, 
    votes as v, 
    badges as b, 
    users as u 
WHERE 
    p.Id = pl.RelatedPostId 
    AND b.UserId = u.Id 
    AND c.UserId = u.Id 
    AND p.Id = v.PostId 
    AND p.Id = c.PostId 
    AND p.Id = ph.PostId 
    AND p.Score <= 40 
    AND p.CommentCount >= 0 
    AND p.CreationDate >=
        '2010-07-28 17:40:56'::timestamp 
    AND p.CreationDate <=
        '2014-09-11 04:22:44'::timestamp 
    AND pl.LinkTypeId = 1;
\end{verbatim}

\subsection{Reduction of Tuning Time in Section~\ref{sec:eval:tuning-time}}
\label{appendix:results:reduction-tuning-time}

\begin{figure}
\centering
\subfigure[\textbf{TPC-H}, sf = 10]{ \label{fig:tune:reduction:tpch10}
    \includegraphics[width=0.48\columnwidth]{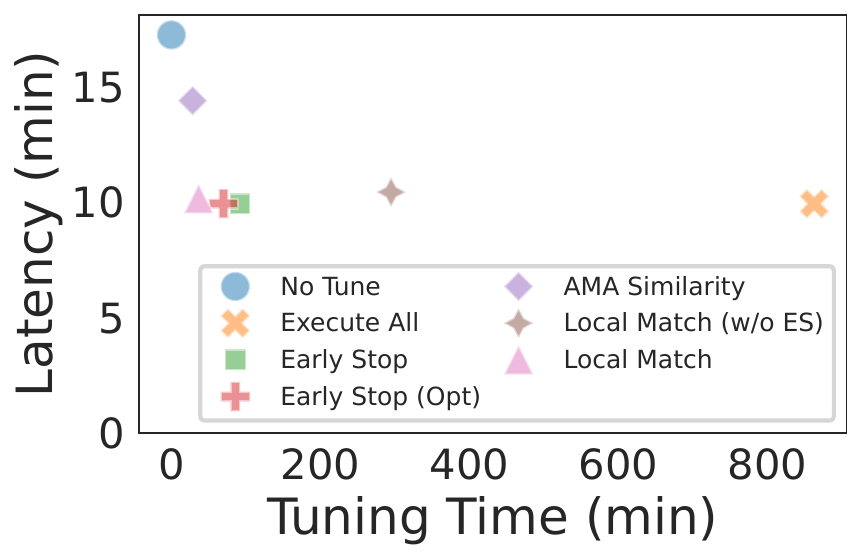}}
\subfigure[\textbf{TPC-DS}, sf = 10]{ \label{fig:tune:reduction:tpcds10}
    \includegraphics[width=0.48\columnwidth]{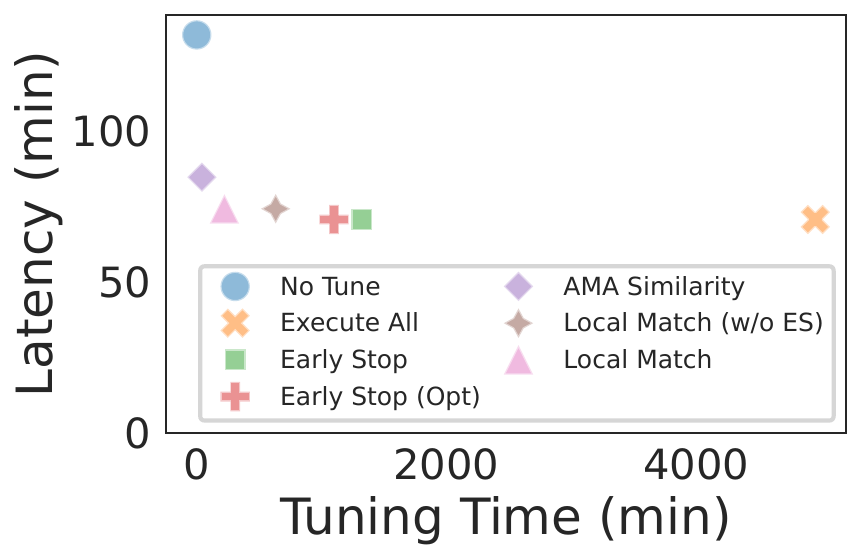}}
\vspace{-1.5em}
\caption{\shepherd{Comparison of variant approaches for reducing tuning time as well as baseline approaches, using TPC-H and TPC-DS with scaling factor 10.}}
\vspace{0.5em}
\label{fig:tune:reduction:more}
\end{figure}

\subsubsection{Comparison of the Variants}
Figure~\ref{fig:tune:reduction:more} presents the results of the variant approaches for reducing tuning time, as well as the baseline approaches, that have been detailed in Section~\ref{sec:eval:tuning-time}, by using \textbf{TPC-H} and \textbf{TPC-DS} with scaling factor 10.
The observations are similar to those from \textbf{TPC-H} and \textbf{TPC-DS} with scaling factor 1.

\vspace{-0.5em}
\subsubsection{Robustness of Similarity-based Mechanisms}

\begin{figure}
\centering
\subfigure[\textbf{TPC-H}, sf = 1]{ \label{fig:tune:sensitive:tpch1}
    \includegraphics[width=0.48\columnwidth]{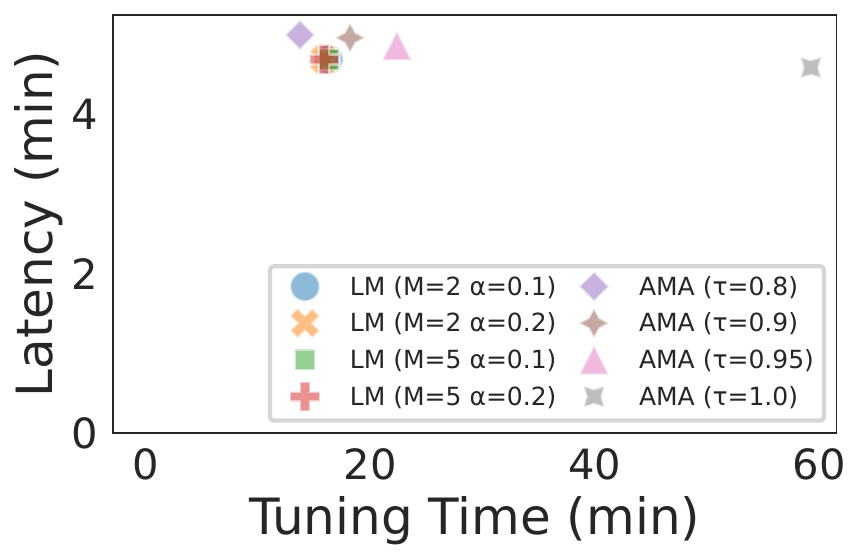}}
\subfigure[\textbf{TPC-H}, sf = 10]{ \label{fig:tune:sensitive:tpch10}
    \includegraphics[width=0.48\columnwidth]{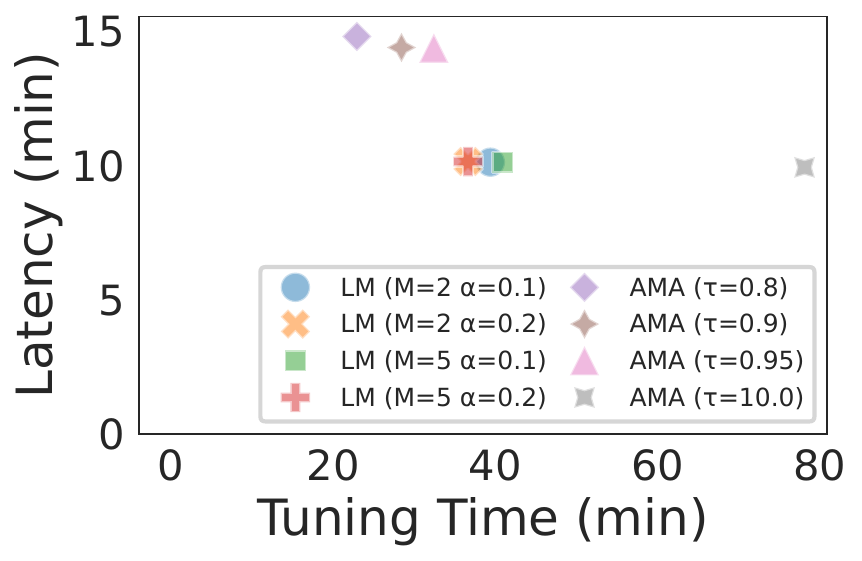}}
\subfigure[\textbf{TPC-DS}, sf = 1]{ \label{fig:tune:sensitive:tpcds1}
    \includegraphics[width=0.48\columnwidth]{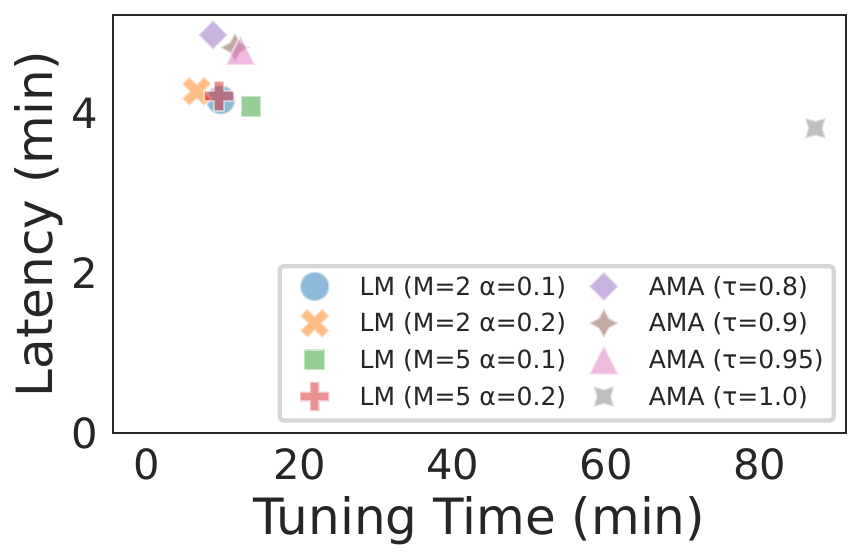}}
\subfigure[\textbf{TPC-DS}, sf = 10]{ \label{fig:tune:sensitive:tpcds10}
    \includegraphics[width=0.48\columnwidth]{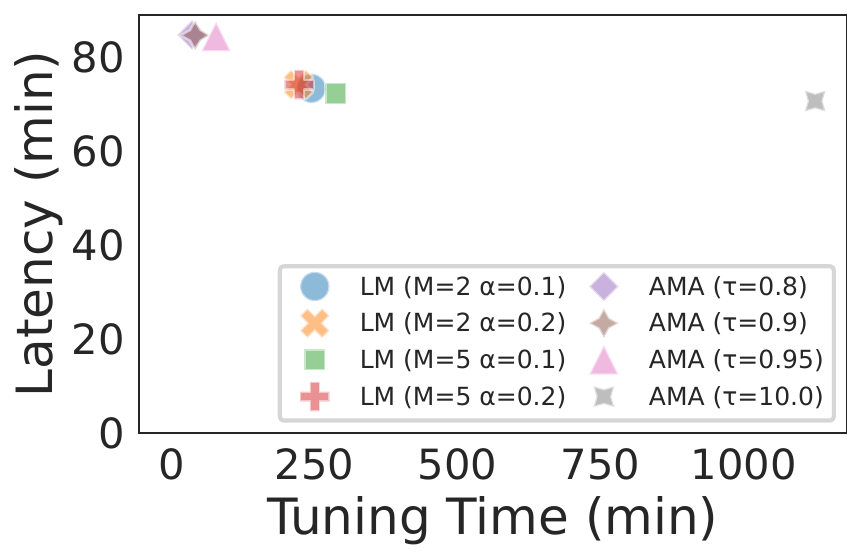}}
\subfigure[\textbf{JOB}]{ \label{fig:tune:sensitive:job}
    \includegraphics[width=0.48\columnwidth]{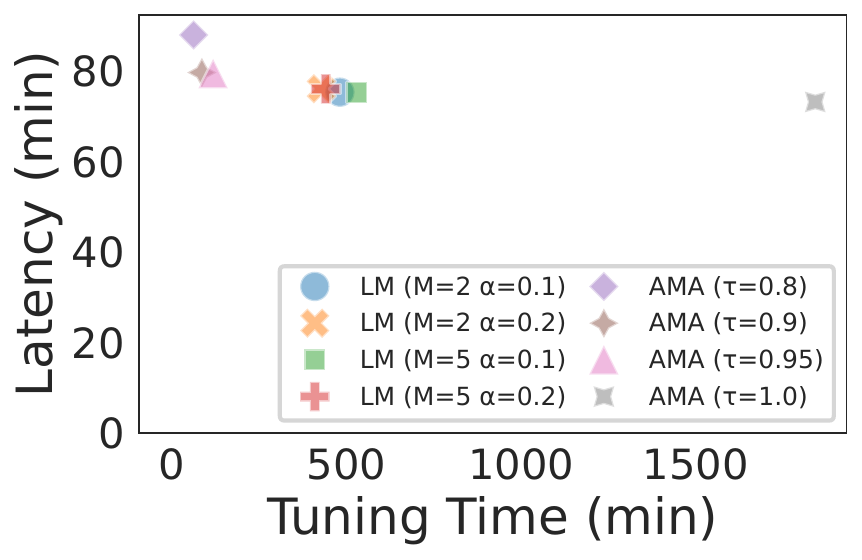}}
\subfigure[\textbf{STATS}]{ \label{fig:tune:sensitive:stats}
    \includegraphics[width=0.48\columnwidth]{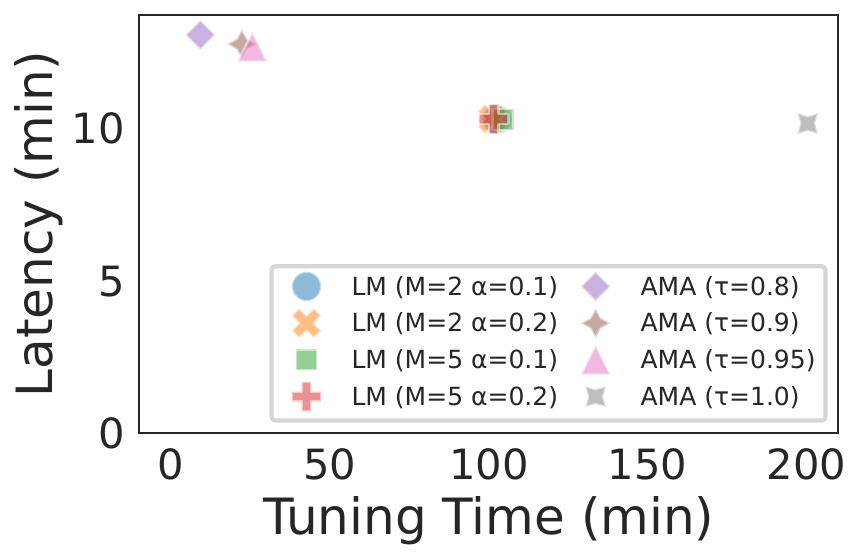}}
\vspace{-1.5em}
\caption{\shepherd{Robustness of the similarity-based plan-filtering mechanisms w.r.t. the input threshold parameters. ``AMA'' means ``\emph{AMA Similarity}'' and ``LM'' means ``\emph{Local Match}.''}}
\label{fig:tune:sensitive}
\end{figure}

Both \emph{AMA Similarity} and \emph{Local Match} require input threshold parameters, whose settings can affect their efficacy.
To study the robustness of both approaches, we further vary their input parameters and observe their sensitivity.
Specifically, we vary the plan similarity threshold $\tau$ (Algorithm~\ref{alg:optimization:embedding-sim}) used by \emph{AMA Similarity} as $\tau\in\{0.8, 0.9, 0.95, 1\}$.
We vary the two input parameters, $\alpha$ to determine bottleneck operators and $M$ to select the top bottleneck operators (Algorithm~\ref{alg:optimization:bottleneck-sim}), used by \emph{Local Match} as $\alpha\in\{0.1, 0.2\}$ and $M\in\{2,5\}$.
Figure~\ref{fig:tune:sensitive} presents the evaluation results.
We find that \emph{AMA Similarity} is more sensitive to its input parameters compared to \emph{Local Match} in terms of workload execution time.
For example, the workload execution time increases dramatically for both \textbf{JOB} when decreasing the threshold $\tau$ employed by \emph{AMA Similarity} from 0.9 to 0.8, as shown by Figure~\ref{fig:tune:sensitive:job}.
The workload execution time by using \emph{Local Match} remains stable for all workloads tested when varying its input parameters $\alpha$ and $M$.
On the other hand, both \emph{AMA Similarity} and \emph{Local Match} are insensitive to its input parameters in terms of tuning time, with the exception of setting the similarity threshold $\tau=1$ for \emph{AMA Similarity} which represents a degenerate case where no candidate plans can be filtered out.



\nobalance

\subsection{Extended PG Evaluation in Section~\ref{sec:eval:pg-extended}}
\label{appendix:results:pg-eval-extended}

\subsubsection{\rtwo{Detailed Results on Tuned Database}}

\begin{figure}[t]
\centering
\subfigure[\textbf{TPC-H}, sf = 10]{ \label{fig:improve:tuned:singlequery:tpch10}
    \includegraphics[width=1.0\columnwidth]{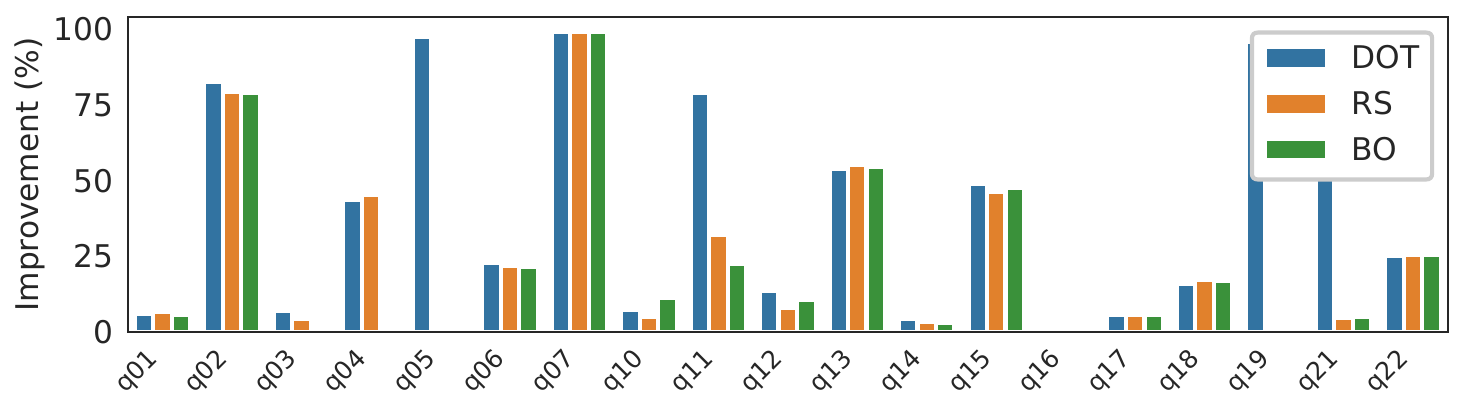}}
\subfigure[\textbf{TPC-DS}, sf = 10]{ \label{fig:improve:tuned:singlequery:tpcds10}
    \includegraphics[width=1.0\columnwidth]{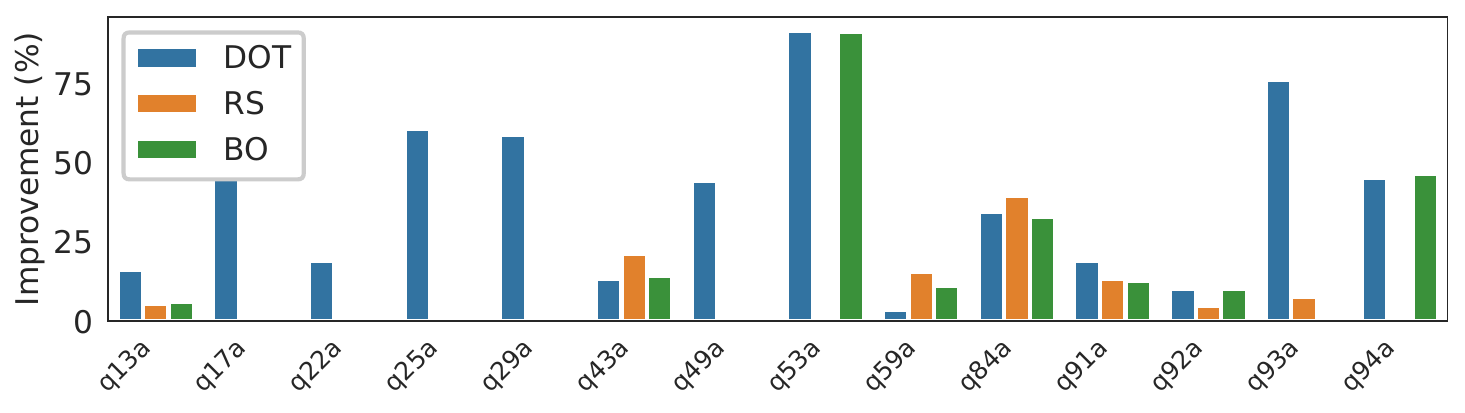}}
\subfigure[\textbf{STATS}, Part I]{ \label{fig:improve:tuned:singlequery:stats:part-1}
    \includegraphics[width=1.0\columnwidth]{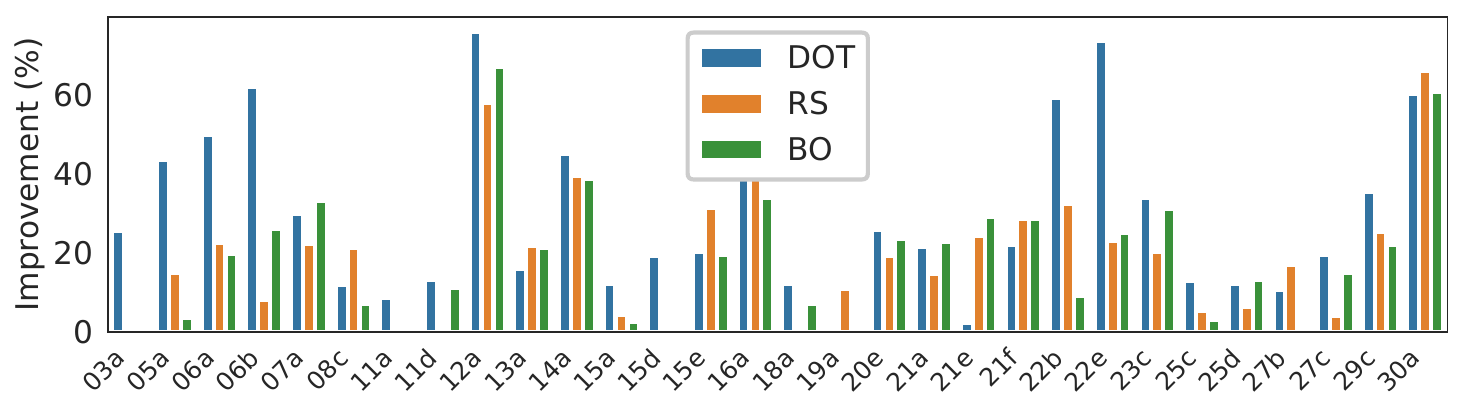}}
\subfigure[\textbf{STATS}, Part II]{ \label{fig:improve:tuned:singlequery:stats:part-2}
    \includegraphics[width=1.0\columnwidth]{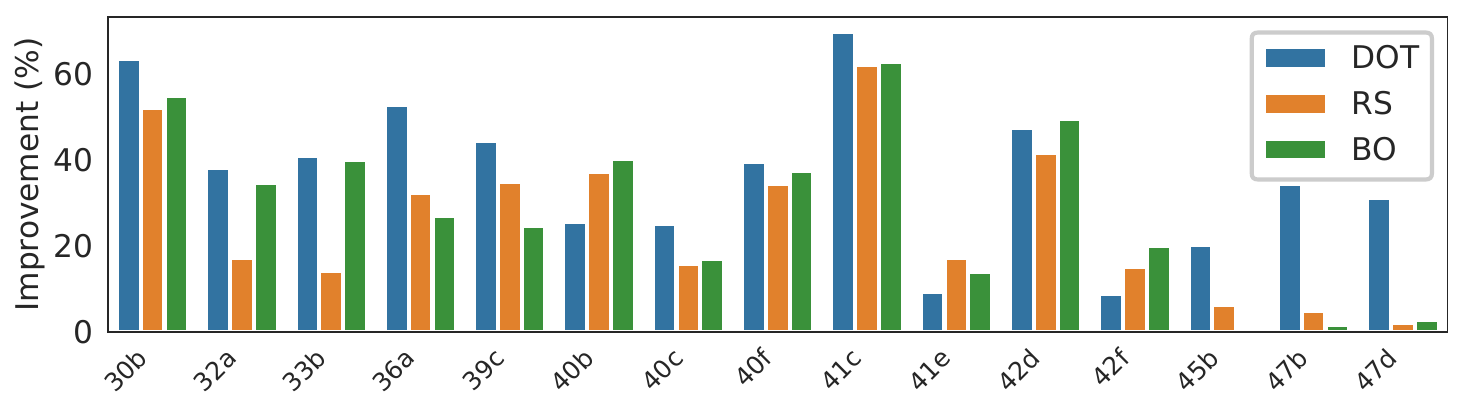}}
\vspace{-1.5em}
\caption{\rv{Improvements over \textbf{Default} by \textbf{RS}, \textbf{BO}, and \textbf{DOT} for individual queries on a tuned PG system.}}
\label{fig:improve:tuned:singlequery}
\end{figure}

\rtwo{Figure~\ref{fig:improve:tuned:singlequery} presents the details of the (percentage) improvements of individual queries for the tuned PG databases.}

\subsubsection{\rtwo{Detailed Results with ``Warm Cache''}}

\begin{figure}[t]
\centering
\subfigure[\textbf{TPC-H}, sf = 10]{ \label{fig:improve:warm-cache:singlequery:tpch10}
    \includegraphics[width=1.0\columnwidth]{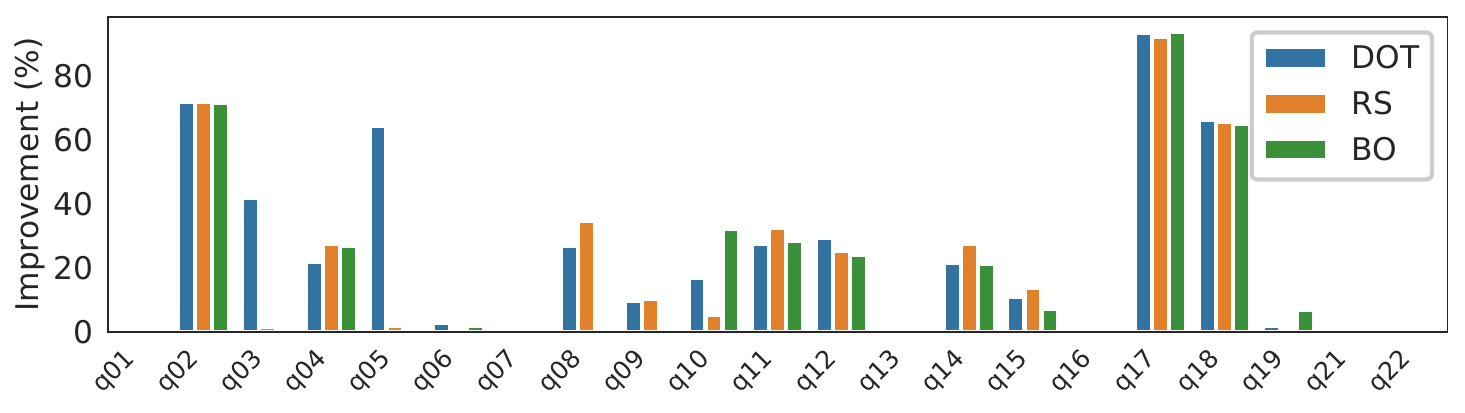}}
\subfigure[\textbf{TPC-DS}, sf = 10]{ \label{fig:improve:warm-cache:singlequery:tpcds10}
    \includegraphics[width=1.0\columnwidth]{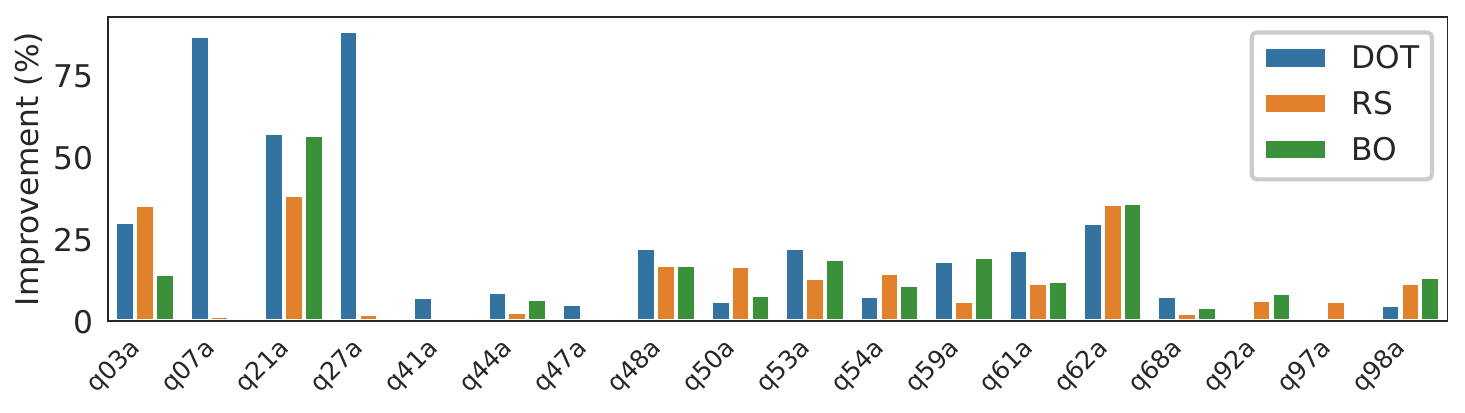}}
\subfigure[\textbf{STATS}]{ \label{fig:improve:warm-cache:singlequery:stats}
    \includegraphics[width=1.0\columnwidth]{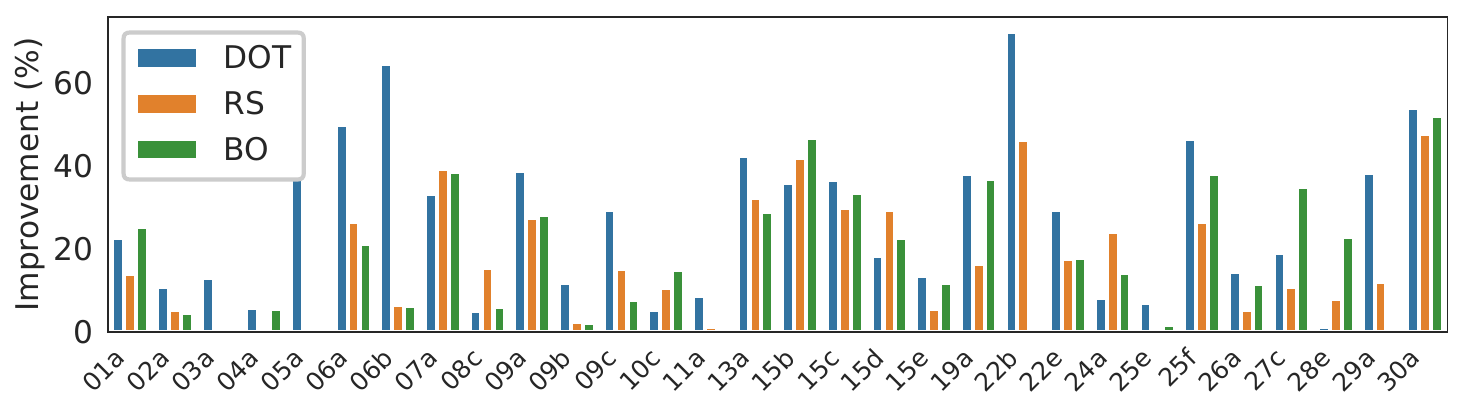}}
\vspace{-1.5em}
\caption{\rv{Improvements over \textbf{Default} by \textbf{RS}, \textbf{BO}, and \textbf{DOT} for individual queries with the ``warm cache'' scenario.}}
\label{fig:improve:warm-cache:singlequery}
\vspace{0.5em}
\end{figure}

\rtwo{Figure~\ref{fig:improve:warm-cache:singlequery} presents the details of the (percentage) improvements of individual queries with the ``warm cache'' scenario where the database resides in memory.}

\subsubsection{\rthree{Detailed Results with Larger Databases}}

\begin{figure}[t]
\centering
\subfigure[\textbf{TPC-H}, sf = 25]{ \label{fig:improve:scale:singlequery:tpch25}
    \includegraphics[width=1.0\columnwidth]{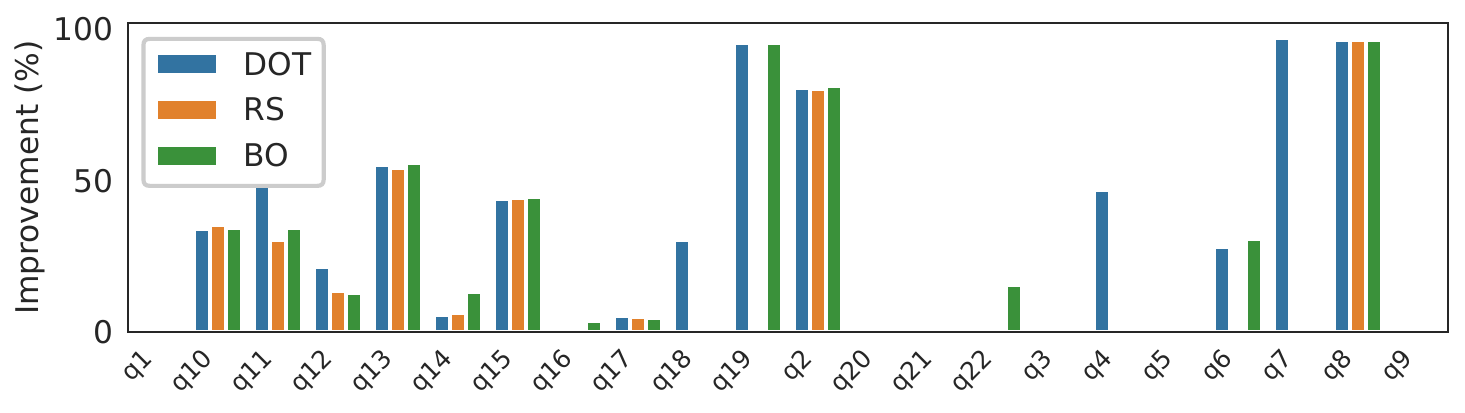}}
\subfigure[\textbf{TPC-DS}, sf = 25]{ \label{fig:improve:scale:singlequery:tpcds25}
    \includegraphics[width=1.0\columnwidth]{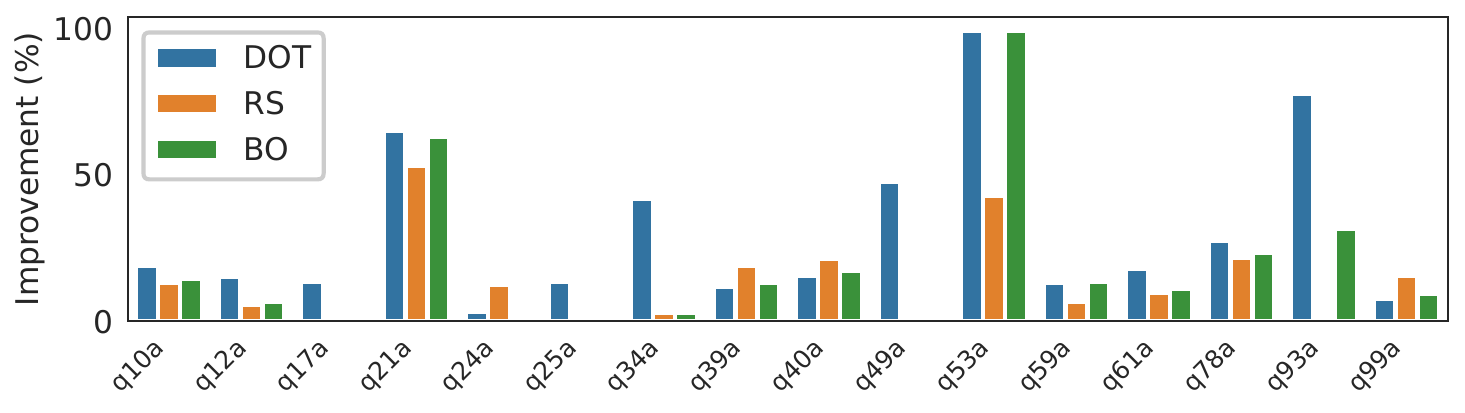}}
\subfigure[\textbf{TPC-H}, sf = 100]{ \label{fig:improve:scale:singlequery:tpch100}
    \includegraphics[width=1.0\columnwidth]{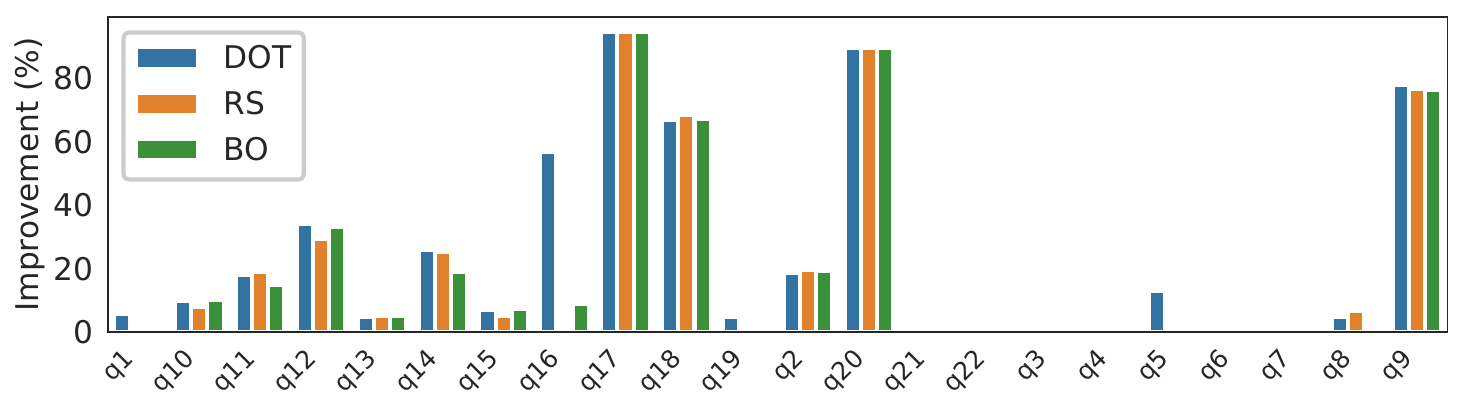}}
\vspace{-1.5em}
\caption{\rthree{Improvements over \textbf{Default} by \textbf{RS}, \textbf{BO}, and \textbf{DOT} for individual queries with larger scale databases.}}
\label{fig:improve:scale:singlequery}
\end{figure}

\rthree{Figure~\ref{fig:improve:scale:singlequery} presents the details of the (percentage) improvements of individual queries with larger scale databases.}

\subsection{\rthree{Detailed Results with Microsoft SQL Server}}

\begin{figure}[t]
\centering
\subfigure[\textbf{TPC-H}, sf = 10]{ \label{fig:improve:mssql:singlequery:tpch10}
    \includegraphics[width=1.0\columnwidth]{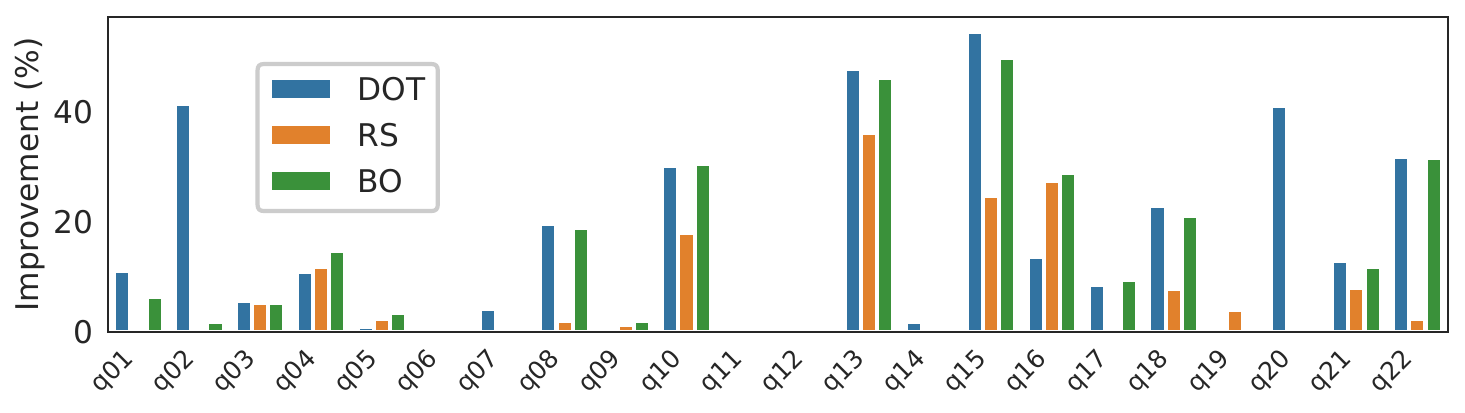}}
\subfigure[\textbf{TPC-DS}, sf = 10, Part I]{ \label{fig:improve:mssql:singlequery:tpcds10:part-1}
    \includegraphics[width=1.0\columnwidth]{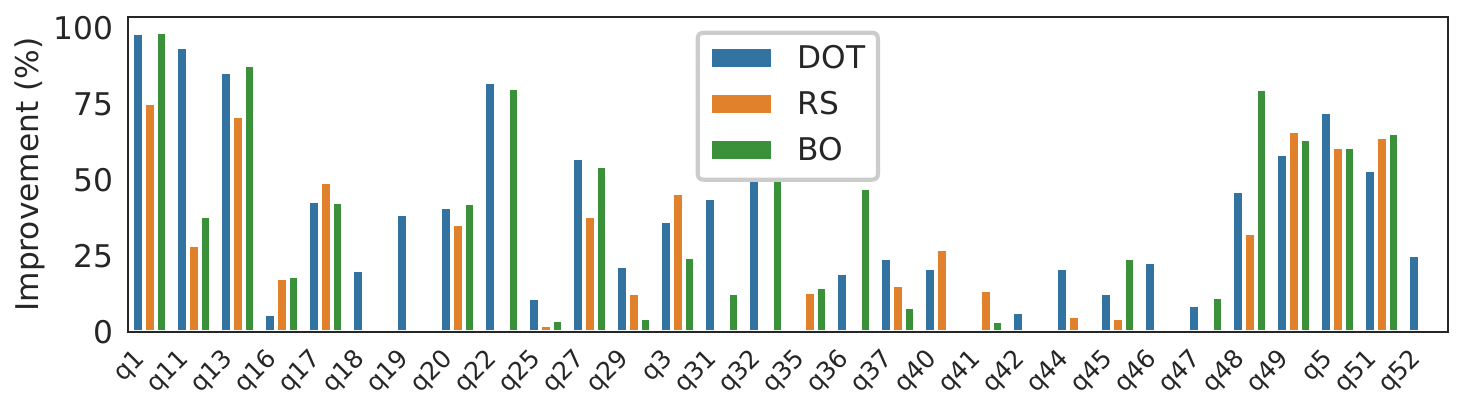}}
\subfigure[\textbf{TPC-DS}, sf = 10, Part II]{ \label{fig:improve:mssql:singlequery:tpcds10:part-2}
    \includegraphics[width=1.0\columnwidth]{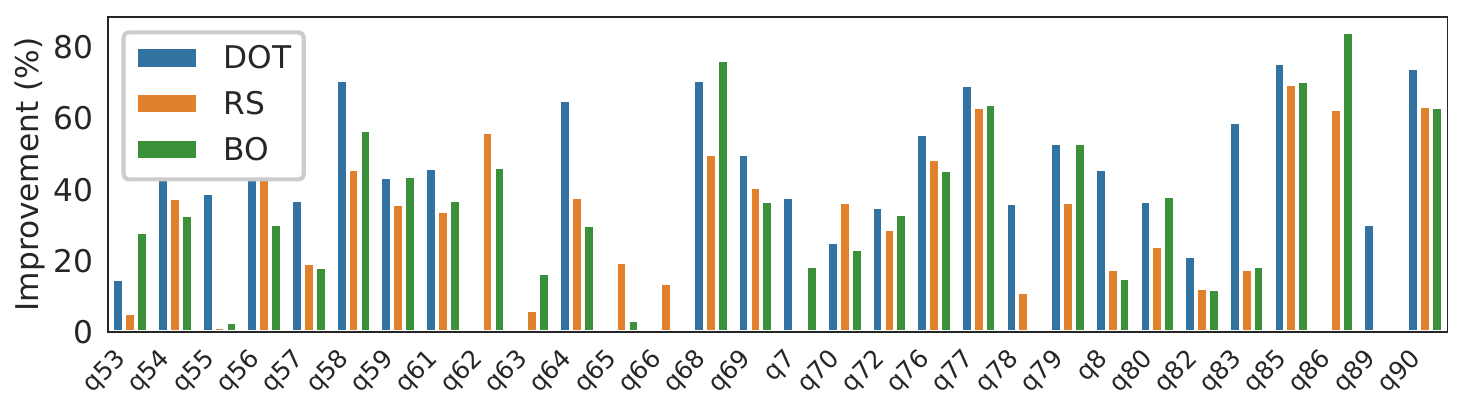}}
\subfigure[\textbf{TPC-DS}, sf = 10, Part III]{ \label{fig:improve:mssql:singlequery:tpcds10:part-3}
    \includegraphics[width=1.0\columnwidth]{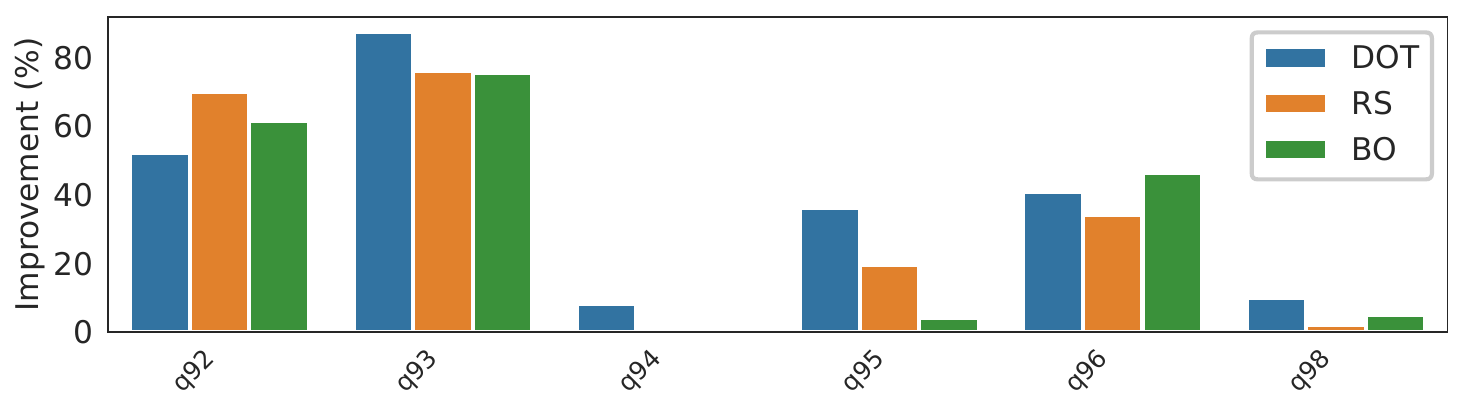}}
\subfigure[\textbf{JOB}, Part I]{ \label{fig:improve:mssql:singlequery:job:part-1}
    \includegraphics[width=1.0\columnwidth]{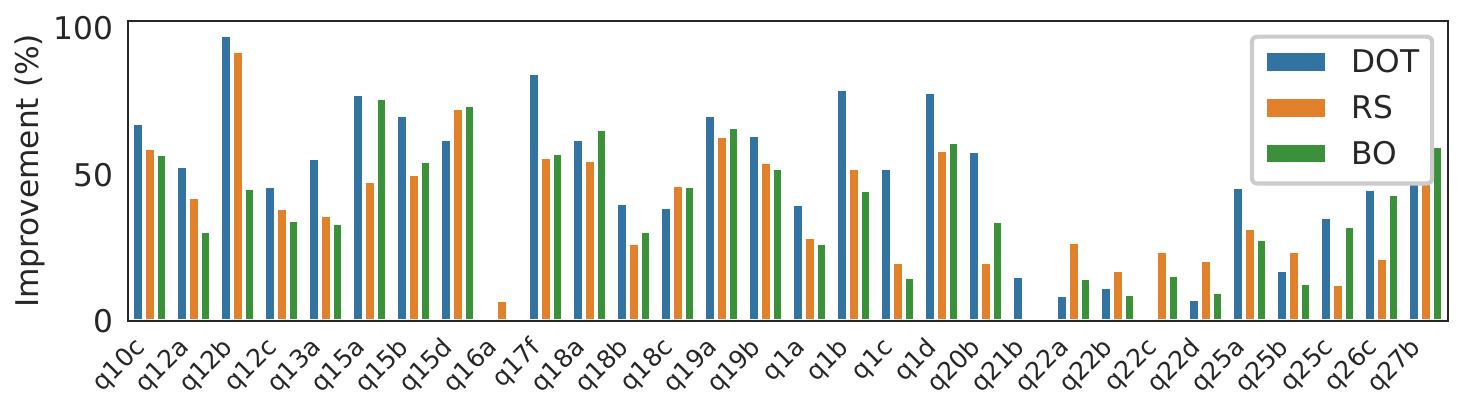}}
\subfigure[\textbf{JOB}, Part II]{ \label{fig:improve:mssql:singlequery:job:part-2}
    \includegraphics[width=1.0\columnwidth]{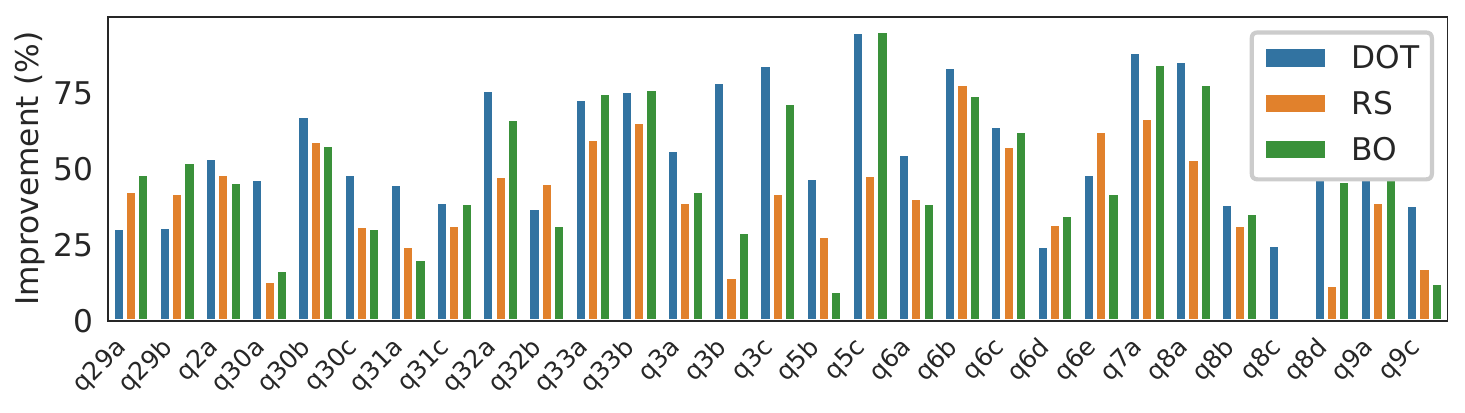}}
\vspace{-1.5em}
\caption{\rthree{Improvements over \textbf{Default} by \textbf{RS}, \textbf{BO}, and \textbf{DOT} for individual queries with Microsoft SQL Server.}}
\label{fig:improve:mssql:singlequery}
\vspace{0.5em}
\end{figure}

\rthree{Figure~\ref{fig:improve:mssql:singlequery} presents the details of the (percentage) improvements of individual queries when evaluated on top of Microsoft SQL Server.} 

\end{document}